\documentclass[reqno,12pt,letterpaper]{amsart}
\usepackage{amsmath,amssymb,amsthm,graphicx,mathrsfs,url,bbm,mathtools,cases,enumitem}
\usepackage[usenames,dvipsnames]{color}
\usepackage[colorlinks=true,linkcolor=Red,citecolor=Green]{hyperref}
\usepackage{amsxtra}
\usepackage{wasysym} 
\usepackage{graphicx}
\usepackage{subcaption}
\def\arXiv#1{\href{http://arxiv.org/abs/#1}{arXiv:#1}}

\setlength{\marginparwidth}{0.2in}

\usepackage{mathtools}

\def\?[#1]{\textbf{[#1]}\marginpar{\Large{\textbf{??}}}}

\def\smallsection#1{\smallskip\noindent\textbf{#1}.}
\let\epsilon=\varepsilon 

\setlength{\textheight}{9.00in} \setlength{\oddsidemargin}{0.00in}
\setlength{\evensidemargin}{0.00in} \setlength{\textwidth}{6.50in}
\setlength{\topmargin}{0.00in} \setlength{\headheight}{0.18in}
\setlength{\marginparwidth}{0.8in}
\setlength{\abovedisplayskip}{0.2in}
\setlength{\belowdisplayskip}{0.2in}
\setlength{\parskip}{0.05in}

\newcommand{\RR}{{\mathbb R}}

\newcommand{\CC}{{\mathbb C}}

\newcommand{\ZZ}{{\mathbb Z}}

\newcommand{\ach}{{\operatorname{ac}}}


\newcommand{\bfk}{{\mathbf k}}
\newcommand{\bfa}{{\mathbf a}}

\newcommand{\bfn}{{\mathbf n}}

\DeclareGraphicsRule{*}{mps}{*}{}

\newtheorem{theo}{Theorem}
\newtheorem{prop}{Proposition}[section]	
\newtheorem{defi}[prop]{Definition}

\newtheorem{assumption}{Assumption}
\newtheorem{lemm}[prop]{Lemma}
\newtheorem{corr}[prop]{Corollary}
\newtheorem{rem}{Remark}

\numberwithin{equation}{section}

\DeclareMathOperator{\diag}{diag}

\DeclareMathOperator{\Spec}{Spec}

\let\Im=\Imag

\let\Re=\Real
\DeclareMathOperator{\sgn}{sgn}

\DeclareMathOperator{\WF}{WF}

\DeclareMathOperator{\tr}{tr}

\def\indic{\operatorname{1\hskip-2.75pt\relax l}}
\usepackage{scalerel}

\newcommand\reallywidehat[1]{\arraycolsep=0pt\relax%
\begin{array}{c}
\stretchto{
  \scaleto{
    \scalerel*[\widthof{\ensuremath{#1}}]{\kern-.5pt\bigwedge\kern-.5pt}
    {\rule[-\textheight/2]{1ex}{\textheight}} 
  }{\textheight} %
}{0.5ex}\\           
#1\\                 
\rule{-1ex}{0ex}
\end{array}
}

\title[Spectral theory of TBG in magnetic fields]{Spectral theory of twisted bilayer graphene in a magnetic field}

\author{Simon Becker}
\address[Simon Becker]{ETH Zurich, Zurich, CH.}
\email{sion.becker@math.ethz.ch}

\author{Xiaowen Zhu}
\address[Xiaowen Zhu]{University of Washington, Seattle, USA.} 
\email{xiaowenz@uw.edu}


\begin{document}

\begin{abstract}
In this article we study the Bistritzer-MacDonald (BM) model with external magnetic field. We study the spectral properties of the Hamiltonian in an external magnetic field with a particular emphasis on the flat band of the chiral model at magic angles. Our analysis includes different types of interlayer tunneling potentials, the so-called \emph{chiral} and \emph{anti-chiral} limits. One novelty of our article is that we show that using a magnetic field one can discriminate between flat bands of different multiplicities, as they lead to different Chern numbers in the presence of magnetic fields, while for zero magnetic field their Chern numbers always coincide. 
\end{abstract}

\maketitle

\section{Introduction}
\label{s:intr}
When two graphene layers are stacked and twisted against each other, there exist specific angles coined the \emph{magic angles}, at which the composite material exhibits a form of superconductivity \cite{C18}. Twisted bilayer graphene is a promising platform to exhibit the integer and fractional quantum Hall effect (QHE) even without a magnetic field. While this has already been directly observed for the integer QHE \cite{Se20}, which is the so-called \emph{anomalous QHE}, an experimental verification of the anomalous FQHE is still missing. Motivated by this, the effect of small magnetic fields has been experimentally explored, in which case a version of the FQHE has been observed \cite{X21}. With our spectral analysis, following \cite{SS21}, we provide a mathematically rigorous foundation for this. Hamiltonians for the effective one particle band structure of twisted bilayer graphene have been derived in \cite{LPN07,BM11} where the authors observed that at specific magic angles, the Fermi velocity in the graphene sheets becomes zero. 

The magnetic BM model is, after a simple rescaling of the length scale (see our companion paper \cite{BKZ22} for an explanation), an effective one-particle continuum Hamiltonian 
\begin{equation}
    \label{eq:operator}
    H:= \begin{pmatrix} {H}^{B}_{D} &   {T}\\  {T}^* &   H^{B}_{D} \end{pmatrix},
\end{equation} 
where
\begin{equation}
    \label{eq: def_of_ab}
    H_D^B= \begin{pmatrix}
    0 & a^* \\ a & 0
    \end{pmatrix}\ \text{with~}\begin{cases}
    a = 2D_{\bar{z}} - A(z)\\
    a^* = 2D_z - \overline{A(z)} 
    \end{cases}, \  T(z) = \begin{pmatrix} \alpha_0 V(z) & \alpha_1 \overline{U_-}(z) \\  \alpha_1 U(z)& \alpha_0  V(z) \end{pmatrix}.
\end{equation}
Here, $H_D^{B}$ is the two-dimensional magnetic Dirac operator with magnetic vector potential $A$, effectively describing a single sheet of graphene in a magnetic field  close to zero energy, cf. \cite{W47}; $T$ represents the inter-layer tunnelling potential. The diagonal and off-diagonal terms in $T$, i.e. $\alpha_0 V_(z)$ and $\alpha_1 U(z)$, describe different types of inter-layer tunnellings, due to different stacking of atoms, (see Fig.\ref{fig: Moire}, \cite{BM11,BKZ22}). The limit of pure $\alpha_1$-coupling is called the \emph{chiral limit} and the limit of pure $\alpha_0$-coupling is called the \emph{anti-chiral limit}.
\begin{figure}[t!]
  \includegraphics[height=7cm]{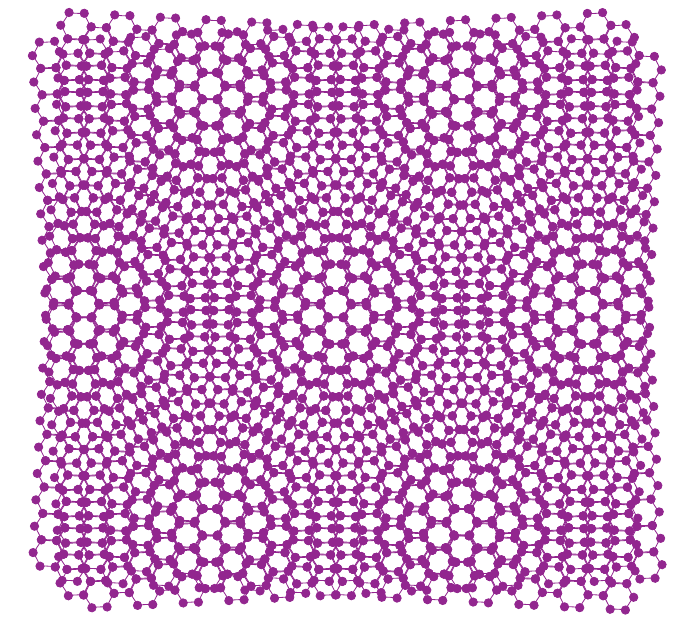}
  \includegraphics[height=7cm]{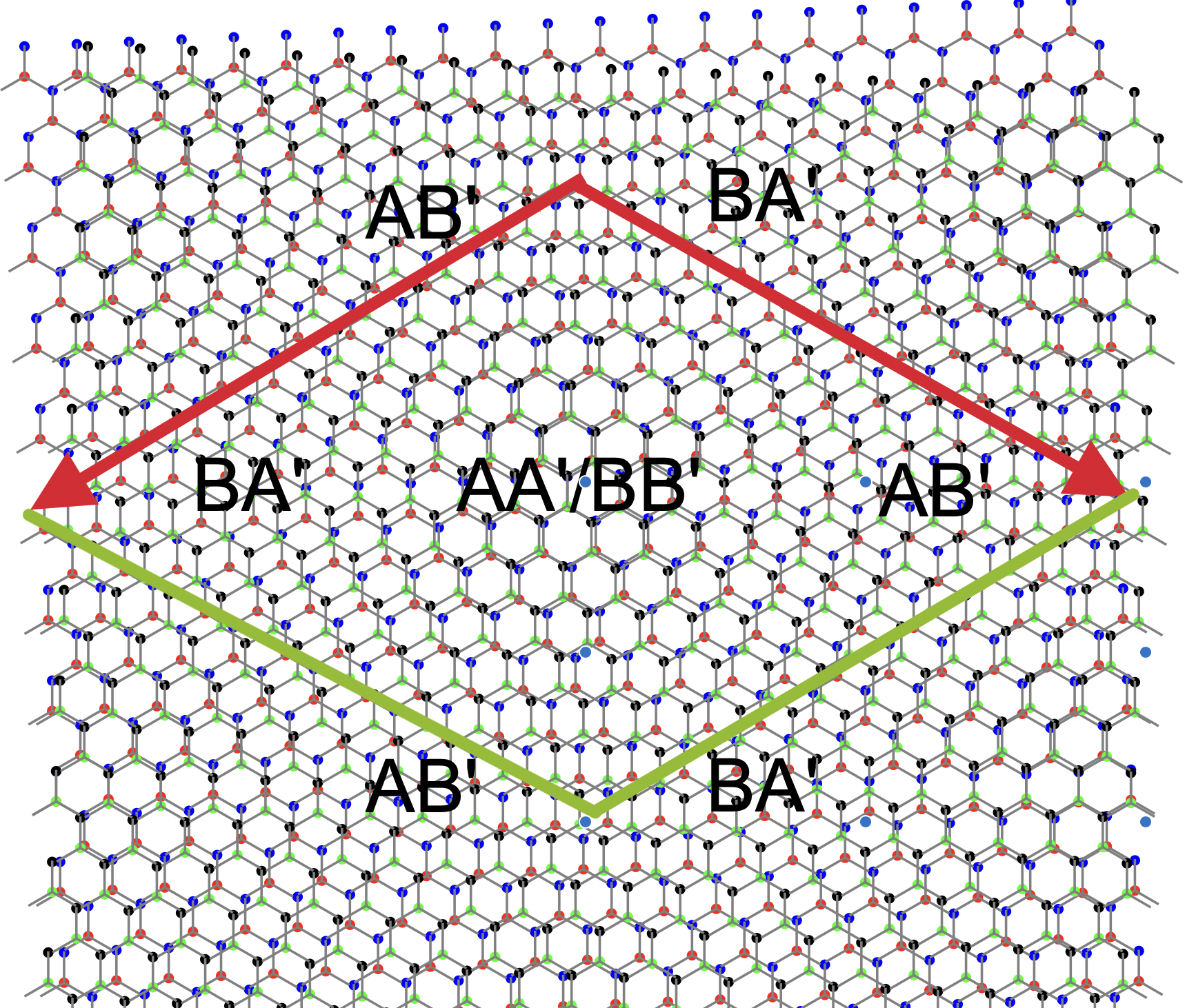}
\caption{Left: Visible moir\'e pattern at $\theta = 5^{\circ}$. Right: Single moir\'e hexagon, with (A={\color{red}red}, B={\color{blue}blue}) and (A'={\color{green}green}, B'={\color{black}black}) denote vertices of two sheets of graphene respectively.}
        \label{fig: Moire}
\end{figure}

In the BM model, the bands near zero energy appear almost flat, while it has been discovered in \cite{BEWZ20a,BEWZ20b,N21,NL22,TKV19} that the chiral limiting model exhibits perfectly flat bands at these so-called magic angles \cite{BEWZ20a,TKV19} whereas the anti-chiral model does not exhibit flat bands \cite{BEWZ20b}. In this article, we show that this dichotomy persists when a constant perpendicular magnetic field is applied \S\ref{sec:Proofsection}.

In this paper, we first perform a spectral and symmetry analysis of our model for various magnetic perturbations. This includes the existence and absence of perfectly flat bands at magic angles for different interlayer potentials and magnetic fields, see also \cite{BS99} for related results. This is shown in Section \ref{sec:Proofsection} in Theorem \ref{theo:magic_angles}.

In this row of mathematically rigorous results, we also want to mention the computer-assisted proof of the existence of a real magic angle by Luskin and Watson \cite{LW21} and other derivations of models for TBG \cite{CGG,Wa22,CM23}. 




Our article is then structured as follows:
\begin{itemize}
\item In Section \ref{sec:introduce_the_model}, we introduce the Bistritzer-MacDonald Hamiltonian for TBG with an additional external magnetic field.
\item In Section \ref{sec:floquet}, we introduce the magnetic Bloch-Floquet theory for the magnetic Bistritzer-MacDonald Hamiltonian.
\item In Section \ref{sec:chiral_lim}, we specialize to the chiral limit and study the existence and Chern number of flat bands at zero energy in constant magnetic fields. In particular, Theorem \ref{theo:away_mag_angl} focuses on the non-magic-angle case; Theorem \ref{theo:simple} and \ref{theo:double} focus on the simple and degenerate magic-angle case, respectively. 
\item In Section \ref{sec:Proofsection}, we prove that for the magnetic BM model,
\begin{itemize}
    \item periodic magnetic fields do not affect the presence of flat bands in Theorem \ref{theo:magic_angles}.
    \item flat bands persist under rational (with respect to the lattice) magnetic flux in Theorem  \ref{theo: eigenvalues}. 
\end{itemize}
\item In Section \ref{sec:loc_bands}, we prove that a plethora of almost flat bands is located exponentially close to zero energy in Theorem \ref{theo:exp_squeez_bands}. Finally, we conclude with some observations that show the presence of Landau-level type spectrum in the chiral model of twisted bilayer graphene in the absence of magnetic fields. 

\end{itemize}

\smallsection{Acknowledgements}
This research was partially supported by Simons 681675, NSF DMS-2052899 and DMS-2155211, DMS 2054589, and the Pacific Institute for the Mathematical Sciences. The contents of this work are solely the responsibility of the authors and do not necessarily represent the official views of PIMS. 

\section{Introduction of magnetic BM model} 
\label{sec:introduce_the_model}
We start by introducing relevant notation.

 \smallsection{Notation}
We commonly identify $\RR^2$ with $\CC$ by writing $x = (x_1,x_2) $ also as $ z=x_1+ix_2$. We write $ f = \mathcal O_\alpha ( g )_H $ if there is a constant $C_{\alpha}>0$ such that $ \| f \|_H \leq C_\alpha g $. We add the subscript $\CC_z$ or $\CC_\bfk$ to refer to complex plane of variable $z$ or $\bfk$ when it is ambiguous in the context. 

Given two vector space $A$ and $B$, we use $A\otimes B$ to denote the tensor product; $A\times B = \{(a,b):a\in A, b\in B\}$ to denote the direct product. Given a real (or complex) vector space $V$ and two subspaces $A, B \subset V$, we use $A\oplus B = \{c_1a + c_2b: c_1, c_2\in \RR \text{~(or~}\CC), a \in A,b\in B \}$ to denote the inner direct sum. Denote $0_{\CC^2} = (0,0)^T\in \CC^2$.

\subsection{Moir\'e lattices and TBG}
\label{subsec: BM_model}

When two honeycomb lattices are stacked on top of each other and twisted by an angle $\theta$, a moir\'e honeycomb pattern becomes visible, cf. Fig. \ref{fig: Moire}. This new honeycomb structure has length scale $\lambda_{\theta}.$\footnote{In fact, $\lambda_\theta = \frac{C\sqrt{3}}{2\sin (|\theta|/2)}$ when $|\theta|<\pi/6$ by \cite{RK93,LPN07}.} 


To simplify the discussion, we start by introducing a unit-size honeycomb lattices, i.e. the one with side length $\frac{4\pi}{\sqrt{3}}$. 

Let $\omega = \exp(\frac{2\pi i}{3})$,  $\zeta_1 = \frac{4\pi i \omega}{3}$ and $\zeta_2 = \frac{4\pi i \omega^2}{3}$. A \textit{unit-size honeycomb lattice} is then invariant under translations by the triangular lattice $\Gamma = \zeta_1\ZZ \oplus \zeta_2\ZZ$. The unit cell, its dual lattice, and the unit cell of the dual lattice (Brillouin zone) are denoted by $E = \CC/\Gamma$, $\Gamma^* = \eta_1\ZZ \oplus \eta_2\ZZ$, and $E^* = \CC/\Gamma^*$, respectively, where $\eta_1 = \sqrt{3}\omega^2$ and $\eta_2 = -
\sqrt{3}\omega$. Since the moir\'e honeycomb lattice changes its length scale depending on the twisting angle, we also define a \textit{rescaled honeycomb lattice} with scaling parameter $\lambda \in \mathbb Z^2$ by $\Gamma_\lambda = \lambda_1 \zeta_1\ZZ \oplus \lambda_2\zeta_2\ZZ$, $E_\lambda = \CC/\Gamma_\lambda$, $\Gamma_\lambda^* = \lambda_1^{-1}\eta_1\ZZ \oplus \lambda_2^{-1}\eta_2\ZZ$ and $E_\lambda^* = \CC/\Gamma_\lambda^*$. Finally, we denote by $T_{1,\lambda},T_{2,\lambda} \in \mathcal L(L^2(\mathbb C))$ the standard translation operators $(T_{j,\lambda}u)(z):=u(z+\lambda_j\zeta_j)$ associated with the lattice $\Gamma_{\lambda}$. In particular, when $\lambda_j = 1$, we denote $(T_ju)(z):= u(z + \zeta_j)$. 

\subsection{Chiral and anti-chiral limits} 
The two basic examples of tunneling potentials, $U$ and $V$, are defined as 
\[
        U(z) = \sum_{\ell = 0}^2 \omega^{\ell} \exp\left( \frac{z \bar{\omega}^{\ell} - \bar{z} \omega^{\ell}}{2}\right),\quad
        V(z) = \sum_{\ell = 0}^2 \exp\left( \frac{z\bar{\omega}^{\ell} - \bar{z}\omega^{\ell}}{2}\right).
\]
For $\bfa = n_1 \zeta_1 + n_2 \zeta_2$, the potentials are defined to satisfy the following symmetries
\begin{equation}
  \label{eq:symmU}
    \begin{split}
      &V ( z + \bfa ) = \bar \omega^{n_1 + n_2} V ( z ) ,\ \ V ( \omega z ) = V ( z ) , \ \  V(\bar{z}) = \overline{V(z)} = V(-z), \\
      &U ( z + \mathbf a ) = \bar \omega^{n_1 +n_2} U ( z ) ,  \ \ U ( \omega z ) = \omega U ( z ) , \ \ U ( \bar z ) = \overline{ U ( z )}, \ \ U_-(z):=U(-z).
    \end{split}
\end{equation}

\begin{figure}[t!]
    \begin{subfigure}[b]{0.32\textwidth}
        \centering
        \includegraphics[width=7cm, height=1.8in]{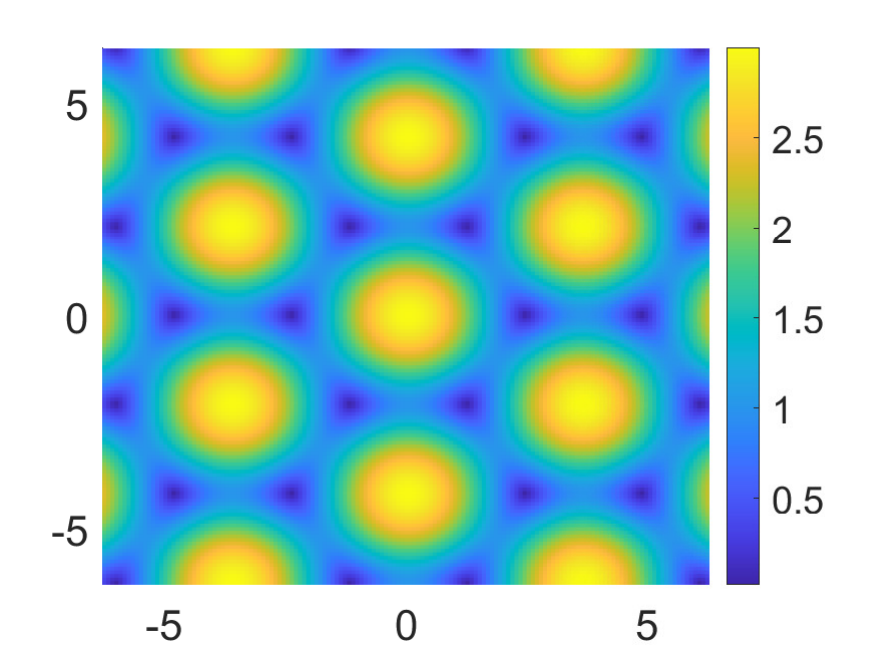}
        \caption{tunneling potential $\vert V \vert^2$ for $\mathrm{AA^{\prime}}/\mathrm{BB^{\prime}}$-coupling. }
       \label{fig:hcombAA}
    \end{subfigure}
        \begin{subfigure}[b]{0.32\textwidth}
        \centering
        \includegraphics[width=6.5cm,height=1.8in]{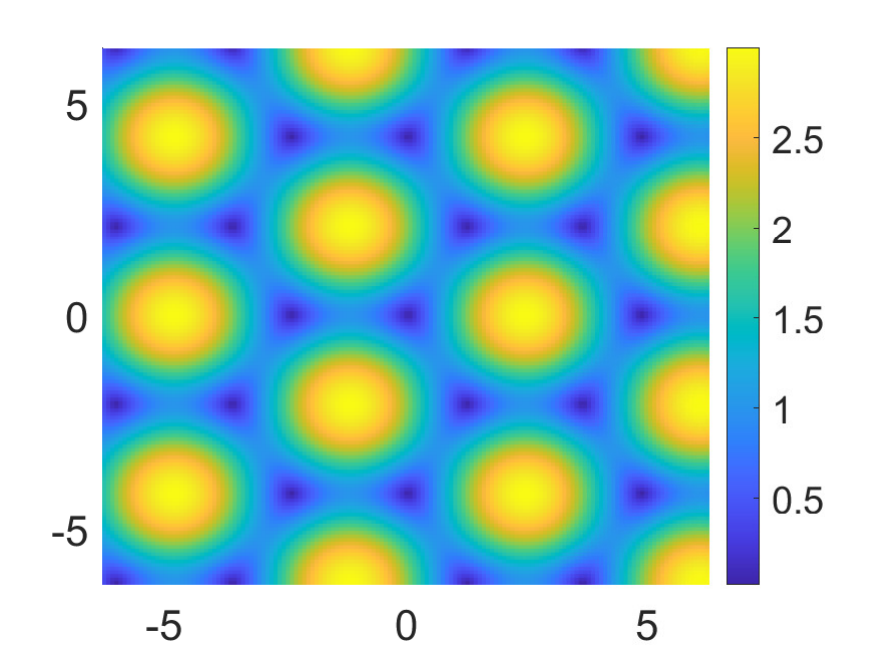}
        \caption{tunneling potential $\vert U \vert^2$ for $\mathrm{AB'}$-coupling.}
        \label{fig:fcellBA}
    \end{subfigure}
     \begin{subfigure}[b]{0.32\textwidth}
        \centering
        \includegraphics[width=6 cm,height=1.8in]{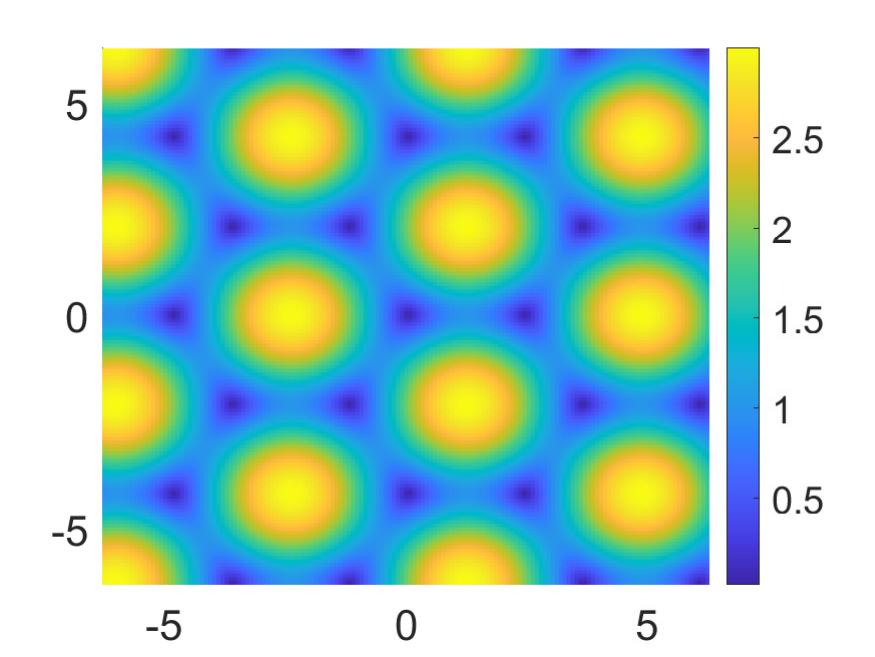}
        \caption{tunneling potential $\vert U_- \vert^2$ for $\mathrm{BA'}$-coupling.}
        \label{fig:fcellAB}
    \end{subfigure}%
        \caption{The tunneling potentials for different coupling types on unit-size honeycomb lattice.}\label{fig:coupling}
\end{figure}

\subsection{Spectral properties of chiral and anti-chiral model}
We start by introducing two different limiting cases of the BM Hamiltonian: 

The \emph{chiral model} is the Hamiltonian \eqref{eq:operator} when $\alpha_0=0$. When conjugated by the unitary $\mathscr U=\operatorname{diag}(1,\sigma_1,1)$, $H_{\operatorname{c}}(\alpha_1, B):=\mathscr U  H\mathscr U$ has off-diagonal form
\begin{equation}
\begin{split}
\label{eq:DcB}
H_{\operatorname{c}}(\alpha_1,B)&=\begin{pmatrix}0 & (D_c(\alpha_1,B))^* \\ D_c(\alpha_1,B) & 0  \end{pmatrix} \text{ with } 
 D_c(\alpha_1,B)= \begin{pmatrix}  a  & \alpha_1 U(z) \\ \alpha_1 U_-(z) &  a  \end{pmatrix},
\end{split}
\end{equation}
where $a=2 D_{\bar z}-A(z).$

The other limiting model is the \emph{anti-chiral model} which is obtained by setting $\alpha_1=0$. It can also be transformed into an off-diagonal form when conjugated by a unitary $\mathcal V$, giving rise to a Hamiltonian
\begin{equation}
\begin{split}
\label{eq:achiral}
&H_{\ach}(\alpha_0,B):=\mathcal V H\mathcal V=\begin{pmatrix} 0& (D^{\theta}_{\ach}(\alpha_0,B))^* \\ D^{\theta}_{\ach}(\alpha_0,B)&0 \end{pmatrix} \text{ with } \\
&D_{\ach}^{\theta}(\alpha_0,B) = \begin{pmatrix} \alpha_0 V(z) & e^{i \theta/2}a   \\  e^{i \theta/2}a^* & \alpha_0 \overline{V(z)} \end{pmatrix}\text{ where }\\
&\mathcal V = \begin{pmatrix} \mathcal V_1 & \mathcal V_2 \\\mathcal V_2 & \mathcal V_1\end{pmatrix} \text{ for }\mathcal V_1 = \operatorname{diag}( i\lambda , 0 ), \ \  \mathcal V_2 = \operatorname{diag}(0, -\bar{\lambda}), \ \ \lambda=e^{i \frac{\pi}{4}}. 
\end{split}
\end{equation}
Both the chiral and anti-chiral model can be cast in an off-diagonal matrix form. This implies that for both the chiral and anti-chiral model with magnetic field, the spectrum is symmetric with respect to zero. To see this directly, let $U:=(\sigma_3 \otimes \operatorname{id}_{\CC^2})$,  then by directly conjugating the Hamiltonians, we find $UH_cU = - H_c $ and $U H_{\ach}^{\theta}U =-H_{\ach}^{\theta}.$

\subsection{Characterization of magic angles}\label{subsec: charofmagicangles} It has been shown \cite{BEWZ20a} that in the absence of magnetic fields, i.e. $A=0$, there exists a discrete set $\mathcal A \subset \mathbb C$, the set of magic angles, such that for $\alpha_1 \in \mathcal A$, the chiral model $H_c(\alpha_1, B = 0)$ \eqref{eq:DcB} exhibits flat bands at energy zero. 

More explicitly, this set $\mathcal A$ has been characterized by the eigenvalues of a compact operator 
\[\begin{split} &\alpha_1 \in \mathcal A \Leftrightarrow \alpha_1^{-1}\in \Spec(T_{\mathbf k})\text{ for some } \mathbf k \notin 3\Gamma^* \\
&T_{\mathbf k} = (2D_{\bar z}-\mathbf k)^{-1} \begin{pmatrix} 0 & U(z)\\U(-z) & 0 \end{pmatrix}: L^2_0(\CC/\Gamma) \to L^2_0(\CC/\Gamma).
\end{split}\]
The space $L^2_0(\CC/\Gamma)$ is defined as 
\[ L^2_0(\CC/\Gamma):=\{u \in L^2(\CC/\Gamma); L_{a} u = u \text{ with }a \in \tfrac{\Gamma}{3} \}\]
where $L_a u(z) = \operatorname{diag}(\omega^{a_1+a_2},1) u(z+a)$ with $a = \frac{4}{3}\pi i (\omega a_1+\omega^2 a_2)$ for $a_j \in \ZZ.$


\section{Magnetic Bloch-Floquet theory}
\label{sec:floquet}
In this section, we introduce the magnetic Bloch-Floquet theory following \cite{Sj89} in preparation of studying the flat bands.

Bloch-Floquet theory yields a decomposition of an operator $S$ with translational symmetries with respect to a lattice, say $\Gamma$, into a family of simpler operators $\{S_\bfk\}_{\bfk\in\CC/\Gamma^*}$. To obtain such a decomposition, we first need to find an Abelian group of translations $\{T_{\bfn}\}_{\bfn\in\ZZ^2}$ on $\Gamma$ that commute with $S$. Using these translations, one can construct a unitary operator, called the Gelfand transform $U$, such that $USU^{-1} = \int^\oplus_{\CC/\Gamma^*} S_\bfk d\bfk$.

\begin{assumption}[Magnetic potential]
\label{assump: magnetic potential}
We assume that the magnetic potential is of the form $A = A_{\operatorname{con}} + A_{\operatorname{per}}$, where
\begin{enumerate}
    \item $A_{\operatorname{con}}(z)=\frac{Bi}{2}z$ generates a constant magnetic field $B$,
    \item $A_{\operatorname{per}}$ is real-analytic and periodic w.r.t. $\Gamma_\lambda = \lambda_1\zeta_1\ZZ\oplus \lambda_2\zeta_2\ZZ$, for some $\lambda = (\lambda_1,\lambda_2)\in \ZZ^2$,
\end{enumerate}
such that $\Phi_{\operatorname{mag}}:=B|\CC/\Gamma_\lambda| \in 2\pi \ZZ$. In particular, we denote 
\begin{equation}
    \label{eq: assumption}
v_j := \lambda_j\zeta_j, \  \Gamma_{\operatorname{mag}} := \Gamma_\lambda, \  p:=\frac{\Phi_{\operatorname{mag}}}{2\pi}, \ q:=\lambda_1\lambda_2 , \text{~and~} \Phi:= B|\CC/\Gamma| = 2\pi \tfrac{p}{q}\in 2\pi \mathbb{Q},
\end{equation}
for some $p,q\in \ZZ$.
The integer $p$ is called the number of \emph{Dirac flux quanta}.
\end{assumption}

We start by defining magnetic translations on $\Gamma_{\operatorname{mag}} = v_1\ZZ\oplus v_2\ZZ$. To this end, recall that the standard translation by $v_j$ have been denoted by $T_{j,\lambda}$ in Sec \ref{subsec: BM_model}. Then the basic magnetic translation by $v_j$ can be defined as
\begin{equation}
    \label{eq: def_of_magnetic_translation}
    T_{j,\text{mag}} := e^{i\varphi_j} T_{j,\lambda} \text{~with~} \varphi_j(z) = -\Re(\overline{A_{\operatorname{con}}(v_j)} z), \quad j = 1,2,
\end{equation}
which satisfy $2D_z(i\varphi_j) = -\overline{A_{\operatorname{con}}(v_j)}$ and  $2D_{\bar z}(i\varphi_j)=- A_{\operatorname{con}}(v_j)$ such that 
\begin{equation}
\label{eq: commute1}
[a^*,T_{j,\text{mag}}] = [a,T_{j,\text{mag}}] = 0, \quad \text{where~} a = 2D_{\bar z} - A(z).
\end{equation}
Furthermore, for our model, $A_{\operatorname{con}}(z) = \frac{Biz}{2}$ implies $\varphi_j(z)  = \frac{Bi}{4}(\overline{v_j}z - v_j\bar{z})$ is linear in $z$ and $\bar z$. Thus we can see
\begin{equation} \begin{split}
\label{eq: magtrans_commute}
T_{1,{\operatorname{mag}}}T_{2,{\operatorname{mag}}} &=e^{i\varphi_1+i\varphi_2(\bullet+ v_1)}T_{1,\lambda}T_{2,\lambda} = e^{i(\varphi_2( v_1) -\varphi_1( v_2))}e^{i\varphi_2 + i\varphi_1(\bullet +  v_2)}T_{2,\lambda}T_{1,\lambda}\\
&= e^{i (\varphi_2( v_1)-\varphi_1( v_2))}T_{2,{\operatorname{mag}}}T_{1,{\operatorname{mag}}}\\
& = e^{-\frac{B}{2}(\overline{v_1}v_2 - \overline{v_2}v_1)}T_{2,\text{mag}}T_{1,\text{mag}}\\
&=e^{-iB|\CC/\Gamma_{\operatorname{mag}}|}T_{2,{\operatorname{mag}}}T_{1,{\operatorname{mag}}} = e^{-2\pi p i}T_{2,{\operatorname{mag}}}T_{1,{\operatorname{mag}}}=T_{2,{\operatorname{mag}}}T_{1,{\operatorname{mag}}}.
\end{split}
\end{equation}
where we used $\overline{v_1}v_2 - \overline{v_2}v_1 = 2i|\CC/\Gamma_{\operatorname{mag}}|$.

The commutativity of translations above allows us then to define general magnetic translation on $\Gamma_{\operatorname{mag}}$:
\begin{equation}
    \label{eq: def_of_Tn}
    T_{\textbf{n},\text{mag}} := T_{1,\text{mag}}^{n_1}T_{2,\text{mag}}^{n_2}\text{~for~}\textbf{n} = (n_1,n_2)\in\ZZ^2.
\end{equation}
By \eqref{eq: commute1} and \eqref{eq: magtrans_commute}, we have for $\textbf{n},\textbf{n}'\in \ZZ^2$, 
\begin{equation}
    \label{eq:commutator}
    \begin{split}
    &T_{\textbf{n},\text{mag}}T_{\textbf{n}',\text{mag}} = T_{\textbf{n} + \textbf{n}',\text{mag}} = T_{\textbf{n}',\text{mag}}T_{\textbf{n},\text{mag}}, \text{~and~}\\
    &[T_{\bfn, \text{mag}}, a] = [T_{\bfn, \text{mag}}, a^*] = 0.
    \end{split}
\end{equation}
Thus, $\{T_{\textbf{n},\text{mag}}\}_{\textbf n\in\ZZ^2}$ forms an abelian group of translations on $\Gamma_{\operatorname{mag}}$ that commutes with $a$ and $a^*$. 

Furthermore, $T_{\bfn, \text{mag}}$ gives rises to Abelian groups that commute with many more complicated operators. We summarize them all below: 
\begin{lemm}
\label{lemm: Abelian_group}
Let $A$ satisfy Assumption \ref{assump: magnetic potential}. Let $T_{\bfn,\operatorname{mag}}$ be defined as in \eqref{eq: def_of_Tn}. Then the following families of translations form Abelian groups and
\begin{enumerate}
    \item $\{T_{\bfn,\operatorname{mag}}:\bfn\in\ZZ^2\}$ commutes with $a,a^*$ in \eqref{eq:operator},
    \item $\{\diag(1, 1)\otimes T_{\bfn,\operatorname{mag}}:\bfn\in\ZZ^2\}$ commutes with $H_D^B$ in \eqref{eq:operator},
    \item $\{\diag(\omega^{\bfn\cdot\lambda},\omega^{\bfn\cdot\lambda},1,1) \otimes T_{\bfn,\operatorname{mag}}:\bfn\in\ZZ^2\}$ commutes with $H$ in \eqref{eq:operator},
    \item $\{\diag(\omega^{\bfn\cdot\lambda},1)\otimes T_{\bfn,\operatorname{mag}}:\bfn\in\ZZ^2\}$ commutes with $ D_{\text{c}}(\alpha_1, B)$ in \eqref{eq:DcB},
    \item $\{\diag(\omega^{\bfn\cdot\lambda},1,\omega^{\bfn\cdot\lambda},1)\otimes T_{\bfn,\operatorname{mag}}:\bfn\in\ZZ^2\}$ commutes with $H_{\text{c}}(\alpha_1, B)$ in \eqref{eq:DcB}, 
    \item $\{\diag(-\omega^{\bfn\cdot\lambda},-1,1,-\omega^{\bfn\cdot\lambda})\otimes T_{\bfn,\operatorname{mag}}:\bfn\in\ZZ^2\}$ commutes with $H_{\text{ac}}(\alpha_0, B)$ in \eqref{eq:DcB},
\end{enumerate}
where $\bfn\cdot\lambda := n_1\lambda_1 + n_2 \lambda_2$, $\omega = \exp(\tfrac{2\pi i}{3})$. 
\end{lemm}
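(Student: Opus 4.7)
The proof is essentially a bookkeeping exercise, so my plan is to reduce it to a single elementary computation of phase factors and then apply the result to each item. The abelian group property is automatic in every case: the $T_{\bfn,\operatorname{mag}}$ commute among themselves by \eqref{eq:commutator}, and the diagonal phase matrices in each item are scalar in the fiber index and multiplicative in $\bfn$, so they commute trivially with one another; taken together this gives an abelian family. Items (1)--(2) are then immediate consequences of \eqref{eq:commutator} and the block structure of $H_D^B$.

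For the remaining items, the essential step is to determine how $T_{\bfn,\operatorname{mag}}$ commutes with the multiplication operators $V, U, U_-, \bar V, \bar U, \overline{U_-}$. Since $T_{j,\operatorname{mag}}$ differs from the plain translation $T_{j,\lambda}$ only by multiplication by a phase, conjugating any multiplication operator $f(z)$ by $T_{\bfn,\operatorname{mag}}$ simply shifts the argument of $f$ by $v_\bfn = n_1\lambda_1\zeta_1 + n_2\lambda_2\zeta_2$. Substituting this into the quasiperiodicity rules \eqref{eq:symmU} yields
\[
T_{\bfn,\operatorname{mag}}\, V = \bar\omega^{\bfn\cdot\lambda}\, V\, T_{\bfn,\operatorname{mag}},\qquad T_{\bfn,\operatorname{mag}}\, U = \bar\omega^{\bfn\cdot\lambda}\, U\, T_{\bfn,\operatorname{mag}},
\]
while $U_-(z)=U(-z)$ produces the opposite phase $T_{\bfn,\operatorname{mag}}\, U_- = \omega^{\bfn\cdot\lambda}\, U_-\, T_{\bfn,\operatorname{mag}}$. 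Complex-conjugate symbols acquire the conjugate phase.

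With these rules in hand, each of items (3)--(6) reduces to the following block-matrix check: for every nonzero entry $H_{ij}$ of the relevant Hamiltonian (whether built from $a$, $a^*$, or from a tunnelling operator), the prescribed diagonal $D_\bfn$ must satisfy $(D_\bfn)_{ii}\,(D_\bfn)_{jj}^{-1} = \chi_{ij}$, where $\chi_{ij}\in\{1,\omega^{\bfn\cdot\lambda},\bar\omega^{\bfn\cdot\lambda}\}$ is the phase picked up when $T_{\bfn,\operatorname{mag}}$ is pushed past $H_{ij}$. I would run through the off-diagonal entries of each Hamiltonian and verify that the prescribed $D_\bfn$ makes all constraints consistent simultaneously, exploiting the block off-diagonal form of $H_{\operatorname{c}}$ and $H_{\operatorname{ac}}$ to cut the work in half. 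I expect the only delicate point to lie in items (5)--(6), where the conjugations by $\mathscr U$ and $\mathcal V$ that produce $H_{\operatorname{c}}$ and $H_{\operatorname{ac}}$ permute the fiber indices, so the commuting twist is not the naive pull-back of the twist from item (3) but a re-indexed and re-phased version; tracking this permutation correctly is the heart of the exercise.
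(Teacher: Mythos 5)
Your proposal is correct and follows essentially the same route as the paper: you record how $T_{\bfn,\operatorname{mag}}$ intertwines with the tunnelling potentials (your version, with $U_-,\overline{U},\overline{V}$ picking up the conjugate phase $\bar\omega^{\bfn\cdot\lambda}$ rather than $\omega^{\bfn\cdot\lambda}$, is the correct one — the paper's displayed relation assigns the same phase to both families, which appears to be a typo) and then verify commutation entry by entry against the prescribed diagonal twists, the Abelian group property being immediate from \eqref{eq:commutator}. The paper does nothing more, concluding with ``the claims follow by definition and direct computation,'' so your entrywise phase-matching criterion is just a slightly more explicit organization of the same check.
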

\begin{proof}
$(1)$ has already been established in \eqref{eq: commute1}. For the remaining parts, by \eqref{eq:symmU} and \eqref{eq: def_of_magnetic_translation}, we have
\begin{equation}
\begin{cases}
    XT_{\bfn, \text{mag}} = \omega^{\bfn\cdot\lambda}T_{\bfn, \text{mag}}X, &\text{~for~}X\in \{U(z),V(z),\overline{U_-}(z)\},\\
    YT_{\bfn, \text{mag}} = \omega^{\bfn\cdot\lambda}T_{\bfn, \text{mag}}Y,&\text{~for~}Y\in\{\overline{U}(z),\overline{V}(z), U_-(z)\}.
\end{cases}
\end{equation}
Then the claims follow by definition and direct computation.
\end{proof}
In particular, throughout the article, we will abuse the notation in the following way: When the operator is clearly specified, we will use the same notation $\mathscr L_\bfn$ to denote the translations above for the corresponding operators. For example, the $\mathscr L_\bfn$ for $H_c(\alpha_1, B)$ will be $\mathscr L_\bfn = \diag(\omega^{\bfn\cdot\lambda},1,\omega^{\bfn\cdot\lambda},1)\otimes T_{\bfn,\operatorname{mag}}$. 

Furthermore, we will build up the general Bloch-Floquet theory using this kind of (abused) general notation. The $\mathcal H_f$, $\mathcal H_\tau$, $\mathcal U_B$ below all depends on the operator but will be abused throughout the paper when the corresponding operator is clear.  
\begin{theo}[Magnetic Bloch-Floquet theory]
\label{theo: Floquet theory}
Assume $\{\mathscr L_\bfn\}_{\bfn\in\ZZ^2}$ form a family of translations, as specified in  Lemma \ref{lemm: Abelian_group}, commuting with some operator $S=S(z,\bar z,D_{z}, D_{\bar z})$ on $L^2(\CC;\CC^d)$. Then, we define the Hilbert space
\begin{equation}
\label{eq:Hf}
    \mathcal H_f:= \{ f \in L^2_{\operatorname{loc}}(\CC;\CC^d); \mathscr L_{\mathbf n} f(z)= f(z) \text{ for all }\textbf n\in \mathbb Z^2\}
\end{equation}
which naturally defines a Hilbert space with inner product $\langle f,g \rangle_{\mathcal H_f}:=\int_{\CC/\Gamma_{\operatorname{mag}}} \langle f(y), g(y)\rangle_{\CC^d} \ dy$. We then define the Hilbert space
\[\mathcal H_{\tau}:=\{ f \in L^2_{\operatorname{loc}}(\CC; \mathcal H_f); f(\mathbf k-\gamma^*,z) = e^{-i\Re(z\bar{\gamma}^*)} f(\mathbf k,z) \text{ for all }\gamma^* \in \Gamma^*_{\operatorname{mag}}\}.\]
The Gelfand transform $\mathcal U_B: L^2(\CC) \to \mathcal H_{\tau}$ is defined for $u \in C_c(\CC)$ by
\[ \mathcal U_B u(\mathbf k,z) := \sum_{\mathbf n \in \ZZ^2} e^{-i \Re((n_1v_1+n_2 v_2-z)\overline{ \mathbf k})} (\mathscr L_{\mathbf n} u)(z)\]
and extends to a unitary operator with inverse
\[ \mathcal U_B^{-1} \phi(z) = \int_{\CC/\Gamma^*_{\operatorname{mag}}} e^{-i\Re(z\overline{\mathbf k})} \phi(\mathbf k,z) \frac{d\mathbf k}{\vert \CC/\Gamma^*_{\operatorname{mag}}\vert}, \]
such that for fiber operators $S_{\mathbf k}(z,\bar z,D_{z},D_{\bar z}):=S(z,\bar z,D_{z}+\tfrac{\overline{\mathbf k}}{2},D_{\bar z}+\tfrac{\mathbf k}{2})$ on $\mathcal H_{f}$
\[ \mathcal U_B S \ \mathcal U_B^* = \int^{\oplus}_{\CC/\Gamma_{\operatorname{mag}}^*} S_{\mathbf k}(z,\bar z,D_{z},D_{\bar z}) \ d\mathbf k.\]
\end{theo}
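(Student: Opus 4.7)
The strategy is classical for Bloch--Floquet theory adapted to the magnetic translations of Lemma~\ref{lemm: Abelian_group}. The plan has five steps: (i) identify $\mathcal{H}_f$ and $\mathcal{H}_\tau$ with concrete $L^2$-spaces over fundamental domains of $\Gamma_{\operatorname{mag}}$ and $\Gamma^*_{\operatorname{mag}}$; (ii) verify that the sum defining $\mathcal U_B u$ is locally finite on $C_c(\CC)$ and lands in $\mathcal{H}_\tau$; (iii) prove isometry via orthogonality of characters on $\CC/\Gamma^*_{\operatorname{mag}}$; (iv) establish the inverse formula through a Poisson-summation argument and extend $\mathcal U_B$ to $L^2(\CC)$ by density; (v) deduce the direct-integral decomposition from $[S,\mathscr L_\bfn]=0$ and a conjugation identity for the Fourier kernel.

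For (i), each $\mathscr L_\bfn$ is unitary (a diagonal phase matrix composed with the magnetic translation $T_{\bfn,\operatorname{mag}}$), so restriction to a fundamental domain $F$ of $\Gamma_{\operatorname{mag}}$ gives $\mathcal H_f\cong L^2(F;\CC^d)$. The Brillouin-zone twist $\gamma^*\mapsto e^{-i\Re(z\bar{\gamma^*})}$ is additive in $\gamma^*$, hence satisfies the cocycle condition, giving $\mathcal H_\tau\cong L^2(F^*;\mathcal H_f)$ for a fundamental domain $F^*$ of $\Gamma^*_{\operatorname{mag}}$. For (ii), when $u\in C_c(\CC)$ only finitely many $\mathscr L_\bfn u$ are nonzero at each $z$, so the sum is well-defined. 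The invariance $\mathscr L_\bfm\mathcal U_B u(\bfk,\cdot)=\mathcal U_B u(\bfk,\cdot)$ and the twist $\mathcal U_B u(\bfk-\gamma^*,z)=e^{-i\Re(z\bar{\gamma^*})}\mathcal U_B u(\bfk,z)$ both follow by reindexing the lattice sum using the abelian relation $\mathscr L_\bfm\mathscr L_\bfn=\mathscr L_{\bfm+\bfn}$ of \eqref{eq:commutator} (itself a consequence of the flux quantization in Assumption~\ref{assump: magnetic potential}), together with the dual-lattice identity $\Re(\bfn\cdot v\,\bar{\gamma^*})\in 2\pi\ZZ$. For (iii), expand $\|\mathcal U_B u\|_{\mathcal H_\tau}^2$ as a double sum over $(\bfm,\bfn)$ and integrate in $\bfk$; orthogonality of the characters $\bfk\mapsto e^{-i\Re((\bfm-\bfn)\cdot v\,\bar{\bfk})}$ on $\CC/\Gamma^*_{\operatorname{mag}}$ collapses the double sum to $\sum_\bfn\|\mathscr L_\bfn u\|_{L^2(F)}^2=\|u\|_{L^2(\CC)}^2$, since the translates $\mathscr L_\bfn^{-1}F$ tile $\CC$ and each $\mathscr L_\bfn$ is a pointwise unitary.

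For (iv), applying $\mathcal U_B$ to the right-hand side of the claimed inverse formula and interchanging the sum with the integral leads, via Poisson summation, to a sum of $\delta$-masses supported on $\Gamma^*_{\operatorname{mag}}$; the twist condition on $\phi$ then collapses this to $\phi(\bfk,z)$ itself, and density of $C_c(\CC)$ in $L^2(\CC)$ extends $\mathcal U_B$ to a unitary on all of $L^2(\CC)$. For (v), the hypothesis $[S,\mathscr L_\bfn]=0$ yields
\begin{equation*}
\mathcal U_B(Su)(\bfk,z)=\sum_{\bfn\in\ZZ^2} e^{-i\Re((\bfn\cdot v-z)\bar{\bfk})}\,S\mathscr L_\bfn u(z),
\end{equation*}
and the differential-operator identity $D_z e^{i\Re(z\bar{\bfk})}=e^{i\Re(z\bar{\bfk})}(D_z+\tfrac{\bar{\bfk}}{2})$ (with the analogous one for $D_{\bar z}$) gives $e^{-i\Re(z\bar{\bfk})}\,S\,e^{i\Re(z\bar{\bfk})}=S_\bfk$, whence $\mathcal U_B(Su)=S_\bfk\mathcal U_B u$, which is the desired direct-integral decomposition. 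The main obstacle will be the careful tracking of the magnetic scalar phases $e^{i\varphi_\bfn(z)}$ implicit in $\mathscr L_\bfn$ and their interaction with the Gelfand kernel $e^{-i\Re((\bfn\cdot v-z)\bar{\bfk})}$ throughout steps (ii) and (iv); this is a bookkeeping task rather than a conceptual one, controlled by the commutation relations in \eqref{eq:commutator} and the flux quantization of Assumption~\ref{assump: magnetic potential}.
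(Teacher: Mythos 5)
The paper itself does not prove this theorem: it is stated following the citation of Sj\"ostrand [Sj89] and used as a black box, so there is no internal proof to compare against; your outline is the standard Gelfand-transform argument (isometry by orthogonality of the characters $\bfk\mapsto e^{-i\Re(\bfn\cdot v\,\bar\bfk)}$ on $\CC/\Gamma^*_{\operatorname{mag}}$, inversion by the same orthogonality/Poisson summation, fibering from $[S,\mathscr L_\bfn]=0$ together with the conjugation identity for $e^{\pm i\Re(z\bar\bfk)}$), and your key structural observation is correct: the magnetic phases enter only through unitarity of $\mathscr L_\bfn$ and the group law $\mathscr L_{\mathbf m}\mathscr L_\bfn=\mathscr L_{{\mathbf m}+\bfn}$, which is exactly where the flux quantization of Assumption \ref{assump: magnetic potential} is used.

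The one place where your write-up is not yet a proof is precisely the ``bookkeeping'' you defer in steps (ii) and (v). If you carry out the reindexing with the definitions exactly as printed --- translations $(T_{j,\lambda}u)(z)=u(z+v_j)$, hence $\mathscr L_{\mathbf m}(g\,w)(z)=g(z+{\mathbf m}\cdot v)\,(\mathscr L_{\mathbf m}w)(z)$ for scalar $g$, and kernel $e^{-i\Re((\bfn\cdot v-z)\bar\bfk)}$ --- the substitution ${\mathbf p}=\bfn+{\mathbf m}$ does \emph{not} close: one finds $\mathscr L_{\mathbf m}\,\mathcal U_Bu(\bfk,\cdot)=e^{2i\Re({\mathbf m}\cdot v\,\bar\bfk)}\,\mathcal U_Bu(\bfk,\cdot)$, so $\mathcal U_Bu(\bfk,\cdot)\notin\mathcal H_f$ as written; likewise, since $\mathcal U_Bu(\bfk,z)=e^{+i\Re(z\bar\bfk)}\sum_\bfn e^{-i\Re(\bfn\cdot v\,\bar\bfk)}(\mathscr L_\bfn u)(z)$, the conjugation in step (v) produces the fiber shift $D_z-\tfrac{\bar\bfk}{2}$, $D_{\bar z}-\tfrac{\bfk}{2}$ rather than the stated $+\tfrac{\bar\bfk}{2}$, $+\tfrac{\bfk}{2}$ (the latter is what the paper actually uses downstream, e.g.\ $a_\bfk=a+\bfk$). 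In other words, the five displayed formulas of the statement are not mutually consistent as printed; a complete proof must fix one sign convention --- e.g.\ take the kernel $e^{-i\Re((\bfn\cdot v+z)\bar\bfk)}$ (equivalently sum over $\mathscr L_\bfn^{-1}$), and adjust the twist in $\mathcal H_\tau$ and the inverse kernel accordingly --- after which your reindexing in (ii), your orthogonality computation in (iii) (which is correct, including the tiling argument $\sum_\bfn\|\mathscr L_\bfn u\|^2_{L^2(F)}=\|u\|^2_{L^2(\CC)}$), the inversion in (iv), and the conjugation in (v) all go through. So: right strategy, but asserting that the invariance, twist, inversion and fiber formulas ``follow by reindexing'' simultaneously is literally false for the printed conventions, and detecting and resolving that sign is the only nontrivial content of the bookkeeping you postponed.
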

\begin{rem}
The magnetic Sobolev spaces associated with \eqref{eq:Hf}, when working with operators mentioned in Lemma \ref{lemm: Abelian_group}, are defined for $m \in \mathbb N$ as
\[ H^m_{\mathcal H_f}:=\{f \in \mathcal H_f; (a^{*})^{\alpha_1}a^{\alpha_2} f \in \mathcal H_f, \vert \alpha \vert \le m\}, \]
with inner product
\[ \langle f,g\rangle_{H^m_{\mathcal H_f}}:=\sum_{\vert \alpha \vert \le m} \langle (a^{*})^{\alpha_1}a^{\alpha_2} f,(a^{*})^{\alpha_1}a^{\alpha_2} g\rangle_{\mathcal H_f}.\]
\end{rem}

We first consider the case when $A_{\operatorname{per}} = 0$. Notice that the fiber operators of $a = 2D_{\bar{z}} - A(z)$ are $a_\bfk = a + \bfk$. Furthermore, for $A(z) = \frac{Bi}{2}z$, we have $[a_\bfk,a_\bfk^*] = 2B.$ This implies that $\ker_{\mathcal H_f}(a_\bfk) = \{0\}$ for $B<0$ and $\ker_{\mathcal H_f}(a_\bfk^*) = \{0\}$ for $B>0.$ We shall next construct $\operatorname{ker}_{\mathcal H_f}(a_\bfk)$ for $B>0.$ 

Before we start, let us recall the nullspace of $a_\bfk$ on $L^2(\CC)$. Due to the conjugacy relation $a_\bfk = e^{-\frac{i\bfk \bar{z}}{2}}e^{-B \vert z \vert^2/4}(2D_{\bar z})e^{B \vert z \vert^2/4} e^{\frac{ik\bar{z}}{2}}$, this is precisely the (transformed) Bargmann space
\begin{equation}
\label{eq:nullsp}
    \text{ker}_{L^2(\CC)}(a_\bfk) = \left\{e^{-\frac{i\bfk \bar{z}}{2}} e^{-B|z|^2/4}f(z); f \in \mathscr O(\CC), \int_{\CC} \vert f(z) \vert^2 e^{-B\vert z \vert^2/2} \ dz <\infty\right\},
 \end{equation}
where $\mathscr O(\CC)$ is the set of entire functions.

While the characterization of the nullspace of $a_\bfk$ , and thus of $H_D^B$, on $L^2(\CC)$ is straightforward and perhaps well-known as above, we will need the nullspace of $a_{\mathbf k}$ and $H_{D,\mathbf k}^B$ on the magnetic torus $\CC/\Gamma_{\operatorname{mag}}$ associated with the honeycomb lattice, instead. In another words, we need $\ker_{\mathcal H_f}(a_\bfk)$ rather than $\ker_{L^2(\CC)}(a_\bfk)$. As we will see, the ansatz of finding $\ker_{\mathcal H_f}(a_\bfk)$ is inspired by the structure of $\ker_{L^2(\CC)}(a_\bfk)$.
\begin{theo}\label{theo:Fredholm}
Assume $A(z)$ satisfy Assumption \ref{assump: magnetic potential} with $A_{\operatorname{per}}=0$ and $B>0$, we have  $\ker(a_\bfk^*) = \{0\}$ while $\ker_{\mathcal H_f}(a_{\mathbf k})$ is composed of functions
\begin{equation}
    \label{eq: function_u}
u_{\mathbf k}(z) = e^{-\frac{i\mathbf k \bar z}{2}} e^{B \frac{(z^2 - \vert z \vert^2)}{4}} g(z)
\end{equation} 
with $g(z) = e^{\gamma z} \prod\limits_{i=1}^p \tilde \sigma(z-z_i)$ for arbitrary $\gamma\in \CC$ and $z_i\in \CC$ satisfying the following two constraint equations
\begin{equation}
\label{eq:condition}
\gamma v_j - i\frac{\mathbf k \overline{v_j}}{2}  + ip\pi = \xi_j \sum_{j=1}^p z_j, \qquad  j \in \{1,2\}.
\end{equation}
Here $v_j:=\lambda_j \zeta_j$, $\xi_j :=  \frac{2\pi i \lambda_j B }{3}$ and $\tilde \sigma$ is the modular-invariant Weierstra{\ss} function, as defined in \cite{H18}. In particular, $\dim\ker_{\mathcal H_f}(a_\bfk) = p$. 

Finally, $a_{\mathbf k}$ is a Fredholm operator of index
\[
\operatorname{ind}(a_{\mathbf k}) = \operatorname{dim}\ker_{\mathcal H_f}(a_\mathbf k)-\operatorname{dim}\ker_{\mathcal H_f}(a^*_{{\mathbf k}})=p
\]
and $\ker_{\mathcal H_f \times \mathcal H_f}(H_{D,\mathbf k}^B) =\ker_{\mathcal H_f}(a_\mathbf k) \times \{0\}.$
\end{theo}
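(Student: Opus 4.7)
The plan is to reduce the theorem to an explicit construction of $\ker_{\mathcal H_f}(a_\bfk)$ via a theta-function ansatz; the remaining statements follow quickly from the Heisenberg-type commutator $[a, a^*] = 2B$.

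A direct computation for the magnetic Dirac operator gives $[a_\bfk, a_\bfk^*] = [a, a^*] = 2B > 0$, so
\[
\|a_\bfk^* u\|^2 = \|a_\bfk u\|^2 + 2B\|u\|^2 \geq 2B\|u\|^2
\]
on $\mathcal H_f$. This forces $\ker a_\bfk^* = \{0\}$ and shows $a_\bfk^*$ is bounded below with closed range. The block structure $H_{D,\bfk}^B = \begin{pmatrix}0 & a_\bfk^* \\ a_\bfk & 0\end{pmatrix}$ yields $\ker H_{D,\bfk}^B = \ker a_\bfk \times \{0\}$ immediately. For Fredholmness, $a_\bfk a_\bfk^* = a_\bfk^* a_\bfk + 2B$ is a scalar magnetic Laplacian on the compact magnetic torus $\CC/\Gamma_{\text{mag}}$ and hence has compact resolvent, so $a_\bfk$ is Fredholm with index $\dim\ker a_\bfk$. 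The main task therefore reduces to showing $\dim\ker a_\bfk = p$ and identifying its elements.

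For this I would exploit the gauge factorization
\[
a_\bfk = e^{-i\bfk\bar z/2}\,e^{B(z^2-|z|^2)/4}\cdot 2D_{\bar z}\cdot e^{-B(z^2-|z|^2)/4}\,e^{i\bfk\bar z/2},
\]
verified by computing $\partial_{\bar z}$ on the prefactor. Every pointwise solution of $a_\bfk u = 0$ then has the form $u_\bfk = e^{-i\bfk\bar z/2}e^{B(z^2-|z|^2)/4}g(z)$ with $g$ entire --- an $e^{Bz^2/4}$-twist of the Bargmann representation \eqref{eq:nullsp}, chosen specifically to match the quasi-periodicity of the Weierstrass modular sigma. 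Imposing $T_{j,\text{mag}} u_\bfk = u_\bfk$ and cancelling the prefactor converts this into a theta-function quasi-periodicity $g(z+v_j) = e^{\alpha_j z + \beta_j}g(z)$, with $\alpha_j, \beta_j$ explicit linear expressions in $(B,\bfk,v_j)$.

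I would then realize and count this space via the ansatz $g(z) = e^{\gamma z}\prod_{i=1}^p \tilde\sigma(z-z_i)$. Using the standard quasi-periodicity $\tilde\sigma(z + v_j) = -e^{\eta(v_j)(z + v_j/2)}\tilde\sigma(z)$ of the modular sigma --- where $\eta(v_j) = \xi_j = 2\pi i \lambda_j B/3$ follows from the Legendre relation together with the flux quantization $B|\CC/\Gamma_{\text{mag}}| = 2\pi p$ --- the ansatz carries the multiplier $(-1)^p e^{\gamma v_j + p\eta(v_j)z + \eta(v_j)(pv_j/2 - \sum z_i)}$. Matching the $z$-linear coefficient with $\alpha_j$ is automatic from the flux relation, while matching the constant terms (absorbing $(-1)^p = e^{ip\pi}$) reproduces exactly the two relations \eqref{eq:condition}, which together determine $\gamma$ and $\sum z_i$ as linear functions of $\bfk$. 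Conversely, any $g$ with the prescribed quasi-periodicity has precisely $p$ zeros in a fundamental domain of $\Gamma_{\text{mag}}$ by the argument principle applied to $d\log g$ (the winding is produced by the $\alpha_j$-term), and dividing out $\prod\tilde\sigma(z-z_i)$ leaves a nowhere-vanishing entire function whose logarithm is linear, yielding the $e^{\gamma z}$ factor. The dimension is thus $p$: an effective divisor of degree $p$ with fixed total sum contributes $p-1$ complex parameters and the overall scalar contributes one more, in agreement with the classical dimension of sections of a degree-$p$ line bundle on the elliptic curve $\CC/\Gamma_{\text{mag}}$. The hardest part will be the last matching --- tracking the Gaussian factor, the sign $(-1)^p$ that produces the $ip\pi$ term, the modular normalization of $\tilde\sigma$, and the identity $\eta(v_j) = \xi_j$ that encodes the flux quantization in the final form of \eqref{eq:condition}.
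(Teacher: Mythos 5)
Your overall route is essentially the paper's: the gauge conjugation reducing $a_\bfk u=0$ to an entire $g$ with the quasi-periodicity \eqref{eq: bdry_cond}, the ansatz $g=e^{\gamma z}\prod_{i=1}^p\tilde\sigma(z-z_i)$, the argument-principle count of $p$ zeros, and a degree-$p$ section count for the dimension; the identity $\Vert a_\bfk^*u\Vert^2=\Vert a_\bfk u\Vert^2+2B\Vert u\Vert^2$ is also how the paper disposes of $\ker a_\bfk^*$, and the block-structure argument for $\ker_{\mathcal H_f\times\mathcal H_f}(H^B_{D,\bfk})$ is the same. Your two genuine deviations are fine and arguably cleaner: Fredholmness from the compact resolvent of the elliptic operators $a_\bfk^*a_\bfk$ and $a_\bfk a_\bfk^*$ on the magnetic torus (the paper instead invokes Thaller's trace-class heat-semigroup criterion), and the derivation of the general form of $g$ by dividing out the sigma product (the quotient is zero-free with constant multipliers, hence $e^{\gamma z}$) rather than the paper's construction through $(g'/g)'$ and the Weierstrass $\wp$-function.

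However, the decisive multiplier-matching step is inconsistent as you wrote it. The quasi-periodicity constant of Haldane's $\tilde\sigma$ for the lattice $\Gamma_{\operatorname{mag}}$ is not $\eta(v_j)=\xi_j$: since $\tilde\sigma$ depends only on the lattice, its multiplier cannot depend on $B$; the correct exponent coefficient is $\xi_j/p$, equivalently $\pi(\bar v_j-v_j)/\vert\CC/\Gamma_{\operatorname{mag}}\vert$ by Legendre's relation and $B\vert\CC/\Gamma_{\operatorname{mag}}\vert=2\pi p$, as recorded in \eqref{eq: sigma_bdry}. With your constant, the $z$-linear coefficient of the multiplier of $\prod_{i=1}^p\tilde\sigma(z-z_i)$ is $p\xi_j$, so the matching you call automatic would force $p\xi_j=\xi_j$ and fails for every $p>1$, and the constant-term relation it would produce is not the one that actually holds. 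With the corrected constant the linear terms do cancel for free, and the constant terms give $\gamma v_j+i\pi p-\tfrac{\xi_j}{p}\sum_{i=1}^p z_i=\tfrac{i\bfk\overline{v_j}}{2}$, i.e. \eqref{eq: eqn_gamma_1_Z_p}, solved in \eqref{eq:solutions}. (Note that the displayed constraint \eqref{eq:condition} in the statement differs from \eqref{eq: eqn_gamma_1_Z_p} by exactly this factor $1/p$, which may be what pushed you to the wrong normalization; the proof's version is the consistent one, as the specialization in Remark \ref{rem: protected_states} confirms.) A smaller point: your count ``$p-1$ parameters for the zeros plus one scalar'' should be replaced by an actual linear-independence/completeness statement (theta-function bases as in \cite{BHZ}, or Riemann--Roch as in the paper), since counting parameters of a nonlinear parametrization does not by itself give the dimension of the linear span.
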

\begin{proof}
We start with the ansatz $u_{\mathbf k}(z) = e^{-\frac{i\mathbf k \bar z}{2}} e^{B\frac{(z^2 - \vert z \vert^2)}{4}} g(z)$, inspired by the nullspace of $a$ on $L^2(\CC)$ in \eqref{eq:nullsp}, where $g$ is an entire function that we will determine soon.

Recall that $v_j = \lambda_j\zeta_j$, and our ansatz $u_\bfk$ satisfies $T_{j,\text{mag}}u_\bfk(z) = u_\bfk(z)$, i.e. 
\[
\begin{split}
    1 &= \frac{T_{j,\text{mag}}u_k}{u_k} = \frac{e^{i\varphi_j(z)}u_\bfk(z + v_j)}{u_\bfk(z)}\\
    &= \frac{g(z + v_j)}{g(z)} \exp\left(\tfrac{B}{4}(v_j\bar{z} - \overline{v_j} z) - \tfrac{i\bfk \overline{v_j}}{2} + \tfrac{B}{4}(v_j^2 + 2v_jz - |v_j|^2 - z\overline{v_j} - \overline{z}v_j)\right)\\
    &= \frac{g(z + v_j)}{g(z)}\exp\left(-\tfrac{i\bfk \overline{v_j}}{2} + \tfrac{Bz}{2}(\overline{v_j} - v_j) +\tfrac{Bv_j}{4}(\overline{v_j} - v_j)\right).
\end{split}
\]
Thus we get the boundary constraints of $g(z)$ on the unit cell $\CC/\Gamma_{\text{mag}}$:
\begin{equation}
        \label{eq: bdry_cond}
    g(z + v_j) = g(z)\exp\left(\tfrac{i\bfk\overline{v_j}}{2} + \xi_j z + S_j\right),
\end{equation}
where $\xi_j  = \tfrac{B}{2}(\overline{v_j} - v_j) = \tfrac{2\pi i B\lambda_j}{3}$ and $S_j = \tfrac{\xi_j v_j}{2}$. 

To find entire functions $g(z)$ satisfying \eqref{eq: bdry_cond}, we first count the number of zeros of $g$ by the argument principle. By $\overline{v_1}v_2 - \overline{v_2}v_1 = 2i|\CC/\Gamma_{\text{mag}}|$ and \eqref{eq: assumption}, we have
\begin{equation}
\label{eq:Legendre}
 \begin{split}
\frac{1}{2\pi i}\int_{\partial (\CC/\Gamma_{\operatorname{mag}})} \frac{g'(z)}{g(z)} \ dz &=\frac{1}{2\pi i} \left(\int_{\frac{-v_1+ v_2}{2}}^{-\frac{v_1+ v_2}{2}} d\left(\log \tfrac{g(z)}{g(z+v_1)} \right) + \int_{-\frac{v_1+ v_2}{2}}^{\frac{v_1- v_2}{2}} d\left(\log \tfrac{g(z)}{g(z+v_2)} \right) \right) \\
&= \frac{v_2 \xi_1 - v_1 \xi_2}{2\pi i} = \frac{B(\overline{v_1}v_2 - \overline{v_2}v_1)}{4\pi i} = \frac{2Bi|\CC/\Gamma_{\text{mag}}|}{4\pi i}  = p.
\end{split}
\end{equation}
Hence $g$ has $p$ zeros (counting multiplicity) in a unit cell $\CC/\Gamma_{\operatorname{mag}}$. We denote them by $z_1, \dots, z_p$. We also denote the non-repeated zeros of $g$ by $w_1, \dots, w_{p_0}$ with multiplicity $n_1,\dots, n_{p_0}$, where $n_1 + \dots + n_{p_0} = p$. 

To determine $g$, we further notice that, by the boundary constraints \eqref{eq: bdry_cond},
\[
    \frac{g'(z+v_j)}{g(z+v_j)} =\frac{g'(z)}{g(z)} +\xi_j, \qquad j = 1,2.
\]
As a result, $f:=(g'/g)'$ is a meromorphic, double periodic\footnote{i.e. $f(z + v_j) = f(z)$, $j = 1,2$.} function; thus, an elliptic function. Furthermore, $f$ has (and only has) a pole of order $2$ at each (non-repeated) zero of $g$, i.e. $w_i$, $i = 1,\dots, p_0$. Such kind of elliptic function with poles of order $2$ can be constructed using the so-called Weierstra{\ss} $\wp$-function 
\[
        \wp(z) = \frac{1}{z^2} + \sum\limits_{\gamma\in\Gamma_{\text{mag}}^{\times}}\left(\frac{1}{(z - \gamma)^2} - \frac{1}{\gamma^2}\right), \quad \text{~where~} \Gamma_{\text{mag}}^\times = \Gamma_{\text{mag}} \setminus \{0\}. 
\]
The Weierstra{\ss} $\wp$-function is double periodic w.r.t. $\Gamma_{\text{mag}}$, holomorphic on $\CC\setminus \Gamma_{\text{mag}}$, and has poles of order $2$ at lattice points $\Gamma_{\text{mag}}$. From $\wp$, we can construct $f$ in the following way: 

Denote the Laurent expansion of $f$ near each $w_i$ as $f(z) = \sum\limits_{k = -2}^{+\infty} c_{k}^{(i)} (z - w_i)^k$. Since $w_i$ is a zero of order $n_i$ for $g$, we see that 
\[
c_{-2}^{(i)} = \lim\limits_{z\to w_i} f(z)(z - w_i)^2 = \lim\limits_{z\to w_i} \left(\frac{g'(z)}{g(z)}\right)'(z -w_i)^2 = -n_i.
\]
Then $F(z) := f(z) + \sum\limits_{i = 1}^{p_0} n_i \wp(z - w_i)$ has no poles anymore because $f$ and $\wp$ only have poles of order $2$ but they are all eliminated by the summation. As a result, $F(z)$ is entire, double periodic; thus, by Liouville's theorem, a constant. We denote it by $c_0$. Hence we have
\begin{equation}
    \label{eq: log_g(z)}
 \left(\frac{d}{dz}\log(g(z))\right)' = \left(\frac{g'(z)}{g(z)}\right)' = f(z) = -\sum_{i = 1}^p n_i \wp(z - w_i) + c_0.
\end{equation}
To solve for $g$, recall that the Weierstra{\ss} sigma function (See Appendix \ref{sec:Haldane} for more details)
\[
\sigma(z) = z \prod_{\nu \in \Gamma_{\operatorname{mag}}^\times} \Big(1-\frac{z}{\nu} \Big)e^{\frac{z}{\nu}+\frac{z^2}{2\nu^2}}\quad \text{~satisfies~} \quad \left(\frac{d}{dz}\log(\sigma(z))\right)' = \left(\frac{\sigma'(z)}{\sigma(z)}\right)' = -\wp(z). 
\]
Plug into \eqref{eq: log_g(z)} and integrate \eqref{eq: log_g(z)} twice, we get 
\[
\begin{split}
   & \log (g(z)) = \sum\limits_{i = 1}^{p_0} n_i \log (\sigma(z - w_i)) + c_0z^2 + c_1z + c_2,\\
   \Rightarrow \quad & g(z) = e^{c_0z^2 + c_1z + c_2} \prod_{i = 1}^{p_0} \left(\sigma(z - w_i)\right)^{n_i} =: ce^{p(z)}\prod_{i = 1}^{p_0} \left(\sigma(z - w_i)\right)^{n_i}.
\end{split}
\]
To solve for $p$, we first recall the ``modified'' Weierstra{\ss} sigma function $\tilde \sigma(z)$, introduced by Haldane in \cite[eq. (11)]{H18}, which has the following properties: 
\begin{equation}
    \label{eq: sigma_bdry}
\tilde \sigma(z) = e^{-\frac{1}{2}\gamma_2(\Gamma_{\text{mag}})z^2}\sigma(z), \quad \text{~with~}\quad 
\tilde \sigma(z+v_j) = -e^{\frac{\xi_j z + S_j}{p}} \tilde \sigma(z), \quad j = 1,2.
\end{equation}
where $\gamma_2(\Lambda)$ is a constant depending on the lattice $\Lambda$, see \cite[eq. (8)]{H18}. Then we also notice that $n_i$ is the multiplicity of the non-repeated zeros $w_i$. As a result, we can rewrite $g(z)$ as
\[
    g(z) = ce^{p(z)} \prod_{i = 1}^{p_0}\left(\sigma(z - w_i)\right)^{n_i} = \tilde c e^{\tilde p(z)} \prod_{i = 1}^{p_0}\left(\tilde \sigma(z - w_i)\right)^{n_i} = \tilde c e^{\tilde p(z)} \prod_{i = 1}^{p}\tilde \sigma(z - z_i)
\]
for some other second degree polynomial $\tilde p(z)$. 
Denote $\tilde p(z) = \gamma_2 z^2 + \gamma_1 z$ where the constant term is absorbed by $\tilde c$. Using the boundary constraints of $g$ in \eqref{eq: bdry_cond} and of $\tilde \sigma$ in \eqref{eq: sigma_bdry}, we find for $j = 1,2$, 
    \begin{align}
    & 2\gamma_2 v_j +\xi_j = \xi_j,  \qquad \Rightarrow \qquad \gamma_2 = 0, \\
    & \gamma_1v_j + i\pi p - \frac{\xi_j}{p}\sum\limits_{i = 1}^{p} z_i = \frac{i\bfk \overline{v_j}}{2}.    \label{eq: eqn_gamma_1_Z_p}
     \end{align}
Denote $Z_{p}:=\sum\limits_{i=1}^{p} z_i \in \CC$. From \eqref{eq: eqn_gamma_1_Z_p} with $j = 1,2$, we can solve for 
\begin{equation}
\label{eq:solutions}
 \begin{split}
    \gamma_1 &= \frac{ip \pi(\xi_2 - \xi_1)-\frac{\bfk i }{2}(\xi_2\overline{v_1} - \xi_1\overline{v_2})}{\xi_1 v_2 - \xi_2 v_1} = i\frac{\bfk }{2} +i\frac{\pi  B}{3}(\lambda_2 - \lambda_1),\\
    Z_{p} &= \frac{ ip\pi (v_1 - v_2)-\frac{\bfk i }{2}(\overline{v_2}v_1 - \overline{v_1}v_2)}{\frac{1}{p}(\xi_2v_1 - \xi_1v_2)} =  \frac{2\pi i p }{3}(\lambda_2\omega^2 - \lambda_1\omega)-i\frac{\bfk p }{B} .
\end{split}
\end{equation}
In summary, $u_\bfk(z) = e^{-\frac{i\bfk \bar{z}}{2}} e^{B\frac{(z^2 - |z|^2)}{4}}g(z)\in \ker_{\mathcal H_f} (a_\bfk)$ as long as $g(z) = \tilde ce^{\gamma_1 z} \prod\limits_{i = 1}^{p} \tilde \sigma(z - z_i)$ with $\gamma_1$, $Z_p=\sum\limits_{i=1}^{p} z_i$ satisfy \eqref{eq:solutions}. 

Now we determine the dimension of this space. Since the function exhibits precisely $p$ zeros, we can construct also $p$ linearly independent solutions, see e.g. \cite[Lemma $4.1$]{BHZ} for an analogous construction. There cannot be more solutions by the Riemann-Roch theorem:

Indeed, consider one possible solution $u_{\bfk}$ with fixed zeros at $z_1,...,z_p$ and any other solution $v_{\bfk},$ then $\frac{v_{\bfk}}{u_{\bfk}}$ is a meromorphic function with (possible) poles at $z_1,...,z_p.$ The Riemann-Roch theorem states that for the special case of the two-dimensional torus the dimension of the space of meromorphic functions with this fixed set of possible poles is equal to the number of poles, which is $p$.

For completeness, we recall that the non-zero energy bands of the magnetic Dirac operator are then derived by defining first
\[ u_{\mathbf k,n} := \frac{(a^*_{\mathbf k})^n}{(2B)^{n/2} \sqrt{n!}} u_{\mathbf k}, \quad n \in \mathbb N_0,\] with $u_{\mathbf k} \in \ker(a_{\mathbf k}).$
We then find 
\[\begin{split}
    a_{\mathbf k}^*u_{\mathbf k,n} &= \sqrt{2B(n+1)} \ u_{\mathbf k,n+1} \text{ and }
     a_{\mathbf k} u_{\mathbf k,n} = \sqrt{2Bn} \ u_{\mathbf k,n-1}
    \end{split}.\]
This way, for $n \neq 0,$ 
\begin{equation}
\label{eq:indices}
\begin{split}
H^B_{D,\mathbf k} \begin{pmatrix} \pm u_{\mathbf k,n-1} \\ u_{\mathbf k,n} \end{pmatrix} &= \pm \sqrt{2Bn} \begin{pmatrix} \pm u_{\mathbf k,n-1} \\ u_{\mathbf k,n} \end{pmatrix}.
\end{split}
\end{equation}

We can then use this to see that $a_{\mathbf k}$ is Fredholm on $\mathcal H_f$, i.e. possesses a finite-dimensional kernel and has closed range. This follows for example from the criterion stated in \cite[p.158 (footnote)]{T92}:
The Hamiltonian $H_{D,\mathbf k}^B$ is Fredholm if and only if both $a_{\mathbf k}$ and $a_{\mathbf k}^*$ are. Meanwhile, by \cite[Theo $5.19$]{T92}, $H_{D,\mathbf k}^B$ is Fredholm if and only if $e^{-(H_{D,\mathbf k}^B)^2t}$ is trace-class for some $t>0$. The latter can be easily verified using \eqref{eq:indices}.
The Fredholm index 
\[ \operatorname{ind}_{\mathcal H_f}(a_{\mathbf k}) = \operatorname{dim}\ker_{\mathcal H_f}(a_{\mathbf k}) - \operatorname{dim}\ker_{\mathcal H_f}(a_{\mathbf k}^*)\]
is readily computed from the dimension of the nullspaces of $a_{\mathbf k}$, which is $p$ and of $a_{\mathbf k}^*$, which is $0.$
\end{proof}
\begin{rem}\label{remark: negative_mag}
Let $(Qu)(z) := \overline{u(-z)}$. Recall that $A(z) = \frac{Bi}{2}z$. Then by direct computation,
\[
Q (a_\bfk) Q^* = Q(2D_{\overline{z}} - A(z) + \bfk)Q =  2D_z + \overline{A(z)} + \overline{\bfk}.
\]
Thus $Qu_\bfk\in \ker_{\mathcal H_f}(2D_z + \overline{A(z)} + \overline{\bfk})$ for any $u_\bfk \in \ker_{\mathcal H_f}(a_\bfk)$. Notice that $2D_z + \overline{A(z)} + \overline{\bfk}\neq a_\bfk^* = 2D_z - \overline{A(z)} + \overline{\bfk}$. In fact, $2D_z + \overline{A(z)} + \overline{\bfk} = (a_\bfk(-B))^*$ where $a_\bfk(-B) = 2D_{\bar{z}} + \frac{Bi}{2}z$ represents the negative constant magnetic field $-B$. 
\end{rem}

\begin{rem}\label{rem: protected_states}
If $\lambda_1=\lambda_2=1$ the last two equations \eqref{eq:solutions} reduce to 
\begin{equation}
\label{eq:magnetic_param}
 \gamma_1 = \frac{i\mathbf k}{2} \text{ and }Z_p = \frac{2\pi p}{\sqrt{3}} - \frac{4\pi i }{3\sqrt{3}} \mathbf k.
 \end{equation}
\end{rem}

A direct consequence of Theorem \ref{theo:Fredholm} is, we can compute the Fredholm index for the chiral model. 
\begin{corr}\label{cor: index_D_c}
Under the same assumptions as in Theorem \ref{theo:Fredholm},
    the operator $D_c(\alpha_1, B)+ \mathbf k$ on $\mathcal H_f$ is a Fredholm operator of index $2p.$
\end{corr}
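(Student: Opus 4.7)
The plan is to view $D_c(\alpha_1,B)+\mathbf k$ as a relatively compact perturbation of its diagonal part and then apply Theorem \ref{theo:Fredholm} componentwise. Writing
\[
D_c(\alpha_1,B)+\mathbf k \;=\; \begin{pmatrix} a_{\mathbf k} & 0 \\ 0 & a_{\mathbf k}\end{pmatrix} + \alpha_1\begin{pmatrix} 0 & U(z) \\ U_-(z) & 0\end{pmatrix} \;=:\; D_0 + K,
\]
the first step is to observe that $D_0$ acts diagonally. By part (4) of Lemma \ref{lemm: Abelian_group}, the appropriate $\mathcal H_f$ for $D_c(\alpha_1,B)$ splits as a product of two scalar spaces, the first carrying the twisted magnetic boundary condition with extra phase $\omega^{\mathbf n\cdot \lambda}$ and the second the untwisted one. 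Multiplication by the phase $e^{i\Re(z\overline{\mathbf k_0})}$ for a suitable $\mathbf k_0\in\tfrac13\Gamma_{\operatorname{mag}}^*$ intertwines the twisted sector with the untwisted sector at shifted quasi-momentum $\mathbf k+\mathbf k_0$, so on each sector $a_{\mathbf k}$ (resp.\ $a_{\mathbf k+\mathbf k_0}$) is Fredholm of index $p$ by Theorem \ref{theo:Fredholm}. Summing gives $\operatorname{ind}_{\mathcal H_f}(D_0)=2p$.

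The second step is to show that $K$ is relatively compact with respect to $D_0$, so that $\operatorname{ind}(D_c(\alpha_1,B)+\mathbf k)=\operatorname{ind}(D_0)$ by the standard stability of the Fredholm index. The potentials $U,U_-$ are smooth and bounded on $\CC/\Gamma_{\operatorname{mag}}$, so $K:\mathcal H_f\to\mathcal H_f$ is bounded. Meanwhile $D_0$ is an elliptic first-order operator on the compact manifold $\CC/\Gamma_{\operatorname{mag}}$, so its domain $H^1_{\mathcal H_f}$ embeds compactly in $\mathcal H_f$ by Rellich's theorem. Hence $K(D_0+i)^{-1}$ is compact on $\mathcal H_f$, i.e.\ $K$ is $D_0$-compact.

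Combining the two steps yields
\[
\operatorname{ind}_{\mathcal H_f}\bigl(D_c(\alpha_1,B)+\mathbf k\bigr)=\operatorname{ind}_{\mathcal H_f}(D_0)=2p.
\]
I expect the only subtlety to be the bookkeeping in the first step: one must verify that the twisted magnetic Bloch space for the first component of $D_c(\alpha_1,B)$ is genuinely isomorphic (via the phase intertwiner) to the untwisted one at a shifted quasi-momentum, so that Theorem \ref{theo:Fredholm} applies and still gives index $p$. Since the index is locally constant in $\mathbf k$ and the shift is a continuous deformation of the fiber data, this causes no problem, and the relative compactness in the second step is standard.
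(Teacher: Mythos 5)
Your proposal follows essentially the same route as the paper: split $D_c(\alpha_1,B)+\mathbf k$ into the diagonal part $\diag(a_{\mathbf k},a_{\mathbf k})$ (index $2p$ by Theorem \ref{theo:Fredholm}) plus the smooth off-diagonal potential, which is a relatively compact perturbation and hence does not change the index; the paper simply cites \cite[Prop.\ 5.27]{T92} for the latter step, while you argue it via Rellich. Your extra bookkeeping about the twisted phase $\omega^{\mathbf n\cdot\lambda}$ in the first component and the intertwining by a momentum shift is a reasonable elaboration of a point the paper leaves implicit (and is consistent with how the shift $\mathbf k_0$ appears later in Theorem \ref{theo:away_mag_angl}).

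One step, as literally written, does not work: the operator $K(D_0+i)^{-1}$ does not exist, because $D_0+i$ is not invertible on $\mathcal H_f$. Indeed $a_{\mathbf k}-\lambda=a_{\mathbf k-\lambda}$, so by Theorem \ref{theo:Fredholm} \emph{every} $\lambda\in\CC$ is an eigenvalue of $a_{\mathbf k}$ with $p$-dimensional eigenspace; in particular $-i$ is an eigenvalue of $D_0$, and there is no resolvent to insert. This is a fixable technicality rather than a conceptual gap: regard $D_0$ and $D_0+K$ as bounded operators $H^1_{\mathcal H_f}\to\mathcal H_f$; then $D_0$ is Fredholm of index $2p$ in that sense, and $K:H^1_{\mathcal H_f}\to\mathcal H_f$ is compact because it factors through the compact embedding $H^1_{\mathcal H_f}\hookrightarrow\mathcal H_f$ followed by bounded multiplication, so the index is preserved. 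With that correction (or by citing the relative compactness criterion as the paper does), your argument is complete and gives $\operatorname{ind}_{\mathcal H_f}(D_c(\alpha_1,B)+\mathbf k)=2p$.
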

\begin{proof}
We recall that the direct sum of Fredholm operators  $\diag(a_\bfk, a_\bfk)$ is a Fredholm operator with twice the index of $a_{\mathbf k}$. Since $U,U_-$ are smooth functions, $\begin{pmatrix} 0 & \alpha_1 U(z) \\ \alpha_1 U_-(z) & 0 \end{pmatrix}$ is a relatively compact perturbation \cite[Prop.$5.27$]{T92} of $\diag(a_\bfk, a_\bfk)$ which does not affect the index. To summarize, we have argued that 
\begin{equation}
\label{eq:index_formula}
\operatorname{ind}_{\mathcal H_f}(D_c(\alpha_1, B)+ \mathbf k) =2 \operatorname{ind}_{\mathcal H_f}(a_{\bfk})= 2p.
\end{equation}
\end{proof}

We also observe that every zero of elements in the nullspace of $D_c(\alpha_1, B)+k$ are of $z$-type (and analogously for elements in the nullspace of $D_c(\alpha_1, B)^*+\bar k$ are of $\bar z$-type). 
\begin{lemm}
\label{lemm:type_of_zeros}
    Let $w \in C^{\infty}(\CC;\CC^2)$ and that $(D_c(\alpha_1, B)+\bfk)w=0$ for some $\bfk \in \CC$ and that $w(z_0)=0.$ Then $w(z)=(z-z_0)w_0(z)$ with $w_0 \in C^{\infty}(\CC;\CC^2).$ 
\end{lemm}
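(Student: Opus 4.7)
The equation $(D_c(\alpha_1,B)+\bfk)w=0$ unpacks, using $a=-2i\partial_{\bar z}-A(z)$, to the $\bar\partial$-system
\begin{align*}
-2i\partial_{\bar z}w_1&=(A(z)-\bfk)w_1-\alpha_1 U(z)w_2,\\
-2i\partial_{\bar z}w_2&=(A(z)-\bfk)w_2-\alpha_1 U_-(z)w_1,
\end{align*}
that is, $\partial_{\bar z}w=M(z)w$ for a smooth $2\times 2$ matrix-valued function $M$. The plan is to show that $w_0(z):=w(z)/(z-z_0)$, a priori defined and smooth only on $\CC\setminus\{z_0\}$, extends smoothly across $z_0$.

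Evaluating the equation at $z_0$ gives $\partial_{\bar z}w(z_0)=M(z_0)w(z_0)=0$; together with $w(z_0)=0$ and smoothness of $w$, the first-order Taylor expansion in the $(z,\bar z)$ coordinates yields $w(z)=\partial_z w(z_0)(z-z_0)+O(|z-z_0|^2)$, so $w_0$ is bounded in a neighborhood of $z_0$. Hence $w_0\in L^\infty_{\mathrm{loc}}(\CC;\CC^2)$.

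Next I would verify, componentwise, that the distributional identity $\partial_{\bar z}w_0=Mw_0$ holds on all of $\CC$. It holds classically on $\CC\setminus\{z_0\}$, so for any test function $\varphi\in C_c^\infty(\CC)$, integration by parts on $\CC\setminus\overline{B_\epsilon(z_0)}$ via Green's formula $\int_\Omega\partial_{\bar z}f\,dA=\tfrac{1}{2i}\oint_{\partial\Omega}f\,dz$ produces a defect equal to the boundary integral $\tfrac{1}{2i}\oint_{\partial B_\epsilon(z_0)}w_0\varphi\,dz$, which is $O(\epsilon)$ by the boundedness of $w_0$ and therefore vanishes as $\epsilon\to 0$.

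Finally, since $\partial_{\bar z}$ is first-order elliptic in two real dimensions and $M$ is smooth, the operator $\partial_{\bar z}-M$ is hypoelliptic, so the equation $(\partial_{\bar z}-M)w_0=0$ forces $w_0\in C^\infty(\CC;\CC^2)$; equivalently, one can bootstrap directly via the Cauchy transform from $w_0\in L^\infty_{\mathrm{loc}}$ and $\partial_{\bar z}w_0\in L^\infty_{\mathrm{loc}}$ up to Hölder regularity and then iteratively to $C^\infty$. The factorization $w(z)=(z-z_0)w_0(z)$ then holds on $\CC\setminus\{z_0\}$ by definition and extends to $z=z_0$ by continuity. I expect the main delicate step to be the removable-singularity argument across $z_0$ (analogous to Riemann's theorem for holomorphic functions), which crucially relies on the constraint $\partial_{\bar z}w(z_0)=0$ forced by the equation itself: without it, the zero of $w$ at $z_0$ could be of pure $\bar z$-type and the factorization would fail.
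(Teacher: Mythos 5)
Your proof is correct, but it follows a genuinely different route from the paper. The paper's argument is purely Taylor-theoretic: writing $2D_{\bar z}w = \begin{pmatrix} A-\bfk & -\alpha_1 U\\ -\alpha_1 U_- & A-\bfk\end{pmatrix}w$, it shows by induction (Leibniz plus the previously established vanishing) that $(2D_{\bar z})^{\ell}w(z_0)=0$ for every $\ell$, i.e.\ the Taylor expansion of $w$ at $z_0$ contains no pure $(\bar z-\bar z_0)^k$ terms, and then invokes the standard smooth-division fact that such a function factors as $(z-z_0)w_0$ with $w_0\in C^\infty$ (immediate here since the coefficients are real-analytic, so $w$ is, but true in the $C^\infty$ category as well via a Borel/flat-remainder argument). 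You instead use the equation only to first order at $z_0$ — just to get $\partial_{\bar z}w(z_0)=M(z_0)w(z_0)=0$ and hence boundedness of $w_0=w/(z-z_0)$ — and then run a removable-singularity argument: the distributional identity $\partial_{\bar z}w_0=Mw_0$ extends across $z_0$ because the boundary defect is $O(\epsilon)$, and elliptic regularity (hypoellipticity of the smooth-coefficient elliptic system $\partial_{\bar z}-M$, or the Cauchy-transform bootstrap) upgrades $w_0$ to $C^\infty$. The paper's route is shorter and avoids distribution theory, at the cost of the infinite-order induction and the division lemma; your route needs only finite-order information at $z_0$ plus standard elliptic machinery, and it yields the useful by-product that $w_0$ itself solves the same first-order system $\partial_{\bar z}w_0=Mw_0$, which is convenient if one wants to iterate the factorization over several zeros. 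Both arguments are sound; your single delicate step (the removability of the singularity, which hinges on $\partial_{\bar z}w(z_0)=0$) is handled correctly.
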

\begin{proof}
   It suffices to show that $(2D_{\bar z})^{\ell}w(z_0)=0$ for all $\ell$ which follows from $(2D_{\bar z})^{\ell}w(z)=(2D_{\bar z})^{\ell-1}\begin{pmatrix}
       A - \bfk & -U \\ -U_- & A - \bfk
   \end{pmatrix} w(z)$ by induction.
\end{proof}

\section{The chiral limit}
\label{sec:chiral_lim}
We now turn to the chiral model \eqref{eq:DcB} and study its flat bands at energy zero. An important quantity characterizing the integer Quantum Hall effect is the Chern number.
To this end, we also introduce the Chern number of a spectral projection.

Let $\mathbb C/\Gamma_{\operatorname{mag}}^* \ni \mathbf k \mapsto P(\mathbf k):L^2(\CC/\Gamma_{\text{mag}}) \to L^2(\CC/\Gamma_{\text{mag}})$ be a smooth family of projections, then the associated Chern number is given by
\[ c_1(P) := \int_{\mathbb C/\Gamma_{\operatorname{mag}}^*} \tr_{L^2(\mathbb C/\Gamma_{\operatorname{mag}})}(P(\mathbf k)[\partial_1 P(\mathbf k),\partial_2 P(\mathbf k)]) \frac{d\mathbf k}{2\pi i}.\]

\subsection{Flat bands away from magic angles}
Recall that $\mathcal {A}$ denotes the set of magic angles without a magnetic field (see Subsection \ref{subsec: charofmagicangles}). Here in this subsection, we give a full characterization of flat bands with magnetic field when $\alpha_1 \notin \mathcal A$. In the next subsection, we will discuss the case when $\alpha_1\in \mathcal A$.

\begin{figure}
    \centering
    \includegraphics[width=6cm]{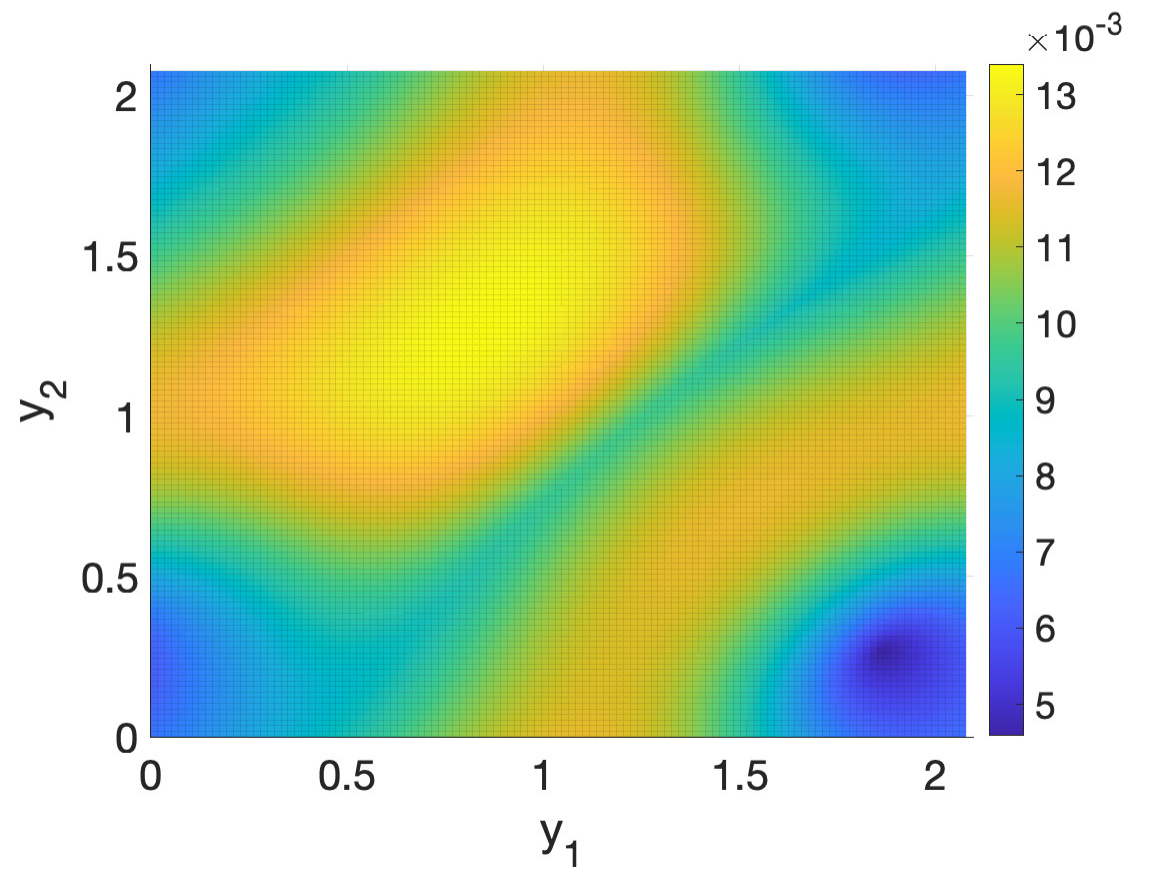}
    \includegraphics[width=6cm]{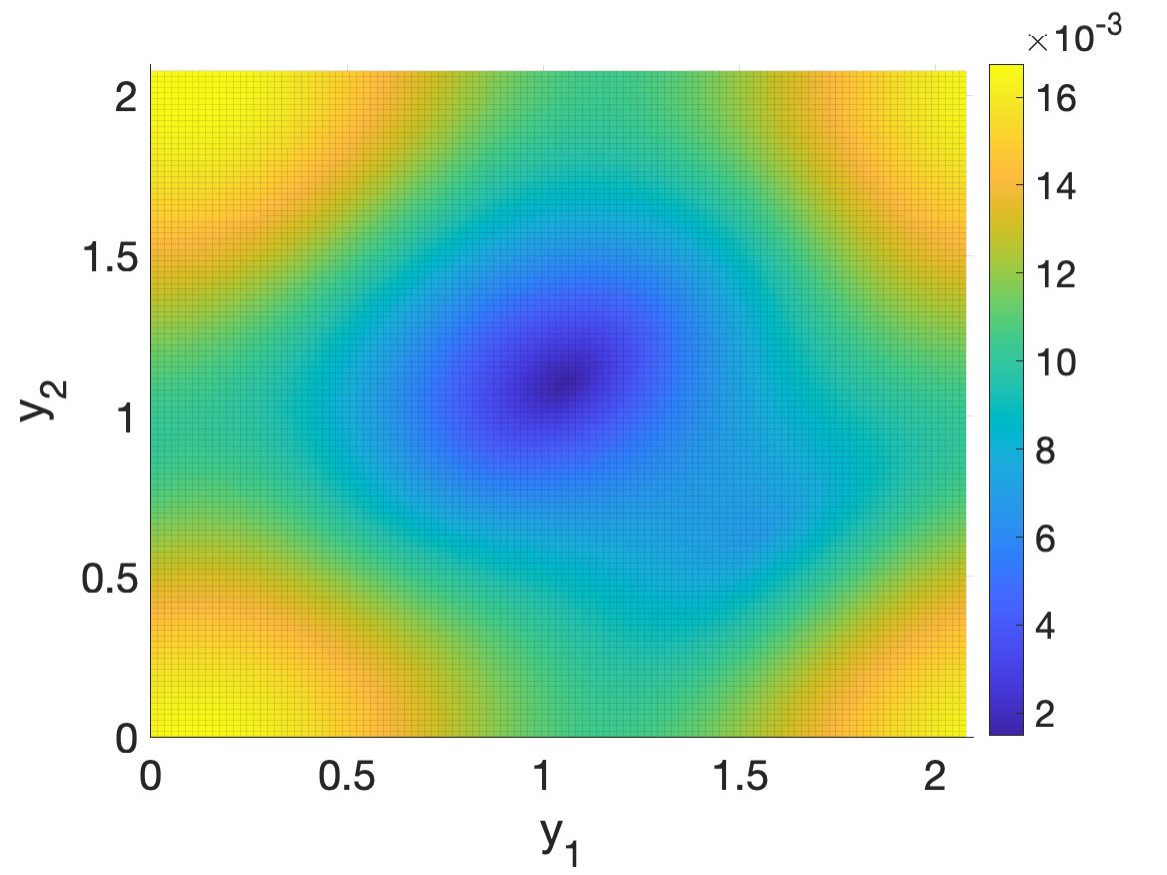}
    \caption{Figure showing modulus of two flat bands at $\alpha=0.2$ (\emph{non-magic}) away from magic angles with magnetic flux $\Phi=2\pi.$ }
    \label{fig:my_label3}
\end{figure}

\begin{theo}[Flat bands; $\alpha_1 \notin \mathcal A$]
\label{theo:away_mag_angl}
    Assume $A(z)$ satisfy Assumption \ref{assump: magnetic potential} with $A_{\operatorname{per}}=0$ and $B>0$, let $\alpha_1 \notin \mathcal A,$ then $H_c(\alpha_1, B)$ has a flat band of multiplicity $2p$ at energy $0$ that is uniformly (in $\bfk$) gapped away from the rest of the spectrum. In particular, for all $\mathbf k \in \CC$
    \begin{align}
\ker_{\mathcal H_f}(H_{c,\mathbf k}(\alpha_1, B)) &=\ker_{\mathcal H_f}(D_c(\alpha_1,B)+\mathbf k) \times \{0_{\CC^2}\} \label{eq: kerH_c,k}\\
     \ker_{\mathcal H_f}(D_c(\alpha_1, B)+\mathbf k) &=(u_1(\alpha_1) \otimes \ker_{\mathcal H_f}(a_{\mathbf k-\mathbf k_0})) \oplus (u_2(\alpha_1) \otimes \ker_{\mathcal H_f}(a_{\mathbf k})),     \label{eq: kerD_c}
     \end{align}
    where $\alpha_1 \mapsto u_1(\alpha_1),u_2(\alpha_1)$ are two real-analytic $\mathcal H_f(B=0,\Gamma)$ functions with $\mathbf k_0:=\frac{1}{3}(\eta_1 +\eta_2) = -i$, where $\eta_1,\eta_2$ is the dual basis as in Subsection \ref{subsec: BM_model}. Here, $ u_1(\alpha_1), u_2(\alpha_1) \in \ker_{L^2_0}(D_c(\alpha_1,B=0))$ are the unique real-analytic functions satisfying \begin{equation}
\label{eq:identity}
\begin{pmatrix} 2D_{\bar z} + \mathbf k_0 & 0 \\ 0 &  2D_{\bar z} + \mathbf k_0 \end{pmatrix}  u_1(\alpha_1=0) =0, \quad  \quad \begin{pmatrix} 2D_{\bar z} & 0 \\ 0 & 2D_{\bar z} \end{pmatrix} u_2(\alpha_1=0) =0.
 \end{equation} 
 In addition, for $\alpha_1 \notin \mathcal A$, the Chern number of the $2p$ flat bands is $-2$.
\end{theo}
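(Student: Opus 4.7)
The plan is to exploit the off-diagonal block structure
$H_c=\begin{pmatrix}0&D_c^*\\ D_c&0\end{pmatrix}$,
which gives the fiberwise decomposition $\ker H_{c,\bfk}=\ker(D_c^*+\bar\bfk)\oplus\ker(D_c+\bfk)$ inside $\mathcal H_f^{4}$. For $B>0$ and $\alpha_1=0$ the first summand is $\ker\diag(a_\bfk^*,a_\bfk^*)=\{0\}$ by Theorem \ref{theo:Fredholm}, and Corollary \ref{cor: index_D_c} shows the relevant operator has negative Fredholm index $-2p$; combining real-analytic dependence on $\alpha_1$ with the fact that no dimension jump can occur away from $\mathcal A$ should force $\ker(D_c^*+\bar\bfk)=\{0\}$ throughout $\alpha_1\notin\mathcal A$, establishing \eqref{eq: kerH_c,k} and reducing the problem to computing $\ker(D_c(\alpha_1,B)+\bfk)$.

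For this I would use a multiplicative construction. Suppose $u(\alpha_1)\in\ker_{L^2_0}(D_c(\alpha_1,0)+\bfk_*)$ is a Bloch zero mode of the zero-field chiral Dirac operator at momentum $\bfk_*$, and $v_\bfk\in\ker_{\mathcal H_f}(a_{\bfk-\bfk_*})$ is a magnetic Landau-level ground state. Using that $U,V$ are independent of $B$ and the Leibniz rule $a(u^i v_\bfk)=(2D_{\bar z}u^i)v_\bfk+u^i\cdot a v_\bfk$, one computes
\[
(D_c(\alpha_1,B)+\bfk)(u\cdot v_\bfk)=\big((D_c(\alpha_1,0)+\bfk_*)u\big)\cdot v_\bfk=0,
\]
so $u\cdot v_\bfk\in\ker(D_c(\alpha_1,B)+\bfk)$; the equivariance of $u\cdot v_\bfk$ in $\mathcal H_f$ matches because the magnetic phases $e^{i\varphi_\bfn}$ are absorbed by $v_\bfk$ while the Bloch phase carried by $u$ accounts exactly for the $\diag(\omega^{\bfn\cdot\lambda},1)$ factor of the chiral $\mathscr L_\bfn$. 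The two protected families $u_1(\alpha_1),u_2(\alpha_1)$ correspond to the two symmetry-protected Bloch momenta $\bfk_*=\bfk_0$ and $\bfk_*=0$; they exist at $\alpha_1=0$ explicitly as the solutions of \eqref{eq:identity}, and persist in a real-analytic way for all $\alpha_1\notin\mathcal A$ by the characterization of magic angles recalled in Subsection \ref{subsec: charofmagicangles}. Since $\dim\ker_{\mathcal H_f}(a_{\bfk-\bfk_0})=\dim\ker_{\mathcal H_f}(a_\bfk)=p$ by Theorem \ref{theo:Fredholm} and the two resulting families have distinct Bloch phases, we obtain a $2p$-dimensional subspace inside $\ker(D_c+\bfk)$.

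Matching the index $2p$ of Corollary \ref{cor: index_D_c} forces this to be the full kernel once we prove the upper bound $\dim\ker(D_c+\bfk)\le 2p$. I expect this upper bound, uniformly in $\bfk\in\CC/\Gamma^*_{\operatorname{mag}}$, to be the main technical obstacle; the route I would take is a deformation argument in $\alpha_1$ starting from $\alpha_1=0$, where the kernel can be computed explicitly from Theorem \ref{theo:Fredholm}, using that any jump in $\dim\ker(D_c(\alpha_1,B)+\bfk)$ with $\alpha_1$ would correspond, through the construction above applied in reverse, to a jump in $\dim\ker(D_c(\alpha_1,0)+\bfk')$ for some $\bfk'$, i.e.\ to $\alpha_1$ being a magic angle. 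The uniform gap away from zero then follows from the constancy of the kernel dimension on the compact torus and continuous dependence of the fiber operators on $\bfk$.

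For the Chern number, additivity under the direct sum \eqref{eq: kerD_c} reduces the computation to $c_1=c_1\big(u_1\otimes\ker(a_{\bfk-\bfk_0})\big)+c_1\big(u_2\otimes\ker(a_\bfk)\big)$. The $\bfk$-independent factor $u_i(\alpha_1)$ does not affect the Chern class, so each summand equals the Chern number of the rank-$p$ lowest Landau bundle $\bfk\mapsto\ker_{\mathcal H_f}(a_\bfk)$ over the magnetic Brillouin zone, which is $-1$: one can read this off directly from the explicit holomorphic parametrization of $\ker_{\mathcal H_f}(a_\bfk)$ via $\tilde\sigma$ in Theorem \ref{theo:Fredholm}, identifying it with a degree $-1$ theta line bundle. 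Summing the two contributions yields the total Chern number $-2$, independently of $p$.
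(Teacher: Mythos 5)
Your inclusion step and your dimension count are sound and essentially the paper's: the Leibniz computation showing $(D_c(\alpha_1,B)+\bfk)(u\,v_\bfk)=\big((D_c(\alpha_1,0)+\bfk_*)u\big)v_\bfk$ with $v_\bfk\in\ker_{\mathcal H_f}(a_{\bfk-\bfk_*})$, the identification of the two protected families at $\bfk_*=\bfk_0$ and $\bfk_*=0$, and the use of Corollary \ref{cor: index_D_c} to upgrade a $2p$-dimensional subspace to the full kernel. Your Chern argument (bundle isomorphism $v\mapsto u_i v$ plus the Chern number $-1$ of the rank-$p$ lowest-Landau bundle) differs from the paper, which instead deforms $\alpha_1$ to $0$ using norm continuity of the projection and quotes the Landau-level value from the literature; your route is plausible but the ``degree $-1$ theta line bundle'' identification is asserted rather than proved.

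The genuine gap is the step you yourself flag as the main obstacle: proving $\ker_{\mathcal H_f}(D_c(\alpha_1,B)^*+\overline{\bfk})=\{0\}$ for all $\bfk$ when $\alpha_1\notin\mathcal A$, equivalently the upper bound $\dim\ker(D_c(\alpha_1,B)+\bfk)\le 2p$. Your proposed deformation argument does not close it. First, kernel dimensions of an analytic Fredholm family are only upper semicontinuous, so ``no dimension jump away from $\mathcal A$'' is exactly what must be shown, not an input; such jumps at isolated momenta genuinely occur in this family (see the band touching at $\bfk=i/2$ in Theorem \ref{theo:simple}(2)), so real-analyticity alone excludes nothing. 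Second, ``the construction above applied in reverse'' would require dividing a magnetic zero mode by $v_\bfk$, which has $p$ zeros per magnetic cell, so the quotient acquires poles and need not lie in any $\ker(D_c(\alpha_1,0)+\bfk')$. Third, even if it did, a nontrivial zero-field kernel at a single $\bfk'$ does not make $\alpha_1$ magic: protected zero modes exist at special momenta for every $\alpha_1$, and magic angles are characterized by a zero mode at \emph{every} $\bfk$. The paper's argument supplies precisely the missing bridge, and in the adjoint direction: assuming $\varphi_{\overline{\bfk_1}}\neq 0$ in $\ker_{\mathcal H_f}(D_c(\alpha_1,B)^*+\overline{\bfk_1})$, it multiplies by $Qu_{\bfk-\bfk_1}$, a zero mode of $2D_z+\overline{A(z)}+\overline{\bfk}-\overline{\bfk_1}$ from Remark \ref{remark: negative_mag}; the anti-holomorphic magnetic factor cancels $-\overline{A(z)}$ in $D_c^*$, so the product lies in $\ker(D_c(\alpha_1,0)^*+\overline{\bfk})$ for \emph{all} $\bfk$ (see \eqref{eq: product}), forcing $\alpha_1\in\mathcal A$ and yielding the contradiction. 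This multiplication (not division) avoids poles and produces the all-$\bfk$ statement your sketch cannot reach; without it, or an equivalent substitute, equality in \eqref{eq: kerD_c}, the identity \eqref{eq: kerH_c,k}, and the uniform gap all remain unproved.
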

\begin{proof}
To show \eqref{eq: kerH_c,k}, by \eqref{eq:DcB}, it is enough to show 
\begin{equation}
    \label{eq: kerD_c*}
    \ker_{\mathcal H_f}(D_c(\alpha_1, B)^* + \overline{\bfk}) = \{0\},\quad \forall \bfk \in \CC. 
\end{equation}
Assume for some $\bfk_1 \in \CC$, there is $\varphi_{\overline{\bfk_1}}\neq 0 \in \ker_{\mathcal H_f}(D_c(\alpha_1, B)^* + \overline{\bfk_1})$.
By Remark \ref{remark: negative_mag}, for any $\bfk$, there is $Qu_{\bfk - \bfk_1}\in \ker_{\mathcal H_f}(2D_z + \overline{A(z)} + \overline{\bfk} - \overline{\bfk_1})$. Then $\psi_\bfk:=\varphi_{\overline{\bfk_1}} (Qu_{\bfk - \bfk_1}) $ satisfies 
\begin{equation}
    \label{eq: product}
\begin{split}
    D_c(\alpha_1, B = 0)^*\psi_\bfk
 & = \left[(D_c(\alpha_1, B)^* + \overline{\bfk_1} + \overline{A(z)} - \overline{\bfk_1})\varphi_{\overline{\bfk_1}} \right]Qu_{\bfk - \bfk_1} + \varphi_{\overline{\bfk_1}}(2D_z Qu_{\bfk - \bfk_1})\\
 & = (\overline{A(z)} - \overline{\bfk_1})\psi_\bfk + (\overline{\bfk_1} - \overline{\bfk} - \overline{A(z)})\psi_\bfk = -\overline{\bfk}\psi_{\bfk}.
\end{split}
\end{equation}
This implies that $\psi_{\bfk}\in \ker_{\mathcal H_f}(D_c(\alpha_1, B = 0)^* + \overline{\bfk})$ for all $\bfk$; thus, $\alpha_1\in \mathcal A$. We get a contradiction.

To show \eqref{eq: kerD_c} and \eqref{eq:identity}. We first notice that for $\bfk_0 = \frac{1}{3}(\eta_1 + \eta_2) =\frac{\omega^2-\omega}{\sqrt{3}}=\frac{2i}{\sqrt{3}} \Im(\omega^2)= -i$,
\[
u_1(\alpha_1 = 0) = e^{-i\langle z, \bfk_0\rangle}\begin{bmatrix}
    1 \\ 0
\end{bmatrix}, \qquad u_2(\alpha_1 = 0) = \begin{bmatrix}
    0\\ 1
\end{bmatrix}
\]
satisfy \eqref{eq:identity} directly. Furthermore, we can verify $u_1(\alpha_1 = 0) \otimes \ker_{\mathcal H_f}(a_{\bfk - \bfk_0}) \in \mathcal H_f(\alpha_1 = B = 0, \Gamma)$ by noticing that the Abelian group that commutes with $D_c(\alpha_1 = B = 0, \Gamma)$ is
\[
\mathcal L_\bfn = \diag(\omega^{n_1 + n_2},1) \otimes T_1^{n_1}T_2^{n_2}, \quad T_j u(z) = u(z + \zeta_j) 
\]
and $u_1(\alpha_1 = 0) \otimes u_{\bfk - \bfk_0}$ is invariant under $\mathcal L_\bfn$ because
\[
\omega T_j(u_1(\alpha_1 = 0)) = \omega e^{-i\langle \zeta_j, \bfk_0\rangle} u_1(\alpha_1 = 0) =  u_1(\alpha_1 = 0), \quad j = 1,2, 
\]
where we used $\omega = \exp(\frac{2\pi i}{3})$ and  $\langle \zeta_j, \bfk_0\rangle = \frac{1}{3}\langle \zeta_j, \eta_1 + \eta_2\rangle = \frac{2\pi}{3}$. Similar arguments work for $u_2(\alpha_1 = 0)\otimes \ker_{\mathcal H_f}(a_\bfk)$. 

By the symmetry arguments and Rellich's theorem in \cite[Prop. $2.2$]{BEWZ20a}, there are real analytic families $\mathbb R \ni \alpha_1 \mapsto u_1(\alpha_1)$ and $\mathbb R\ni \alpha_1 \mapsto u_2(\alpha_1)$ such that $u_1(\alpha_1)\in \ker_{\mathcal H_f}(D_c(\alpha_1, B = 0) + \bfk_0), u_2(\alpha_1)\in \ker_{\mathcal H_f}(D_c(\alpha_1, B = 0))$. As a result, we have
\begin{equation}
    \label{eq: space_included}
(u_1(\alpha_1) \otimes \ker_{\mathcal H_f}(a_{\mathbf k-\mathbf k_0})) \oplus (u_2(\alpha_1) \otimes \ker_{\mathcal H_f}(a_{\mathbf k}))\subset \ker_{\mathcal H_f}(D_c(\alpha_1,B)+\mathbf k).
\end{equation} 
In fact, if $u_{\bfk - \bfk_0}\in  \ker_{\mathcal H_f}(a_{\mathbf k-\mathbf k_0})$, then $\phi_\bfk = u_1(\alpha_1)u_{\bfk - \bfk_0}$ satisfies
\[
\begin{split}
    D_c(\alpha_1, B)(\phi_\bfk) &= \left[(D_c(\alpha_1, B = 0) + \bfk_0 - A(z)- \bfk_0)u_1(\alpha_1)\right]u_{\bfk - \bfk_0} + u_1(\alpha_1)2D_{\overline{z}}u_{\bfk - \bfk_0}\\
    &= (-A(z) - \bfk_0)\phi_\bfk + (\bfk_0 - \bfk + A(z))\phi_\bfk = -\bfk \phi_\bfk.
\end{split}
\]
Similar argument works for $u_2(\alpha_1)\otimes \ker_{\mathcal H_f}(a_\bfk)$. Hence we get \eqref{eq: space_included}. 

To get equality \eqref{eq: kerD_c}, we notice that, by Theorem \ref{theo:Fredholm}, $\dim(a_\bfk) = p$; thus
\[
\dim (u_1(\alpha_1) \otimes \ker_{\mathcal H_f}(a_{\mathbf k-\mathbf k_0})) \oplus (u_2(\alpha_1) \otimes \ker_{\mathcal H_f}(a_{\mathbf k})) = 2p. 
\]
While by Cor. \ref{cor: index_D_c} and \eqref{eq: kerD_c*}, 
\[
\dim(\ker_{\mathcal H_f}(D_c(\alpha_1,B)+\mathbf k)) = \operatorname{ind}(\ker_{\mathcal H_f}(D_c(\alpha_1,B)+\mathbf k)) - \dim(\ker_{\mathcal H_f}(D_c(\alpha_1,B)^*+\overline{\bfk}) = 2p.
\]
Thus the inequality in \eqref{eq: space_included} is actually an equality. Thus we get \eqref{eq: kerD_c}.

In particular, since bands depend continuously on $\mathbf k$ and $\dim \ker(D_c(\alpha_1,B)+\mathbf k) = 2p$ is independent of $\mathbf k$, the $2p$ flat bands of the Hamiltonian at zero energy are gapped away from the rest of the spectrum.

By norm continuity of the spectral projection $\mathcal A\ni \alpha_1 \mapsto \indic_{\{0\}}(H_c(\alpha_1, B))$, the Chern number does not depend on $\alpha_1,$ see \cite{B87}. Using the value of the Chern number for $\alpha_1=0$, see \cite[Lemma $5$]{BES94} and the additivity of Chern numbers under direct sums of projections, see \cite{B87}, implies the claim.
\end{proof}
For general constant magnetic fields, we conclude
\begin{corr}
  Let $B>0$ be a constant magnetic field. Let $\alpha_1 \notin \mathcal A,$ then the Hamiltonian exhibits an eigenvalue of infinite multiplicity at energy $0$, that is gapped away from the rest of the spectrum, such that 
    \begin{equation}
    \begin{split}
    \label{eq:nullspace4}
\ker_{L^2(\CC)}(H_{c}(\alpha_1, B)) &=\ker_{L^2(\CC)}(D_c(\alpha_1,B)) \times \{0_{\CC^2}\}\\
\ker_{L^2(\CC)}(D_c(\alpha_1,B))&= (u_1(\alpha_1) \otimes \ker_{L^2(\CC)}(a)) \oplus (u_2(\alpha_1) \otimes \ker_{L^2(\CC)}(a))
     \end{split}
     \end{equation}
     with $\ker_{L^2(\CC)}(a)$ as in \eqref{eq:nullsp} and $u_1,u_2$ as in Theorem \ref{theo:away_mag_angl}.
 In addition, for $\alpha_1 \notin \mathcal A$, the Chern number of the eigenvalue at zero energy is $-2$. 
\end{corr}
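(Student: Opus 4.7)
The plan is to run the argument of Theorem \ref{theo:away_mag_angl} on $L^2(\CC)$ instead of on the magnetic fiber $\mathcal{H}_f$, replacing the $p$-dimensional magnetic Bargmann space $\ker_{\mathcal{H}_f}(a_\bfk)$ by the infinite-dimensional Bargmann space $\ker_{L^2(\CC)}(a)$ described in \eqref{eq:nullsp}. First I would use the off-diagonal form \eqref{eq:DcB} to split $\ker H_c = \ker D_c \times \ker D_c^*$. To see $\ker_{L^2(\CC)}(D_c(\alpha_1,B)^*) = \{0\}$, I would assume for contradiction that $0 \neq \varphi \in \ker D_c^*$ and form the product $\psi := \varphi \cdot (Qu)$ with $u \in \ker_{L^2(\CC)}(a)$ and $Q$ the anti-linear involution from Remark \ref{remark: negative_mag}; the cancellation identity \eqref{eq: product} (applied with $\bfk_1 = 0$) yields $\psi \in \ker(D_c(\alpha_1, 0)^*)$ nontrivially, forcing $\alpha_1 \in \mathcal{A}$, a contradiction.

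To construct $\ker_{L^2(\CC)}(D_c(\alpha_1, B))$ I would use the product ansatz $\phi = u_j(\alpha_1) \cdot g$. A Leibniz-rule computation yields
\[
D_c(\alpha_1, B)(u_j g) = (D_c(\alpha_1, 0)u_j)\, g + u_j \cdot (ag),
\]
so in view of \eqref{eq:identity}, choosing $g \in \ker_{L^2(\CC)}(a - \bfk_0)$ for $j=1$ and $g \in \ker_{L^2(\CC)}(a)$ for $j=2$ produces elements of $\ker D_c(\alpha_1, B)$. The two spaces $\ker(a-\bfk_0)$ and $\ker(a)$ are infinite-dimensional Bargmann spaces that can be identified up to a Gaussian gauge/translation of the origin, giving the tensor-product description in \eqref{eq:nullspace4}. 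To establish that this exhausts the kernel and that the multiplicity is infinite, I would apply the magnetic Bloch--Floquet decomposition of \S\ref{sec:floquet} to a rational-flux supercell $\Gamma_\lambda$ (for each $B>0$ one may pick $\lambda \in \ZZ^2$ with $B|\CC/\Gamma_\lambda| \in 2\pi \ZZ$) and integrate the fiberwise decomposition \eqref{eq: kerD_c} back to $L^2(\CC)$ via the inverse Gelfand transform of Theorem \ref{theo: Floquet theory}.

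The spectral gap at $0$ descends to $L^2(\CC)$ because the fiber dimension $2p$ in Theorem \ref{theo:away_mag_angl} is independent of $\bfk$ and the next eigenvalue is bounded away from zero uniformly in $\bfk$, which gives a uniform gap for the direct integral. The Chern number $-2$ transports from the fiber result: by the Bloch--Floquet representation the projection onto the zero-energy eigenspace is a direct integral of the fiber projections, and the trace formula defining $c_1$ integrates to the fiber Chern number computed in Theorem \ref{theo:away_mag_angl}, independent of the chosen supercell. The main obstacle is the completeness step: on $L^2(\CC)$ the operator $D_c(\alpha_1, B)$ is \emph{not} Fredholm (its kernel is infinite-dimensional and non-zero Landau-type bands accumulate), so the clean index count $2p$ used in the theorem is unavailable; one has to verify directly that the direct integral of the fiberwise tensor-product kernels \eqref{eq: kerD_c} truly reconstitutes the full Bargmann-space kernel on $L^2(\CC)$, which amounts to a Plancherel identity for the magnetic Gelfand transform restricted to the Bargmann subspace.
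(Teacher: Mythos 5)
There is a genuine gap, and it sits exactly where the corollary goes beyond Theorem \ref{theo:away_mag_angl}: the statement is for an \emph{arbitrary} constant field $B>0$, i.e.\ the flux $\Phi=B\vert\CC/\Gamma\vert$ need not lie in $2\pi\QQ$. Your argument hinges on the claim that ``for each $B>0$ one may pick $\lambda\in\ZZ^2$ with $B\vert\CC/\Gamma_\lambda\vert\in2\pi\ZZ$''. Since $B\vert\CC/\Gamma_\lambda\vert=\lambda_1\lambda_2\,B\vert\CC/\Gamma\vert$, such a supercell exists if and only if the flux is rational; for irrational flux there is no commensurate lattice satisfying Assumption \ref{assump: magnetic potential}, hence no magnetic Bloch--Floquet decomposition, no band structure, no fiber dimension $2p$, and no fiber Chern number to integrate. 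So the completeness of the kernel decomposition \eqref{eq:nullspace4}, the uniform gap, and the value $-2$ of the Chern number are all left unproved precisely in the case the corollary is designed to cover. (You flag the non-Fredholmness of $D_c(\alpha_1,B)$ on $L^2(\CC)$ yourself, but propose to repair it by the same Bloch--Floquet/Plancherel route, which is unavailable at irrational flux.)

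The paper's proof takes a different route that circumvents this: it takes the rational-flux case as already settled by Theorem \ref{theo:away_mag_angl} and then invokes stability of spectral gaps under variation of the magnetic field (Avron--Simon \cite{AvSi}, Nenciu \cite{N86}, Sj\"ostrand \cite{Sj89}), extended to the chiral Hamiltonian, to conclude that the gap at zero persists for \emph{all} fluxes; the Chern number for general $B$ is then obtained not from a Bloch bundle but from the Streda formula \cite{B87} together with \cite{BKZ22}, which makes sense without periodicity. If you want to salvage your approach, the explicit product construction $u_j(\alpha_1)\cdot g$ with $g$ in the Bargmann space \eqref{eq:nullsp} does produce an infinite-dimensional subspace of $\ker_{L^2(\CC)}(D_c(\alpha_1,B))$ for every $B>0$ (that part is fine and mirrors the fiberwise computation), but you still need a flux-independent mechanism --- such as the gap-stability plus Streda argument above --- to show the kernel is exhausted, gapped, and carries Chern number $-2$ when $\Phi\notin2\pi\QQ$.
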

\begin{proof}
By the previous Theorem \ref{theo:away_mag_angl}, it follows that for every rational magnetic flux, the flat band is gapped from the rest of the spectrum. By extending the gap stability of magnetic Schr\"odinger operators to the chiral Hamiltonian, see Avron-Simon \cite{AvSi}, Nenciu \cite{N86}, Sj\"ostrand \cite[Prop.$2.4$]{Sj89}, it follows that the gap persists to any magnetic flux. One can then use the Streda formula \cite{B87} and \cite[Prop. $3.2$]{BKZ22} to conclude that the Chern number of the spectrum at zero energy remains $-2.$
\end{proof}    
The previous Theorem \ref{theo:away_mag_angl}, can be easily extended when there are additional periodic magnetic potentials. 

Indeed, any $L^2(\CC/\Gamma_{\operatorname{mag}})$ function $f$ is, for some suitable coefficients $(f_{\mathbf k}) \in {\ell^2(\Gamma^*_{\operatorname{mag}})}$, of the form $f(z)=\sum_{\mathbf k\in\Gamma_{\operatorname{mag}}^*}f_{\mathbf k} e^{i\Re(\mathbf k \bar z)}.$ Such a magnetic potential has a Fourier expansion 
\begin{equation}
\label{eq:A_per}
A_{\operatorname{per}}(z)=\sum_{\mathbf k \in\Gamma^*_{\operatorname{mag}} \setminus\{0\}}A_{\mathbf k} e^{i\Re(\mathbf k \bar z)}\in L^2(\mathbb C/\Gamma_{\operatorname{mag}}).
\end{equation}
\begin{prop}
\label{prop:periodic magnetic field}
Consider a lattice $\Gamma_{\operatorname{mag}}$ and the setting of Theorem \ref{theo:away_mag_angl} with additional periodic vector potential $A_{\operatorname{per}} \in C^{\infty}(\CC/\Gamma_{\operatorname{mag}})$ with $A_{\operatorname{per}}(0)=0$. Let $\alpha_1 \notin \mathcal A,$ then the Hamiltonian exhibits a flat band of multiplicity $2p$ at energy zero, that is uniformly in $\bf k$ gapped away from the rest of the spectrum. Thus, for all $\mathbf k \in \CC$ with $u_1,u_2$ and $\mathbf k_0$ as in \eqref{eq: kerD_c}
 \begin{equation}
     \begin{split}
\ker_{\mathcal H_f}(H_{c,\mathbf k}) &=\ker_{\mathcal H_f}(D_c(\alpha_1,B)+\mathbf k) \oplus \{0_{\CC^2}\}\\
  \ker_{\mathcal H_f}(D_c(\alpha_1, B)+\mathbf k) &=(u_1(\alpha_1) \otimes \ker_{\mathcal H_f}(a_{\mathbf k-\mathbf k_0})) \oplus (u_2(\alpha_1) \otimes \ker_{\mathcal H_f}(a_{\mathbf k})), 
       \end{split}
 \end{equation}
    where $\alpha_1 \mapsto u_1(\alpha_1),u_2(\alpha_1)$ are two real-analytic $L^2(\CC/\Gamma_{\operatorname{mag}})$ functions.
 In addition, for $\alpha_1 \notin \mathcal A$, the Chern number of the $2p$ flat bands is $-2.$
\end{prop}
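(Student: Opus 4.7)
The strategy is to reduce Proposition \ref{prop:periodic magnetic field} to Theorem \ref{theo:away_mag_angl} via a scalar gauge transformation that removes $A_{\operatorname{per}}$. Since the hypothesis ensures $A_{\operatorname{per}}$ has vanishing zero Fourier mode (cf.\ \eqref{eq:A_per}), the torus $\bar\partial$-equation $2i\partial_{\bar z}\varphi = A_{\operatorname{per}}$ admits a smooth $\Gamma_{\operatorname{mag}}$-periodic solution $\varphi\in C^\infty(\CC/\Gamma_{\operatorname{mag}})$ constructed mode-by-mode from the Fourier series; smoothness of $A_{\operatorname{per}}$ gives rapid decay and hence $\varphi$ is bounded, so $e^{\pm\varphi}$ are bounded multiplication operators on $\mathcal H_f$.

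Define $M := e^{\varphi}\otimes \mathrm{Id}_{\CC^2}$. The direct computation $e^{-\varphi}(2D_{\bar z})e^{\varphi} = 2D_{\bar z} - 2i\partial_{\bar z}\varphi$, together with the fact that $M$ commutes with $A_{\operatorname{con}}, U, U_-$, yields the intertwining identity
\begin{equation*}
M^{-1} D_c(\alpha_1, B) M = D_c(\alpha_1, B + B_{\operatorname{per}}),
\end{equation*}
where on the left only the constant part $A_{\operatorname{con}}$ of the vector potential enters. Periodicity of $\varphi$ implies that $M$ commutes with every magnetic translation $\mathscr L_\bfn$ from Lemma \ref{lemm: Abelian_group}, so $M$ preserves the magnetic Bloch-Floquet decomposition and maps each fiber isomorphically onto the corresponding fiber of the perturbed operator. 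Applying Theorem \ref{theo:away_mag_angl} to the constant-field side, we see that $\ker_{\mathcal H_f}(D_c(\alpha_1,B+B_{\operatorname{per}})+\bfk)$ has the same product structure as in \eqref{eq: kerD_c} but with $u_j(\alpha_1)$ replaced by $\tilde u_j(\alpha_1):=e^{-\varphi}u_j(\alpha_1)$; real-analyticity in $\alpha_1$ is inherited since $\varphi$ is independent of $\alpha_1$. Uniformity in $\bfk$ of the spectral gap survives because $\|M\|\,\|M^{-1}\|$ is $\bfk$-independent, and $\ker_{\mathcal H_f}(D_c(\alpha_1, B + B_{\operatorname{per}})^* + \bar\bfk)=\{0\}$ follows from its Theorem \ref{theo:away_mag_angl} counterpart via the adjoint intertwining $M^{\ast} D_c(\alpha_1,B)^* (M^{\ast})^{-1} = D_c(\alpha_1, B+B_{\operatorname{per}})^*$, giving \eqref{eq: kerH_c,k} as stated.

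The Chern number requires extra care: since $M$ is not unitary, it does not directly transport the orthogonal spectral projection onto the flat band. I would therefore use a homotopy argument. For $t\in[0,1]$, set $A^{(t)}:=A_{\operatorname{con}}+tA_{\operatorname{per}}$ and repeat the above gauge construction with $t\varphi$ in place of $\varphi$. At every $t$, the reasoning above shows that $H_{c}(\alpha_1, B+tB_{\operatorname{per}})$ has a flat band of multiplicity $2p$ at zero energy, uniformly gapped from the rest of the spectrum, and the spectral projection depends norm-continuously on $t$. By the standard invariance of the Chern number under continuous deformations of gapped spectral projections (e.g.\ \cite{B87}), $c_1$ is independent of $t$, and equals $-2$ at $t=0$ by Theorem \ref{theo:away_mag_angl}, hence also at $t=1$. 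This Chern-number step is where I expect the main subtlety; the rest is essentially routine gauge reduction once the $\bar\partial$-problem on the torus is solved.
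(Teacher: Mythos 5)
Your proposal is correct and follows essentially the same route as the paper: you solve the torus $\bar\partial$-problem $2i\partial_{\bar z}\varphi=A_{\operatorname{per}}$ mode-by-mode in Fourier series (possible since the zero mode vanishes) and use the resulting non-unitary gauge factor $e^{\pm\varphi}$ to put zero modes of the perturbed operator in one-to-one correspondence with those of the constant-field operator, exactly as in the paper's proof. Your treatment of the Chern number by switching on $tA_{\operatorname{per}}$ and invoking norm continuity of the gapped spectral projection is also precisely the paper's argument, just spelled out in more detail.
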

\begin{proof}
  Notice that for arbitrary $\phi$, we have $e^{-\phi}a_\bfk e^{\phi} = a_\bfk - 2i\partial_{\bar{z}}\phi(z).$
   If we can find $\phi$ s.t. $
    2i\partial_{\bar{z}} \phi(z) = A_{\operatorname{per}},$
    then for $\psi_{\bfk}$ solving $a_{\mathbf k}\psi_{\bfk} = 0$ we find $ (a_{\bfk} - A_{\operatorname{per}}) e^{-\phi}\psi_{\bfk}=0.$
   However, by using the Fourier series, it is easy to see that 
$
     \phi(z) =-i \sum_{\mathbf k \in\Gamma^*_{\operatorname{mag}} \setminus\{0\}}A_{\mathbf k} \frac{e^{i\Re(\mathbf k \bar z)}}{\mathbf k}$
   is a solution. Thus, zero mode solutions for non-vanishing periodic magnetic potentials are in one-to-one correspondence with solutions for zero periodic magnetic potential. Finally, by switching on the periodic potential gradually, norm continuity of the Chern number implies that the Chern numbers also coincide.  
\end{proof}

\subsection{Flat bands at magic angles}
In Theorem \ref{theo:away_mag_angl}, we showed that the chiral Hamiltonian with commensurable, i.e. rational, constant magnetic field exhibits $2p$ flat bands at energy zero that are gapped away from the rest of the spectrum. We shall now study what happens at magic angles (of the non-magnetic Hamiltonian). To do so, we need some preliminary discussion of the flat band wave function in the absence of magnetic fields 

\subsubsection{Preliminaries}
\begin{defi}[Multiplicity of magic angles]
We call a magic angle $\alpha_1 \in \mathcal A$ of the non-magnetic Hamiltonian $n$-fold degenerate, if $\dim \ker_{L^2(\CC/\Gamma)}(D_c(\alpha_1,B=0)+\mathbf k)=n$ for all $\bfk \in \CC.$ In the case of $n=1$, we call the magic angle also \emph{simple}.
\end{defi}

Our main focus is on simple and two-fold degenerate magic angles, since they are the only ones that generically occur in the non-magnetic chiral limit of twisted bilayer graphene (see \cite[Theo.$3$]{BHZ3}). 

Notice that a simple magic angle $\alpha_1\in \mathcal A$ in our definition is equivalent to the flat band at $0$ being two-fold degenerate for the chiral model $H_c(\alpha_1, B = 0)$. This is due to the conjugate relation 
\[
    Q(D_c(\alpha_1, B) + \bfk)Q = D_c(\alpha_1, -B)^* + \overline{\bfk}
\]
where $Qu(z) = \overline{u(-z)}$ and the Definition \eqref{eq:DcB}.

If a magic angle $\alpha_1\in \mathcal A$ is simple, then $u_{\mathbf 0} := u_\bfk|_{\bfk = \mathbf 0}\in \ker_{\mathcal H_f(B=0)} D_c(\alpha_1, B=0)$ has a unique zero $u_{\mathbf 0}(-z_S)=0$ where $z_S:=\frac{4}{9}\pi i (\omega^2-\omega)=\frac{4\pi}{3\sqrt{3}}$, see \cite[Theo.$2$]{BHZ2}; if a magic angle $\alpha_1\in \mathcal A$ is two-fold degenerate, then any $u_{\mathbf 0}\in \ker_{\mathcal H_f(B=0)} D_c(\alpha_1, B=0)$ has two zeros $z_1,z_2$ with $z_1+z_2=-z_S$. 

We now move on how to construct elements in $\ker_{\mathcal H_f}(D_c(\alpha_1, B = 0) + \bfk)$, see \cite{BHZ2} for more details. We recall the definition of the Jacobi $\theta$-function:
\[\begin{split} \theta_1(\zeta\vert\omega) :=-\theta_{1/2,1/2}(\zeta\vert \omega) = -\sum_{n \in \ZZ}e^{\pi i(n+1/2)^2 \omega + 2\pi i(n+1/2)(\zeta+1/2)},\\
\theta_1(\zeta+m\vert \omega) = (-1)^m\theta_1(\zeta\vert m), \quad \theta_1(\zeta+n\omega \vert \omega) = (-1)^n e^{-\pi i n^2 \omega - 2\pi i \zeta n} \theta_1(\zeta\vert \omega).\end{split}\]
The function $\theta_1$ has zeros at the lattice points $\ZZ + \omega \ZZ$. We then define the meromorphic function 
\[ g_k(\zeta) = e^{2\pi  \zeta (k-\bar k)/\sqrt{3}} \frac{\theta_1(\zeta+k\vert\omega)}{\theta_1(\zeta\vert\omega)}\text{ with } k=\omega a-b, a,b \in \RR \]
and introduce $g_{\mathbf k}(z):=g_k(3z/(4\pi i \omega))$, with $\bfk = \sqrt{3}\omega k$, satisfying 
\[ g_{\mathbf k}(z+ a) = e^{i\langle  a, \mathbf k \rangle} g_{\mathbf k}(z), \quad a \in \Gamma.\]
We thus define $u_{\mathbf p}(z) := g_{\mathbf p}(z+z_S)u_{\mathbf 0}(z),$ for  $$\mathbf p \in  \Gamma^*_{\operatorname{mag}}/\Gamma^* = \Big\{ \tfrac{p_1\sqrt{3}\omega^2}{\lambda_1} -\tfrac{p_2 \sqrt{3}\omega}{\lambda_2},p_1 \in \ZZ_{\lambda_1},p_2\in \ZZ_{\lambda_2}\Big\}.$$ This implies that $$\ker_{\mathcal H_f(B=0,\Gamma_{\operatorname{mag}})}(D_c(\alpha_1,B=0)) = \operatorname{span}\{u_{\mathbf p}; \mathbf p \in \Gamma^*_{\operatorname{mag}}/\Gamma^*\}$$ with functions $u_{\mathbf p},$ as above, satisfying
\[D_c(\alpha_1,B=0) u_{\mathbf p} =0, \quad \mathscr L_a u_{\mathbf p} = e^{2\pi i\big(\tfrac{p_1a_1}{\lambda_1}+\tfrac{p_2a_2}{\lambda_2}\big) } u_{\mathbf p} \text{ with }p_1 \in \ZZ_{\lambda_1},p_2 \in \ZZ_{\lambda_2}.\]
Thus, we see that each function in $\ker_{\mathcal H_f(B=0,\Gamma_{\operatorname{mag}})}(D_c(\alpha_1,B=0)+\bf k)$ has precisely 
\begin{equation}
    \label{eq: number_of_zeros}
\vert \Gamma/ \Gamma_{\operatorname{mag}}\vert = q
\end{equation}
many zeros. The same is true for the nullspace of $D_c(\alpha_1,B=0)^*+\overline{\bfk}$ due to the symmetry 
\begin{equation}
\label{eq:symmetry}
Q(D_c(\alpha_1,B)+\bfk ) Q = D_c(\alpha_1,-B)^*+\overline{\bfk}
\end{equation}
with anti-linear symmetry $Qv(z) = \overline{v(-z)}.$ 

In a similar way, using
$F_{\bfk}(z):=-\frac{1}{\pi}\Big(\tfrac{2}{3}\Big)^3 F_k\Big(\frac{3z}{4\pi i \omega}\Big),$ for
$F_k(\zeta) = e^{\frac{2\pi (\zeta-\bar \zeta)k}{\sqrt{3}}} \frac{\theta_1(\zeta+k\vert \omega)}{\theta_1(\zeta\vert \omega)},$ we get the solution 
\[(D_c(\alpha_1,B=0)+\mathbf k)v_{\mathbf k}=0\]
for magic $\alpha_1 \in \mathcal A$ by setting 
\begin{equation}
\label{eq:translation_prop}
    v_{\bfk}(z)=F_{\bfk}(z+z_S)u_{\textbf{0}}(z).
\end{equation}
We recall that by Rellich's theorem \cite[Theo $3.9$]{Ka}, eigenvectors of $H_{c,\bfk}(\alpha_1)$ can be chosen as linearly independent real-analytic functions of $\alpha_1$. Thus, we can find an orthonormal set $X_{\mathbf k}$ of $2p$ normalized Bloch functions 
\begin{equation}
\label{eq:Bloch_frame}
 X_{\bfk}(\alpha_1):=\{u_{1,\bfk}(\alpha_1),....,u_{2p,\bfk}(\alpha_1)\}
 \end{equation}
such that $\dim \operatorname{span} X_{\bfk}=2p$ and each element of $X_{\bfk}(\alpha_1)$ depends real analytically on $\alpha_1 \in \mathbb R$ where $\operatorname{span} X_{\bfk} = \ker_{\mathcal H_f}(D_c(\alpha_1,B)+\mathbf k)$ for $\alpha_1 \notin \mathcal A.$

We can now analyze the zero band structure at non-magnetic magic angles. We call a function $f:\mathbb C \to \mathbb C^4$ $A$-sublattice polarized if it is of the form $f(z)=(f_1(z),f_2(z),0,0)$. A function $f(z)=(0,0,f_1(z),f_2(z))$ is called $B$-sublattice polarized. $A$-lattice polarized functions in the chiral limit also have the property of  \emph{vortexability}, see \cite{LVP23}, which is a criterion closely related to the fractional quantum Hall effect.

\begin{figure}
    \centering
    \includegraphics[width=6cm]{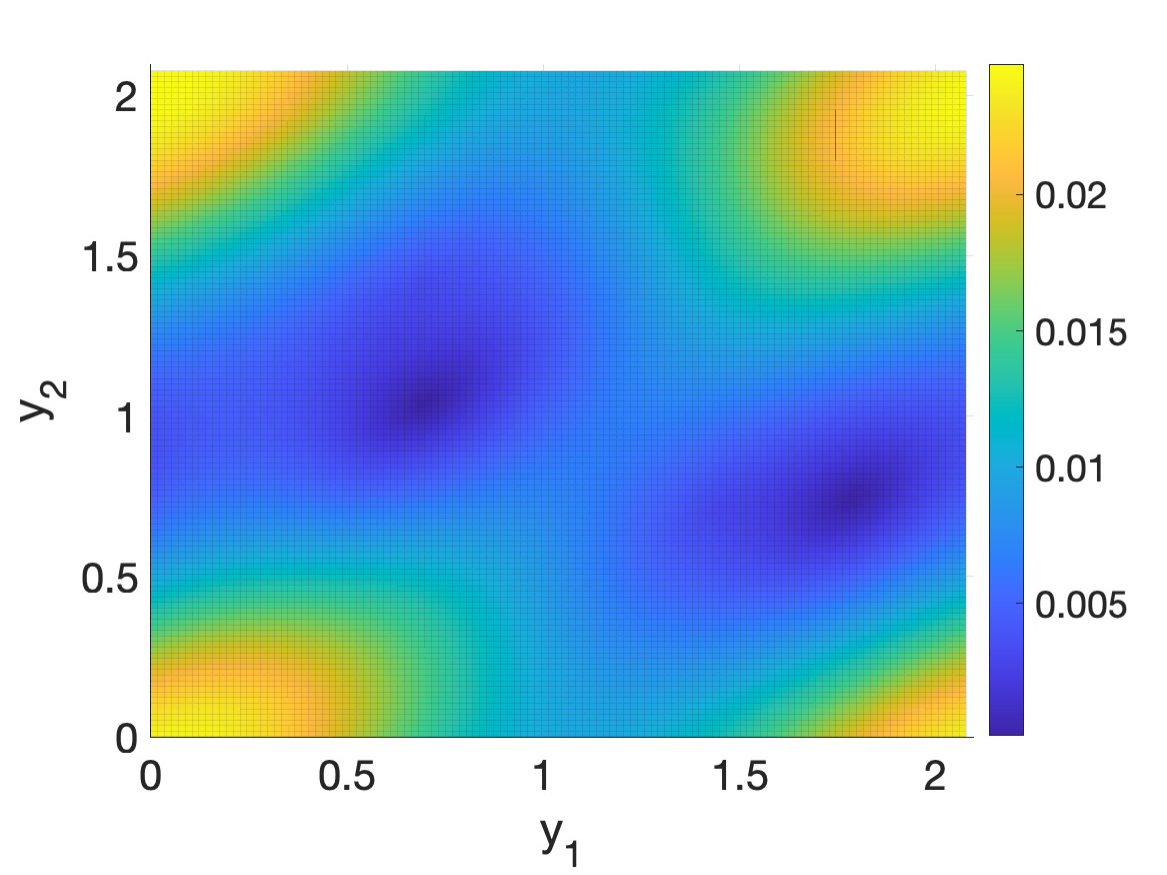}
    \includegraphics[width=6cm]{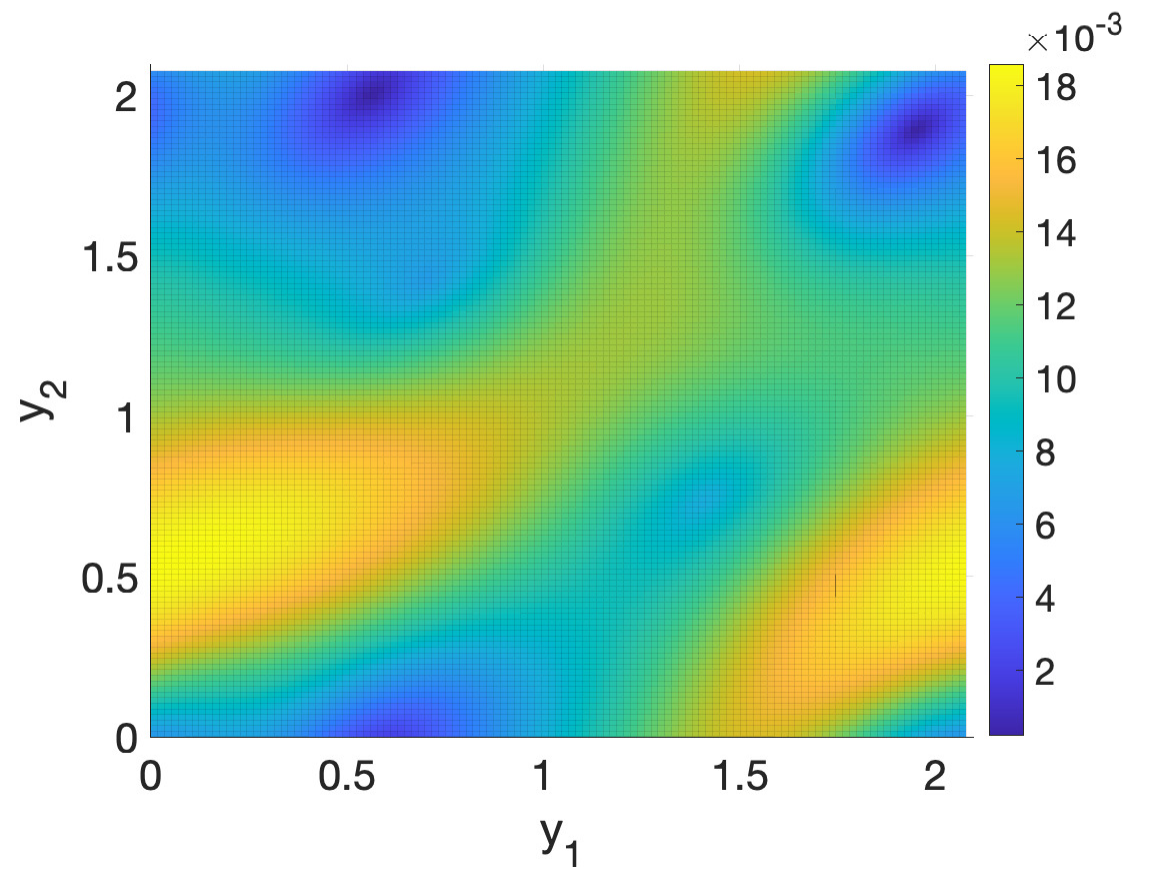}
    \caption{Figure showing modulus of two flat bands at $\alpha_1=0.5865$, the first magic angle, with $\Phi=2\pi.$ }
    \label{fig:my_label2}
\end{figure}
\begin{theo}[Simple magic angle]
   \label{theo:simple}
      Assume $A(z)$ satisfy Assumption \ref{assump: magnetic potential} with $A_{\operatorname{per}}=0$, $B>0$, and $\alpha_1 \in \mathcal A$ is a simple magic angle.
     \begin{enumerate}
     \item If $p>q$, all $2p$ flat bands at zero energy are $A$-lattice polarized 
     \begin{equation}
     \begin{split}
    \label{eq:nullspace2}
    &\ker_{\mathcal H_f}(H_{c,\mathbf k}) =\ker(D_c(\alpha_1,B)+\mathbf k) \times \{0_{\CC^2}\}\\
&\ker_{\mathcal H_f}(D_c(\alpha_1, B)+\mathbf k) =X_{\mathbf k}(\alpha_1) \text{ and }\ker_{\mathcal H_f}(D_c(\alpha_1, B)^*+\overline{\mathbf k})=\{0\}
     \end{split}
     \end{equation} 
     The Chern number of the $2p$ flat bands, that are, uniformly in $\bfk$, gapped away from the rest of the spectrum, is $-2$. 
     \item If $p=q$, and thus without loss of generality $p=q=1$, choose $\bfk_0 \notin \Gamma_{\text{mag}}^*$ and $u_1,u_2$ as in    \eqref{eq: kerD_c}. Then for $\eta_{\bfk-\bfk_0} \in \ker_{\mathcal H_f}(a_{\bfk-\bfk_0})\setminus\{0\}$ and $\eta_{\bfk} \in \ker_{\mathcal H_f}(a_{\bfk})\setminus\{0\}$
     \begin{equation}
    \label{eq:nullspace3}
     \ker_{\mathcal H_f}(D_c(\alpha_1, B)+\mathbf k) \supseteq \operatorname{span}\{u_1(\alpha_1) \eta_{\bfk-\bfk_0}, u_2(\alpha_1) \eta_{\bfk} \},
     \end{equation} 
    where equality holds for all points aside from $\bfk = i/2.$ 
     The Chern number of the two flat bands is $-2$. The zero energy flat bands are touched by bands from above and below (energetically) at $\bfk =i/2.$  All flat bands are $A$-lattice polarized.   
     \item Let $p<q$, then there are $2q$ bands at zero energy that are, uniformly in $\bfk$, gapped from the rest of the spectrum such that for all $\mathbf k \in \CC$
\[    \ker_{\mathcal H_f}(H_{c,\mathbf k})=\ker(D_c(\alpha_1,B)+\mathbf k) \times \{0_{\CC^2}\} 
    + \{0_{\CC^2}\} \times\ker(D_c(\alpha_1,B)^*+\overline{\mathbf k}) \]
    where $\operatorname{dim}\ker(D_c(\alpha_1,B)+\mathbf k) = p+q$ and $\operatorname{dim}\ker((D_c(\alpha_1,B)+\mathbf k)^*) = q-p.$
     The total Chern number of the flat bands is $0.$
     \end{enumerate}
\end{theo}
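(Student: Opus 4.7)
\smallsection{Plan} The approach is to combine the Fredholm-index identity $\dim\ker_{\mathcal H_f}(D_c(\alpha_1,B)+\mathbf k)-\dim\ker_{\mathcal H_f}(D_c(\alpha_1,B)^*+\overline{\mathbf k})=2p$ from Corollary \ref{cor: index_D_c} with an explicit product construction that generalizes Theorem \ref{theo:away_mag_angl}. The computation leading to \eqref{eq: product} shows that whenever $\phi\in\ker(D_c(\alpha_1,0)+\mathbf k_{\operatorname{sh}})$ on the magnetic torus and $\eta\in\ker_{\mathcal H_f}(a_{\mathbf k-\mathbf k_{\operatorname{sh}}})$, the product $\phi\eta$ lies in $\ker_{\mathcal H_f}(D_c(\alpha_1,B)+\mathbf k)$; dually, via Remark \ref{remark: negative_mag} and the symmetry \eqref{eq:symmetry}, any nonzero $\varphi\in\ker(D_c(\alpha_1,B)^*+\overline{\mathbf k_1})$ produces zero modes of $D_c(\alpha_1,0)^*+\overline{\mathbf k'}$ by multiplication with elements of $\ker_{\mathcal H_f}(a_{\mathbf k'-\mathbf k_1}(-B))$ using the conjugation $Q$. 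At a simple magic angle the preparatory subsection supplies $\dim\ker_{\mathcal H_f(B=0,\Gamma_{\operatorname{mag}})}(D_c(\alpha_1,0)+\mathbf k)=q$, with a basis of $A$-polarized theta-type functions each having exactly $q$ zeros in a magnetic fundamental domain.

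\smallsection{Case $p>q$} If $\ker(D_c(\alpha_1,B)^*+\overline{\mathbf k_1})\ne\{0\}$ with nonzero $\varphi$, then the map $\eta\mapsto\varphi\cdot Q\eta$ from the $p$-dimensional space $\ker_{\mathcal H_f}(a_{\mathbf k'-\mathbf k_1}(-B))$ into the $q$-dimensional space $\ker(D_c(\alpha_1,0)^*+\overline{\mathbf k'})$ has nonzero kernel since $p>q$, yielding some $\eta\ne 0$ with $\varphi\, Q\eta\equiv 0$---impossible because these real-analytic sections have only isolated zeros (cf.\ Lemma \ref{lemm:type_of_zeros}). Therefore $\ker(D_c^*+\overline{\mathbf k})=\{0\}$ and $\dim\ker(D_c+\mathbf k)=2p$. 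The $2p$ explicit products (from the $q$ non-magnetic basis functions paired with appropriately shifted $p$-dimensional magnetic null spaces) realize the Bloch frame $X_{\mathbf k}(\alpha_1)$ of \eqref{eq:Bloch_frame}, and $A$-polarization is inherited from the non-magnetic flat band at a simple magic angle. Uniform gap and Chern number $-2$ then follow by real-analytic deformation in $\alpha_1$ away from $\mathcal A$, combined with the gap-persistence argument used after Theorem \ref{theo:away_mag_angl}.

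\smallsection{Cases $p=q$ and $p<q$} For $p=q=1$, the two channels $u_1(\alpha_1)\otimes\ker(a_{\mathbf k-\mathbf k_0})$ and $u_2(\alpha_1)\otimes\ker(a_{\mathbf k})$ from Theorem \ref{theo:away_mag_angl} continue to yield independent flat-band sections generically; solving when they collide, using the explicit parameters \eqref{eq:magnetic_param}, should pin down the single resonance $\mathbf k=i/2$. At this point the Fredholm index forces $\dim\ker(D_c^*+\overline{\mathbf k})>0$, which manifests as the approaching bands from above and below, but the Chern number $-2$ of the flat pair is preserved by analytic deformation away from this codimension-$2$ locus. For $p<q$, the product construction now yields $p+q>2p$ linearly independent $A$-polarized sections (from the $q$ non-magnetic basis functions combined with $p$-dimensional magnetic null spaces, modulo dependences detected by zero-counting), and the $Q$-symmetric construction produces $q-p$ linearly independent $B$-polarized sections---consistent with index $2p$. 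The total Chern number vanishes by the $\sigma_3$-spectral symmetry recalled after \eqref{eq:achiral}, which pairs $A$- and $B$-polarized flat bands with opposite Chern numbers.

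\smallsection{Main obstacle} The hardest step is the resonance analysis at $\mathbf k=i/2$ in Case (2): one must (i) identify this as the exact collision point of the two construction channels, (ii) verify that the dimensional jump of $\ker(D_c+\mathbf k)$ there is matched by an equal jump of $\ker(D_c^*+\overline{\mathbf k})$ so that the index is preserved, and (iii) confirm that the approaching bands are not themselves flat so that the flat pair retains Chern number $-2$. The dimension counts in Cases (1) and (3) reduce to bookkeeping once the product construction is in place, though verifying linear independence of the collection of products in Case (3)---where one has $pq$ candidate products but only $p+q$ expected dimension---will require a careful zero-count in the spirit of the construction in \cite[Lemma $4.1$]{BHZ}.
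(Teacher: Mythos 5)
Your case (1) is essentially correct and amounts to the paper's own argument in a slightly different dress: where you run a rank--nullity argument for the multiplication map $\eta\mapsto \varphi\cdot Q\eta$ from the $p$-dimensional magnetic null space into the $q$-dimensional space of $B=0$ adjoint zero modes, the paper divides, writes $\varphi=\psi_\bfk/Qu_{\bfk-\bfk_1}$, and counts $p-q>0$ poles; both give $\ker_{\mathcal H_f}(D_c(\alpha_1,B)^*+\overline{\bfk})=\{0\}$ and then the index formula. The genuine gap is in case (2). You propose to locate the exceptional point $\bfk=i/2$ as the value where the two product channels $u_1(\alpha_1)\eta_{\bfk-\bfk_0}$ and $u_2(\alpha_1)\eta_{\bfk}$ ``collide''. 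That mechanism cannot work: linear dependence of these two $\CC^2$-valued products would force the pointwise determinant of $(u_1,u_2)$ to vanish identically, a condition that does not involve $\bfk$ at all, so no special quasi-momentum can be singled out this way. The failure of equality in \eqref{eq:nullspace3} at $\bfk=i/2$ is not a degeneration of the span but a jump of the kernel: the adjoint kernel becomes one-dimensional exactly when the unique ($q=1$) zero of the $B=0$ adjoint zero mode coincides with the unique ($p=1$) zero of $Qu_{\bfk-\bfk_1}$, so that the quotient $\psi_\bfk/Qu_{\bfk-\bfk_1}$ is regular; using \eqref{eq:magnetic_param} and the zero $-z_S$ this happens precisely at $\bfk=i/2$, and only then does the index formula give $\dim\ker(D_c(\alpha_1,B)+\bfk)=3$, which is the band touching. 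Your assertion that ``the Fredholm index forces $\dim\ker(D_c^*+\overline{\bfk})>0$ at the collision'' does not follow from anything you establish, and since you yourself flag (i)--(iii) of this step as unresolved, this part of the proof is missing rather than merely sketched.

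There is a second gap in case (3). Your dimension counts $p+q$ and $q-p$ match the paper (which builds the independent sections by theta-function multiplication, in analogy with \cite[Lemma 4.1]{BHZ}), but your Chern-number argument is wrong: the chiral symmetry $U=\sigma_3\otimes\operatorname{id}_{\CC^2}$ acts as $+1$ on $A$-polarized and $-1$ on $B$-polarized zero modes, so it preserves each polarization sector rather than exchanging them, and it cannot pair them with opposite Chern numbers. Indeed case (1) already refutes such a pairing principle: there the $A$-polarized flat bands carry Chern number $-2$ while the $B$-sector is empty. The paper instead computes the two contributions separately via the Streda formula (differentiating the integrated density of states with respect to $B$), obtaining $-1$ for the $p+q$ bands coming from $\ker(D_c(\alpha_1,B)+\bfk)$ and $+1$ for the $q-p$ bands coming from $\ker(D_c(\alpha_1,B)^*+\overline{\bfk})$, whence the total $0$. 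Without a computation of this kind your conclusion that the total Chern number vanishes is unsupported.
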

\begin{proof}
We shall use the same notation as in Theorem \ref{theo:away_mag_angl}. To prove (1), it suffices again, as in Theorem \ref{theo:away_mag_angl}, to show $\dim(\ker_{\mathcal H_f}(D_c(\alpha_1, B)^*+\overline{\mathbf k})) = 0$ for all $\bfk\in \CC$. This is because 
\begin{equation}
    \label{eq: inequality}
    X_\bfk(\alpha_1) \subset \ker(D_c(\alpha_1, B)+\bfk) \quad \text{~and~}\quad \dim(X_\bfk(\alpha_1)) = 2p
\end{equation}
by previous discussion \eqref{eq:Bloch_frame}. Once we can show $\dim(\ker_{\mathcal H_f}(D_c(\alpha_1,B)^*+\overline{\mathbf k})) = 0$, by the index formula \eqref{eq:index_formula}, we will have $\dim(\ker_{\mathcal H_f}(D_c(\alpha_1, B)+\bfk))=2p$. Then the inequality \eqref{eq: inequality} becomes equality. The Chern number and ``gapped away from other bands'' follow the same way as the proof of Theorem \ref{theo:away_mag_angl}.

To show $\dim(\ker_{\mathcal H_f}(D_c(\alpha_1,B)^*+\overline{\mathbf k})) = 0$, we assume for some $\bfk_1\in \CC$, there is $\varphi_{\overline{\bfk_1}}\neq 0 \in \ker_{\mathcal H_f}(D_c(\alpha_1, B)^* + \overline{\bfk_1})$. Then for all $\bfk$, by the same argument as Theorem \ref{theo:away_mag_angl}, (more explicitly, \eqref{eq: product}), we have $\psi_\bfk = \varphi_{\overline{\bfk_1}}(Qu_{\bfk - \bfk_1})\in \ker(D_c(\alpha_1,B=0)^*+\overline{\mathbf k})$. However, $Qu_{\bfk - \bfk_1}$ has $p$ many $\bar z$-type zeros by Lemma \ref{lemm:type_of_zeros} while $\psi_{\bfk}$ has $q$ many $\bar z$-type zeros by \eqref{eq: number_of_zeros} and $p>q$, thus $\varphi_{\overline{\bfk_1}} = \psi_\bfk/Qu_{\bfk - \bfk_1}$ has $p-q>0$ many poles, which contradicts with the fact that $\varphi_{\overline{\bfk_1}}\in \mathcal H_f$. Thus $\dim(\ker_{\mathcal H_f}(D_c(\alpha_1,B)^*+\overline{\mathbf k})) = 0$ for all $\bfk\in \CC$. 
 

In the case that $p=q$, we can assume without loss of generality that $p=q=1$. In this case, we, again, focus on whether $\dim(\ker_{\mathcal H_f}(D_c(\alpha_1, B)^* + \overline{\bfk_1})$ equals to $0$ or not. If there is some $\varphi_{\overline{\bfk_1}}\in \ker_{\mathcal H_f}(D_c(\alpha_1, B) + \bfk_1)$ for some $\bfk_1$, then $\psi_\bfk = \varphi_{\overline{\bfk_1}}(Qu_{\bfk - \bfk_1})\in \ker_{\mathcal H_f}(D_c(\alpha_1, B = 0))$, as already used above or in \eqref{eq: product}. Since $q = 1$, $\psi_\bfk$ has a unique zero. Since $p = 1$, $Qu_{\bfk - \bfk_1}$ has a unique zero (given by \eqref{eq:magnetic_param}). In order to make $\varphi_{\overline{\bfk_1}}\in \mathcal H_f$, the  zeros of $\psi_\bfk$ and $Qu_{\bfk - \bfk_1}$ must coincide. Since zeros of $\psi_\bfk$ and $Qu_{\bfk - \bfk_1}$ are shifted the same way when $\bfk$ varies, by the theta function argument, it is enough to require the two zeros to coincide when $\bfk = 0$, i.e. 
\[
-z_S = -\frac{4\pi }{3\sqrt{3}} = -Z_1 = -\frac{2\pi}{\sqrt{3}} - \frac{4\pi i \bfk_1}{3\sqrt{3}} \qquad \Leftrightarrow \qquad \bfk_1 = \frac{i}{2}. 
\]
Together with the index formula \eqref{eq:index_formula} with $p = 1$, we get for $\bfk_1 \notin \frac{i}{2} + \Gamma^*$, 
\[
\dim \ker(D_c(\alpha_1, B)^*+\overline{\bfk})=0 \text{,~thus~} \dim \ker(D_c(\alpha_1, B)+\bfk) =2,
\]
while when $\bfk \in \frac{i}{2} + \Gamma_{\text{mag}}^*$
\[
\dim\ker(D_c(\alpha_1, B)^*+\overline{\bfk})=1\text{~thus~} \dim \ker(D_c(\alpha_1, B)+\bfk)=3,
\]
which shows that the $2$ flat bands are not gapped. When $\bfk \neq \frac{i}{2} + \Gamma_{\text{mag}}^*$, $\ker(D_c(B, \alpha_1) + \bfk)$ can be written as specified in (2), since the two basis elements are linearly independent as the zeros of the two functions do not coincide. This representation shows readily that the Chern number is $-2$ by using the continuous dependence of the basis on $\alpha_1$. This implies the continuity (in $\alpha_1$) of the projection on the two bands. Using \eqref{theo:away_mag_angl}, we conclude that the Chern number is $-2.$

Let $p<q$. Then as argued above, any element in $\ker(D_c(\alpha_1, B)^*+\overline{\bfk})$ has $q-p$ many $\bar z$ zeros. Thus, using e.g. a direct analog of \cite[Lemma 4.1]{BHZ}, we can construct from each such function precisely $q-p$ flat bands for every $k \in \CC$. The corresponding $\ker(D_c(\alpha_1, B)+\bf k)$ has then dimension $\operatorname{dim}\ker(D_c(\alpha_1, B)+\mathbf k) = \operatorname{ind}_{\mathcal H_f}(D_c(\alpha_1, B)+ \mathbf k) + \operatorname{dim}\ker(D_c(\alpha_1, B)^*+\overline{\mathbf k})$$ = 2p + (q-p)=p+q$. One obtains these $p+q$ flat bands by observing that when multiplying an element of $\ker(D_c(\alpha_1, B))$ ($q$ many $z$-zeros) with an element of $\ker(a)$ ($p$ many $z$-zeros), their product has precisely $p+q$ many $z$-zeros. Hence, we obtain $p+q$ linearly independent flat bands by using theta functions. 

By the Streda formula \cite[(16)]{MK12}, \cite[(3)]{B87}, and \cite[Prop. $3.2$]{BKZ22}, the Chern number can be obtained by differentiating the integrated density of states with respect to the magnetic field, i.e. $$c =- 2\pi \frac{\partial \rho}{\partial B}$$ is the Chern number. Here $\rho(I)$ is the integrated density of states associated with some interval $I$ with $\partial I$ well inside a spectral gap of $H.$
The integrated density of states for a Bloch-Floquet fibred Hamiltonian is equal to the number of bands, normalized by the volume of the fundamental domain. This computation is for example shown in \cite{notes}.

We thus observe that for the integrated density of states of $(D_c(\alpha_1, B)+{\mathbf k})(D_c(\alpha_1, B)^*+\overline{\mathbf k})$, for some $\varepsilon>0$ sufficiently small, we have
 $$\rho([0,\varepsilon))= \frac{q+p}{q \vert \CC/\Gamma \vert} = \frac{1}{\vert \CC/\Gamma \vert} + \frac{p}{q\vert \CC/\Gamma \vert} = 1 + \frac{B}{2\pi}$$ which shows that $c=-1.$

Similarly, for the integrated density of states of $(D_c(\alpha_1, B)^*+\overline{\mathbf k})(D_c(\alpha_1, B)+{\mathbf k})$, we find for some $\varepsilon>0$ sufficiently small
$$\rho([0,\varepsilon))= \frac{q-p}{q \vert \CC/\Gamma \vert} = \frac{1}{\vert \CC/\Gamma \vert}- \frac{p}{q\vert \CC/\Gamma \vert} = 1- \frac{B}{2\pi}$$ which shows that $c=1.$

Thus, the total Chern number is equal to $0.$

\end{proof}
Analogously, we have for two-fold degenerate magic angles.
\begin{theo}[Double magic angles]
\label{theo:double}
     We assume that $\alpha_1 \in \mathcal A$ is a two-fold degenerate magic angle.
     Let the magnetic field be a constant field $B>0$ such that the magnetic flux through $\Gamma$ satisfies $\Phi = \frac{2\pi p}{q}$ for $p \in \mathbb N$ and $q \in \mathbb N$. 
     \begin{enumerate}
     \item If $p>2q$, all $p$ flat bands at zero energy are $A$-lattice polarized
     \begin{equation}
     \begin{split}
    &\ker_{\mathcal H_f}(H_{c,\mathbf k}) =\ker(D_c(\alpha_1,B)+\mathbf k) \times \{0_{\CC^2}\}\\
&\ker_{\mathcal H_f}(D_c(\alpha_1,B)+\mathbf k) =X_{\mathbf k}(\alpha_1).
     \end{split}
     \end{equation} 
     The Chern number of the $p$ flat bands, that are gapped away from the rest of the spectrum, is $-2$. 
     \item If $p=2q$, and thus without loss of generality $p=2q=2$, choose $\bfk_0 \notin \Gamma_{\text{mag}}^*$ and $u_1,u_2$ as in    \eqref{eq: kerD_c}. Then for $\eta_{\bfk-\bfk_0} \in \ker_{\mathcal H_f}(a_{\bfk-\bfk_0})\setminus\{0\}$ and $\eta_{\bfk} \in \ker_{\mathcal H_f}(a_{\bfk})\setminus\{0\}$
     \begin{equation}
     \ker_{\mathcal H_f}(D_c(\alpha_1, B)+\mathbf k) \supseteq \operatorname{span}\{u_1(\alpha_1) \eta_{\bfk-\bfk_0}, u_2(\alpha_1) \eta_{\bfk} \},
     \end{equation}  where equality holds for all points aside from $\bfk =- i.$ 
     The Chern number of the $2$ flat bands is $-2$. The zero energy flat bands are touched by bands from above and below (energetically) at $\bfk =-i.$  All flat bands are still $A$-lattice polarized.   
     \item Let $p<2q$, then there are $4q$ flat bands at zero energy that are gapped from the rest of the spectrum such that for all $\mathbf k \in \CC$
\[    \ker_{\mathcal H_f}(H_{c,\mathbf k})=\ker(D_c(\alpha_1,B)+\mathbf k) \times \{0_{\CC^2}\} 
    \oplus \{0_{\CC^2}\} \times\ker(D_c(\alpha_1,B)^*+\overline{\mathbf k}) \]
    where $\operatorname{dim}\ker(D_c(\alpha_1,B)+\mathbf k) = p+2q$ and $\operatorname{dim}\ker((D_c(\alpha_1,B)+\mathbf k)^*) = 2q-p.$
     The total Chern number of the flat bands is $0.$
     \end{enumerate}
\end{theo}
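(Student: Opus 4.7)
The plan mirrors the three-case proof of Theorem \ref{theo:simple}, with the key modification that at a two-fold degenerate magic angle every nonzero element of $\ker_{\mathcal H_f(B=0,\Gamma_{\operatorname{mag}})}(D_c(\alpha_1,B=0)+\mathbf k)$ carries $2q$ (rather than $q$) many $z$-type zeros per magnetic cell, and correspondingly every nonzero element of $\ker_{\mathcal H_f(B=0,\Gamma_{\operatorname{mag}})}(D_c(\alpha_1,B=0)^*+\overline{\mathbf k})$ carries $2q$ many $\bar z$-type zeros, whose sum is pinned to $-z_S$ modulo $\Gamma_{\operatorname{mag}}$. The shared tools are the index $\operatorname{ind}_{\mathcal H_f}(D_c(\alpha_1,B)+\mathbf k)=2p$ from Corollary \ref{cor: index_D_c}, the real-analytic Rellich family $u_1(\alpha_1),u_2(\alpha_1)$ from \eqref{eq:identity}, the conjugation identity $\psi_{\mathbf k}:=\varphi_{\overline{\mathbf k_1}}(Qu_{\mathbf k-\mathbf k_1})\in\ker_{\mathcal H_f}(D_c(\alpha_1,B=0)^*+\overline{\mathbf k})$, and Lemma \ref{lemm:type_of_zeros} on the holomorphic type of zeros.

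For Case (1), I would suppose a nontrivial $\varphi_{\overline{\mathbf k_1}}\in\ker_{\mathcal H_f}(D_c(\alpha_1,B)^*+\overline{\mathbf k_1})$; then $\psi_{\mathbf k}$ has $2q$ many $\bar z$-zeros while $Qu_{\mathbf k-\mathbf k_1}$ has $p$ many, so $p>2q$ forces $\varphi_{\overline{\mathbf k_1}}=\psi_{\mathbf k}/Qu_{\mathbf k-\mathbf k_1}$ to acquire at least $p-2q>0$ poles, contradicting its $\mathcal H_f$-regularity. Hence $\ker_{\mathcal H_f}(D_c^*+\overline{\mathbf k})=\{0\}$ for every $\mathbf k$, the index formula forces $\dim\ker_{\mathcal H_f}(D_c+\mathbf k)=2p=\dim X_{\mathbf k}(\alpha_1)$, and we read off the equality with $X_{\mathbf k}(\alpha_1)$. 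Uniform-in-$\mathbf k$ gapping and the Chern number $-2$ then propagate from the non-magic-angle case by homotoping $\alpha_1$ and using norm continuity of the spectral projector together with \cite{B87,BES94}, just as in Theorem \ref{theo:simple}(1).

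For Case (2) with $p=2q=2$, both $\psi_{\mathbf k}$ and $Qu_{\mathbf k-\mathbf k_1}$ carry exactly two $\bar z$-zeros, so $\varphi_{\overline{\mathbf k_1}}$ is entire iff the two zero sets coincide. The sum of zeros of $\psi_{\mathbf k}$ is fixed to $-z_S$ by two-fold degeneracy, whereas the sum of zeros of $Qu_{\mathbf k-\mathbf k_1}$ at $\mathbf k=0$ equals $-Z_2=-\tfrac{4\pi}{\sqrt{3}}-\tfrac{4\pi i}{3\sqrt{3}}\mathbf k_1$ by \eqref{eq:magnetic_param}; setting these sums equal and reducing modulo $\eta_1+\eta_2=-3i$ yields $\mathbf k_1\equiv -i\pmod{\Gamma_{\operatorname{mag}}^*}$, and the one remaining parameter in the two-dimensional kernel of $D_c(\alpha_1,B=0)+\mathbf k$ is exactly what is needed to align the individual zeros. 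For every other $\mathbf k_1$, $\dim\ker_{\mathcal H_f}(D_c^*+\overline{\mathbf k_1})=0$, so the index formula yields a two-dimensional kernel spanned by $u_1(\alpha_1)\eta_{\mathbf k-\mathbf k_0}$ and $u_2(\alpha_1)\eta_{\mathbf k}$, whose linear independence follows from the non-coincidence of their zero loci off $\mathbf k=-i$. The Chern number $-2$ again propagates in $\alpha_1$ from Theorem \ref{theo:away_mag_angl}.

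For Case (3) with $p<2q$, the reverse zero-counting gives each cokernel element $2q-p$ free $\bar z$-zeros, and a Weierstra{\ss}/theta-function construction as in Theorem \ref{theo:Fredholm} and \cite[Lemma $4.1$]{BHZ} yields $\dim\ker_{\mathcal H_f}(D_c^*+\overline{\mathbf k})=2q-p$ linearly independent elements; the index formula then delivers $\dim\ker_{\mathcal H_f}(D_c+\mathbf k)=p+2q$, realized by multiplying any element of $\ker_{\mathcal H_f}(D_c(\alpha_1,B=0)+\mathbf k)$ (contributing $2q$ $z$-zeros) by any element of $\ker_{\mathcal H_f}(a_{\mathbf k})$ (contributing $p$ $z$-zeros) and invoking theta-function placements. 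Since these dimensions are constant in $\mathbf k$, the $4q$ zero-energy bands are gapped from the rest, and the Streda formula $c=-2\pi\partial_B\rho$ applied to the two projections produces integrated densities of states $(p+2q)/(q|\CC/\Gamma|)=2+B/(2\pi)$ and $(2q-p)/(q|\CC/\Gamma|)=2-B/(2\pi)$, giving Chern numbers $-1$ and $+1$ whose sum is $0$. The main obstacle I expect is Case (2): cleanly matching the $\theta_1$-description of the two-fold magic-angle kernel with the Weierstra{\ss}-$\tilde\sigma$ parametrization of $\ker_{\mathcal H_f}(a_{\mathbf k})$, and verifying that the single remaining degree of freedom on the $\theta_1$-side suffices to absorb the second zero-matching constraint simultaneously with the sum constraint.
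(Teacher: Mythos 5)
Your proposal is correct and takes essentially the same route as the paper's proof: each case is handled exactly as for simple magic angles with $q$ replaced by $2q$ in the zero counts, using the index formula of Corollary \ref{cor: index_D_c}, the zero-sum matching via \eqref{eq:magnetic_param} to locate the band-touching point (your $\bfk_1=2i$ agrees with the paper's $\bfk=-i$ modulo $\Gamma_{\operatorname{mag}}^*$), and the Streda-formula computation giving total Chern number $0$ in case (3). The only deviation is the sign bookkeeping in the zero sums for case (2), which the paper itself handles equally loosely and which does not affect the conclusion.
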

\begin{proof}
We shall use the same notation as in the proof of Theorem \ref{theo:simple} and just sketch the main differences.

For $p>2q$: The proof of the first result follows as before with now $p-2q$ poles, instead. This is because $\psi_{\bfk}$ has $2q$ many zeros.

For $p=2q=2$: The unique non-zero solution in the nullspace of $D_c(\alpha_1,B=0)^*+\bar{\mathbf k}$, has by \eqref{eq:symmetry} and \eqref{eq:translation_prop}  two $\bar z$-zero summing up to $z_S+\tfrac{4\pi i}{3\sqrt{3}}\bfk$. In this case $Z_2 -\frac{4\pi}{\sqrt{3}}=-z_S$ has the solution $\bfk=-i.$

For $q<2p$: Elements in $\ker(D_c(\alpha_1, B)^*+\overline{\bfk})$ have $2q-p$ many $\bar z$ zeros which gives rise to $2q-p$ flat bands and elements in $\ker(D_c(\alpha_1, B)+{\bfk})$ exhibit $p+2q$ many $ z$ zeros which gives rise to $p+2q$ flat bands.
The Chern numbers coincide with the case of simple magic angles. 

\end{proof}

\section{Spectral properties}
\label{sec:Proofsection}
In this section, we provide a basic spectral analysis of the chiral and anti-chiral magnetic Bistritzer-MacDonald model introduced in \eqref{eq:DcB} and \eqref{eq:achiral}. 

We start by studying the existence of flat bands for magnetic potentials that are periodic with respect to the magnetic moir\'e lattice $\Gamma_{\operatorname{mag}}$. Consider the Hamiltonian $H$ introduced in \eqref{eq:operator}.
We shall use the Floquet operators $H_{\mathbf k}$ as introduced in Section \ref{sec:floquet} for quasi-momenta $\mathbf k \in \mathbb C$ acting on the fundamental cell $\mathbb C /\Gamma_{\operatorname{mag}}$ with periodic boundary conditions. We then introduce the parameter set, of flat bands at energy zero, for the chiral Hamiltonian
\begin{equation}
\begin{split}
 \mathcal A _{\text{c}}&:= \left\{ \alpha_1 \in \CC ; 0 \in \bigcap_{ \mathbf k \in \mathbb C} \Spec_{L^2(\mathbb C / \Gamma_{\operatorname{mag}})}\left(H_{c,\mathbf k}(\alpha_1)\right)\right\}
 \end{split}
 \end{equation}
and denote the analogous set of $\alpha_0 \in \mathbb C$, for the anti-chiral model, by $ \mathcal A _{\text{ac}}$.
Our first theorem shows that in the chiral Hamiltonian, periodic magnetic potentials do not affect the presence of flat bands as characterized in \cite{TKV19,BEWZ20a} and shown to exist in \cite{BEWZ20a,BHZ,LW21}. In contrast to the chiral case, the anti-chiral Hamiltonian \eqref{eq:achiral} does not have any flat bands.

\begin{figure}
\includegraphics[width=7.5cm]{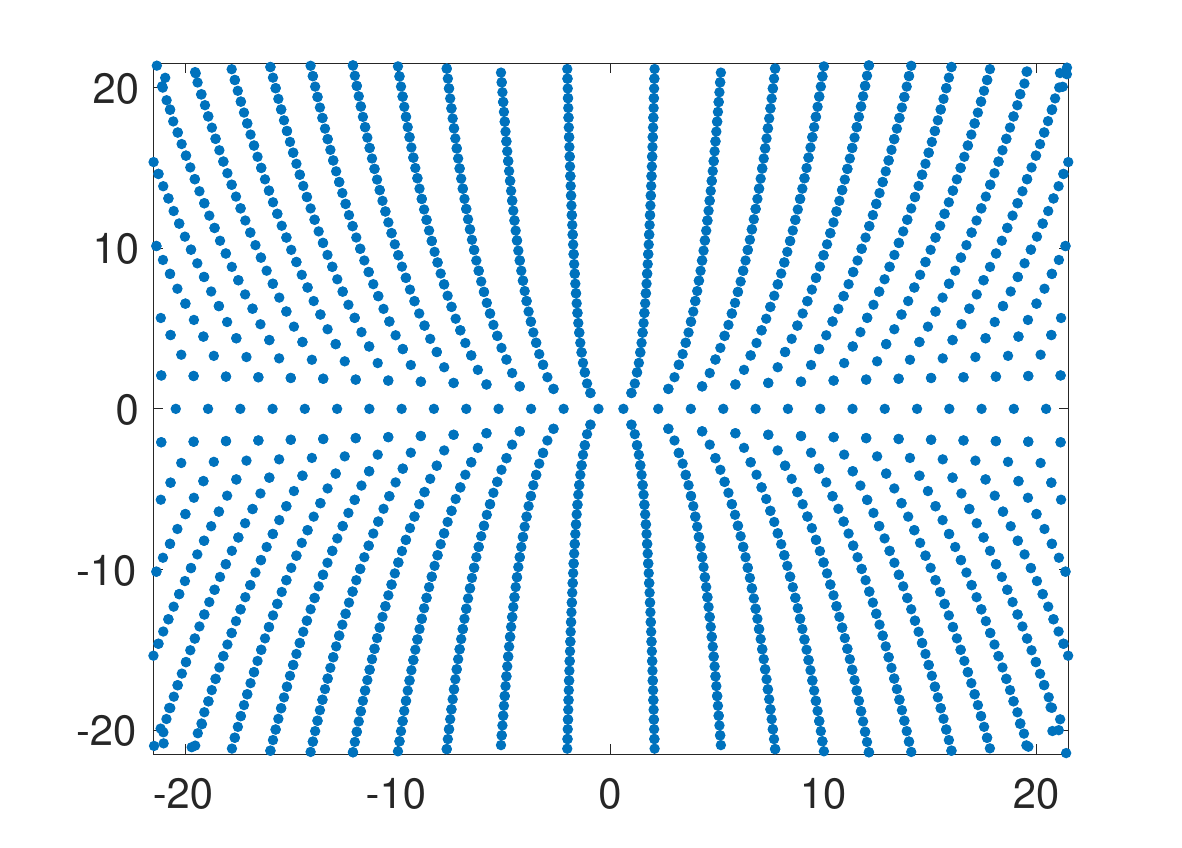}
\includegraphics[width=7.5cm]{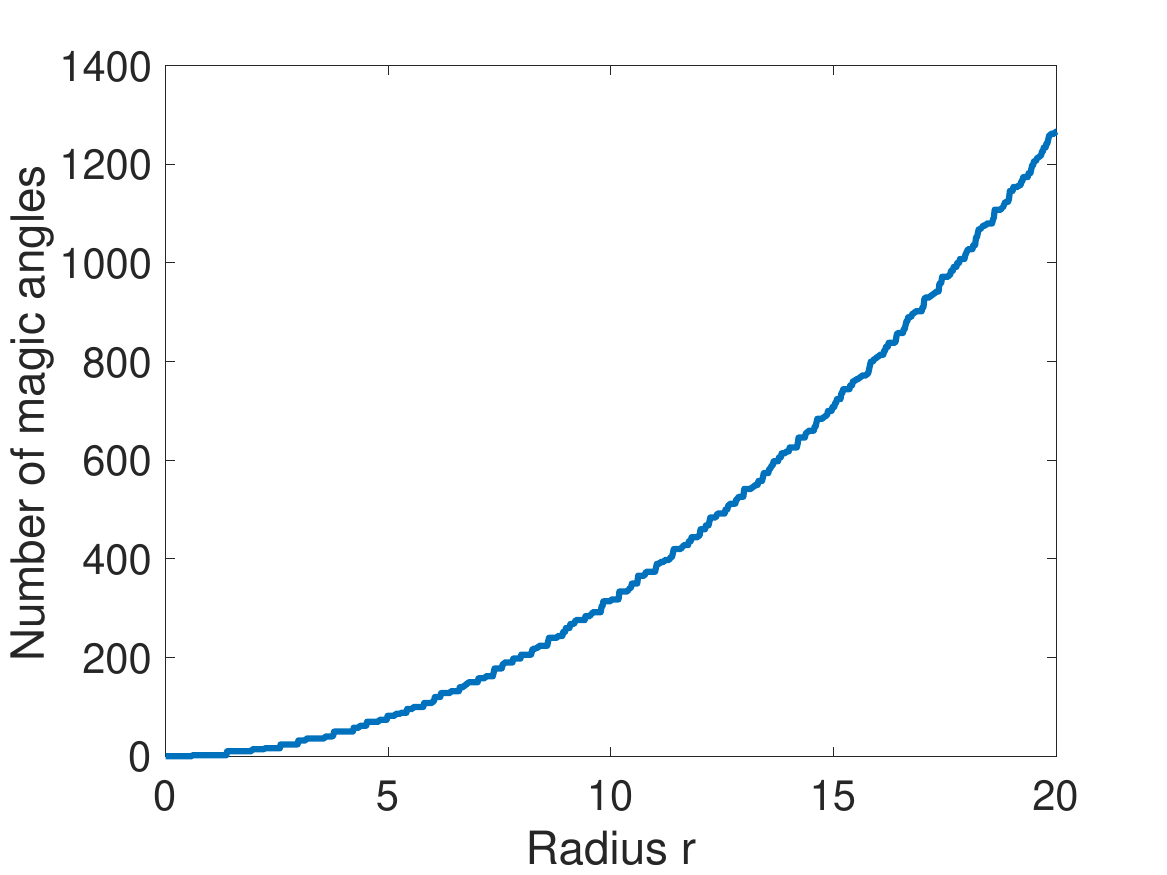}
\caption{Spectrum of $T_{\mathbf k}$ for $\mathbf k \notin \Gamma^*$ with periodic magnetic field. The magic $\alpha_1$ do not depend on the magnetic field strength (left). Number of magic angles within radius $r$ (as in figure on the left) showing quadratic dependence (right), cf. Theorem \ref{theo:magic_angles}.}
\end{figure}
\begin{theo}[Magic angles--Periodic magnetic potentials]
\label{theo:magic_angles}
Consider the BM model with $\Gamma_{\operatorname{mag}}$-periodic magnetic potential $A \in C^{\infty}(\CC/\Gamma_{\operatorname{mag}}; \RR^2)$: \newline 
\underline{Chiral Hamiltonian:} The magic angles, and the multiplicity of flat bands of the chiral Hamiltonian \eqref{eq:DcB}, are independent of the magnetic potential, i.e. $\alpha \in \mathcal A_c$ for $A=0$ if and only if $\alpha \in \mathcal A_c$ for non-zero periodic $A\in C^{\infty}(\CC/\Gamma_{\operatorname{mag}}; \RR^2).$ In addition the number of magic angles, counting multiplicities, in a disc of radius $R$ satisfies: $\vert \mathcal A_c \cap B_R(0)\vert =\mathcal O(R^2).$ \newline
\underline{Anti-chiral Hamiltonian:} The anti-chiral Hamiltonian \eqref{eq:achiral}, with magnetic potentials as above, does not possess any flat bands at zero, i.e. $\mathcal A _{\ach}=\emptyset.$ In particular, its spectrum is purely absolutely continuous. 
\end{theo}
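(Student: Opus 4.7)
The plan is to reduce both statements, as far as possible, to the $A = 0$ case via a complex gauge conjugation extending Proposition \ref{prop:periodic magnetic field}, and then to invoke the non-magnetic characterization of magic angles from \cite{BEWZ20a} and the absence of anti-chiral flat bands from \cite{BEWZ20b}.

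\textbf{Chiral case and counting.} Write $A = A_0 + \tilde A$, with $A_0 \in \CC$ the spatial mean and $\tilde A \in C^\infty(\CC/\Gamma_{\operatorname{mag}})$ the zero-mean remainder. The constant $A_0$ is absorbed into a shift $\mathbf k \mapsto \mathbf k + A_0$ of the Bloch quasimomentum, which is a homeomorphism of the Brillouin torus. For the zero-mean part, construct $\phi \in C^\infty(\CC/\Gamma_{\operatorname{mag}};\CC)$ from its Fourier series as in Proposition \ref{prop:periodic magnetic field}, so that $2i\partial_{\bar z}\phi = \tilde A$. The universal identity $e^{-\phi}ae^{\phi} = a - 2i\partial_{\bar z}\phi$ then removes $\tilde A$ from $a$, while $U$ and $U_-$ commute with multiplication by $e^\phi$; hence scalar conjugation of $D_c(\alpha_1,B)$ by $e^\phi$ eliminates the periodic perturbation entirely. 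This conjugation is an invertible (generally non-unitary) multiplication operator on $\mathcal H_f$ because $\phi$ is $\Gamma_{\operatorname{mag}}$-periodic, so it induces a fibre-wise bijection between the nullspaces; the magic angles and their multiplicities therefore coincide with those of the $A_{\operatorname{per}} = 0$ problem. For the quadratic counting, Section \ref{subsec: charofmagicangles} identifies $\mathcal A_c$ (for $B=0$, $A=0$) with the inverses of the non-zero eigenvalues of the compact operator $T_\mathbf k = (2D_{\bar z}-\mathbf k)^{-1}\mathcal U$ on $L^2_0(\CC/\Gamma)$, where $\mathcal U$ is a bounded multiplier. Weyl asymptotics on the two-torus place $(2D_{\bar z}-\mathbf k)^{-1}$, and hence $T_\mathbf k$, in the Schatten trace ideal of order $p$ for every $p > 2$; the Weyl eigenvalue inequality, combined with the sharper symbolic asymptotics available from \cite{BEWZ20a}, yields $|\mathcal A_c \cap B_R(0)| = \mathcal O(R^2)$.

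\textbf{Anti-chiral case.} The conjugation $e^{-\phi}a e^\phi = a|_{\tilde A = 0}$ still handles the $(1,2)$-block of $D_\ach^\theta$, but fails for the $(2,1)$-block, since $e^{-\phi}a^* e^\phi = a^* - 2i\partial_z \phi \neq a^*|_{\tilde A = 0}$ unless $\phi$ is purely imaginary, i.e.\ unless $\tilde A$ is curl-free. To handle the general periodic case, Hodge-decompose $A = A_0 + \nabla \chi + J\nabla\psi$ on $\CC/\Gamma_{\operatorname{mag}}$: the exact part $\nabla\chi$ is removed by the real unitary gauge $e^{i\chi}$, which correctly conjugates both $a$ and $a^*$, and $A_0$ is again a $\mathbf k$-shift. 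Only the divergence-free piece $J\nabla\psi$ — corresponding to a genuine zero-flux periodic magnetic field — remains and must be handled directly. At $A = 0$, $\mathcal A_\ach = \emptyset$ by \cite{BEWZ20b}. The Fredholm index of $D_{\ach,\mathbf k}^\theta$ is invariant under the bounded perturbation by $J\nabla\psi$ and equals zero, so a flat band at zero would force $\ker D_{\ach,\mathbf k}^\theta \neq \{0\}$ for every $\mathbf k$; a continuity and analytic Fredholm argument in the perturbation strength, combined with the absence of a flat band at $J\nabla\psi = 0$, rules this out. Pure absolute continuity of the spectrum then follows from real-analytic dependence of the Bloch bands on $\mathbf k$ (by Kato--Rellich) and the resulting absence of constant bands.

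\textbf{Main obstacle.} The crux is the anti-chiral case: the conjugation that trivializes the chiral problem does not reduce the anti-chiral one to the $A_{\operatorname{per}} = 0$ setting when $A_{\operatorname{per}}$ carries a non-trivial divergence-free component. A fully rigorous treatment likely requires either a variational lower bound on $\bigl(D_{\ach,\mathbf k}^\theta\bigr)^* D_{\ach,\mathbf k}^\theta$ that persists under the divergence-free perturbation, or a topological/winding obstruction exploiting the moir\'e structure of $V$. The chiral reduction and the $\mathcal O(R^2)$ counting are, by contrast, essentially routine given Proposition \ref{prop:periodic magnetic field} and Schatten-class bounds on $T_\mathbf k$.
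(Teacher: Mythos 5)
Your chiral reduction is essentially the paper's own argument: multiply zero modes of the non-magnetic operator by $e^{-\phi}$ with $2i\partial_{\bar z}\phi=A_{\operatorname{per}}$ (Prop.~\ref{prop:periodic magnetic field}), which leaves $U,U_-$ untouched and sets up a bijection of fibre nullspaces, so that part is sound. The counting step, however, is under-justified as written: knowing $T_{\mathbf k}\in\mathcal S^p$ for every $p>2$ only gives $\vert\mathcal A_c\cap B_R(0)\vert=\mathcal O(R^{2+\epsilon})$, not $\mathcal O(R^2)$. To close that gap you need either the singular-value decay $s_j(T_{\mathbf k})=\mathcal O(j^{-1/2})$ fed into the multiplicative Weyl inequality, or what the paper actually does: split $T_{\mathbf k}=A_N+B_N$ with $\Vert A_N\Vert_1=\mathcal O(N)$, $\Vert B_N\Vert=\mathcal O(1/N)$, choose $N\sim R$, bound the Fredholm determinant $\det(1-(1-\alpha_1B_N)^{-1}\alpha_1A_N)$ by $e^{CR^2}$, and apply Jensen's formula. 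The phrase ``sharper symbolic asymptotics'' does not identify which of these estimates you intend or how it is used.

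The genuine gap is the anti-chiral half, which you flag yourself. The proposed ``index zero plus continuity/analytic Fredholm in the perturbation strength'' argument cannot work: a vanishing Fredholm index never forbids a nontrivial kernel, and analytic Fredholm theory in the coupling $t$ of the divergence-free piece only says that, for fixed $\mathbf k$, the set of $t$ with nontrivial kernel is discrete or all of $\CC$; it does not propagate ``no flat band at $t=0$'' to $t=1$, since flat bands can appear at isolated parameter values --- this is exactly what magic angles are in the chiral model, so continuity from the unperturbed problem is structurally incapable of excluding them. The paper's proof rests on a different, essential idea: an embedded eigenvalue forces, by Rellich/Kato analyticity in $\mathbf k$, an entire flat band; one then complexifies the quasimomentum $\mathbf k_1=\mu+iy$, factorizes $Q_{\mathbf k}^*Q_{\mathbf k}-\lambda^2=(1+W_1(\lambda))(H_{\text{P},\mathbf k}-\lambda^2)$, and uses the Pauli resolvent bounds $\Vert S(\lambda)^{-1}\Vert,\ \Vert Q_{\mathbf k}S(\lambda)^{-1}\Vert=\mathcal O(\vert y\vert^{-1})$ to show that the relation $-1\in\Spec(W_i(\lambda))$, which analytic continuation would force for all complex $\mathbf k_1$, fails for large $\vert y\vert$. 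That Thomas-type argument works at every energy $\lambda$, which is also what yields purely absolutely continuous spectrum; your sketch, even if repaired, addresses only energy zero and so would not give the AC conclusion either. Without this complexification argument, or a mechanism of comparable strength (a variational lower bound uniform in $\mathbf k$, which you acknowledge you do not have), the anti-chiral statement remains unproven.
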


We split the proof of Theorem \ref{theo:magic_angles} on the existence/absence of flat bands into two parts, separating the statement about the chiral Hamiltonian from the statement about the anti-chiral Hamiltonian. We start with a discussion of the chiral Hamiltonian.
\begin{proof}[\text{Proof of Theo. }\ref{theo:magic_angles}, \underline{Chiral Hamiltonian:}]

For the chiral Hamiltonian, with $\alpha_0=0$, it suffices to analyze the nullspaces of the off-diagonal operators \eqref{eq:DcB}. Without loss of generality, we can study the nullspace of $\mathcal D_{\mathbf k,c}(\alpha_1)$ where $0 \in \Spec(\mathcal D_{\mathbf k,c})(\alpha_1) \Leftrightarrow \alpha_1^{-1} \in \Spec(T_{\mathbf k})$
with \emph{Birman-Schwinger operator} $T_{\mathbf k}= (2D_{\bar z}-(A_1(z)+iA_2(z))+\mathbf k)^{-1} \begin{pmatrix}0 & U(z) \\ U_-(z) & 0 \end{pmatrix}$ for $\mathbf k \notin \Gamma_{\operatorname{mag}}^*.$
For any zero mode $\chi_{\mathbf k} \in H^1(\CC/\Gamma_{\operatorname{mag}})$ to $D_{\mathbf k,c}(\alpha_1,B=0)$ it follows that $\psi_{\mathbf k}=\chi_{\mathbf k} \cdot e^{-\phi} \in L^2(\CC/\Gamma_{\operatorname{mag}})$, with $\phi$ as in the proof of Prop. \ref{prop:periodic magnetic field} solves for $\mathcal D_{\mathbf k,c} = D_c $ \[\mathcal D_{\mathbf k,c}(\alpha_1)\psi_{\mathbf k}=\psi_0 \underbrace{\mathcal D_{\mathbf k,c}(\alpha_1,B=0)\chi_{\mathbf k}}_{=0}+ \chi_{\mathbf k}\underbrace{(2D_{\bar z} -(A_1(z)+iA_2(z)))\psi_0(z)}_{=0}=0.\]
This shows that $H_c(B=0)$ possesses a flat band if and only if $H_c(B)$ possesses one, with $B$ induced by a $\Gamma_{\operatorname{mag}}$-periodic magnetic potential $A$.  The characterization of magic angles $\alpha_1$ as reciprocals of eigenvalues of $T_{\mathbf k}$ with $A = 0$ follows then from \cite[Theo. $2$]{BEWZ20a}.  We now utilize the compactness of $T_{\mathbf k}$, with $A = 0,$ to give an upper bound on the number of magic angles.

Indeed, let $z=x_1+ix_2=2i\omega(y_1+i\omega y_2)$, we shall first rewrite $D_{\bar z}$ and $\mathcal V$ in new coordinates $(y_1,y_2).$
 Thus, decomposing for $A_{N}:=\Pi_N T_{\mathbf k},$ with $\Pi_N:L^2(\RR^2/2\pi \ZZ^2; \CC^2) \rightarrow \ell^2(\ZZ^2_{2N+1}; \CC^2)$ such that $\Pi_N\left(\sum_{n \in \ZZ^2} a_{\mathbf n} e^{i \langle y, \mathbf n \rangle}\right) = \{a_{(n_1,n_2)} \}_{\vert n_j \vert \le N}, a_{\mathbf n} \in \CC^2$ and introducing $B_N:=T_{\mathbf k}-A_N,$ we can estimate, by specializing to $\mathbf k=1/2$,
\begin{equation*}
\begin{split}
 \Vert &A_N \Vert_1 \le \Vert \Pi_N  (D_{\bar z}-\mathbf k)^{-1} \Vert_1 \Vert \mathcal V \Vert \le  \sqrt{3} \Vert U \Vert_{\infty} \sum_{\vert m \vert_{\infty} \le N} \tfrac{1}{\Big\vert \left(m_1+ \tfrac{1}{2}\right)^2 + \left(m_1+ \tfrac{1}{2}\right)\left(m_2+ \tfrac{1}{2}\right)+\left(m_2+\tfrac{1}{2}\right)^2 \Big\vert^{1/2}}\\
 &= \sqrt{3}\Vert U \Vert_{\infty}  \Bigg(\sum_{\vert m \vert_{\infty} \le 2} \Big\vert \left(m_1+ \tfrac{1}{2}\right)^2 + \left(m_1+ \tfrac{1}{2}\right)\left(m_2+ \tfrac{1}{2}\right)+\left(m_2+ \tfrac{1}{2}\right)^2 \Big\vert^{-1/2}\\
&\quad +\sum_{2<\vert m \vert_{\infty}\le N } \Big\vert \left(m_1+ \tfrac{1}{2}\right)^2 + \left(m_1+ \tfrac{1}{2}\right)\left(m_2+ \tfrac{1}{2}\right)+\left(m_2+ \tfrac{1}{2}\right)^2 \Big\vert^{-1/2} \Bigg)\\
&\le\sqrt{3}\Vert U \Vert_{\infty}\left(17 + \int_{1/2}^{\sqrt{2}N} \int_0^{2\pi} \tfrac{1}{\sqrt{1+\frac{1}{2}\sin(2\varphi)}} \ d\varphi \right) =\sqrt{3}\Vert U \Vert_{\infty} \left(17 + (\sqrt{2}N-\tfrac{1}{2})7\right) \le C_2N.
 \end{split}
 \end{equation*}
 Similarly, we have that 
\[\Vert B_N \Vert \le \sup_{\vert m \vert_{\infty} >N } \Big\vert \left(m_1+ \tfrac{1}{2}\right)^2 +  \left(m_1+ \tfrac{1}{2}\right)\left(m_2+ \tfrac{1}{2}\right)+\left(m_2+ \tfrac{1}{2}\right)^2 \Big\vert^{-1/2} \Vert U \Vert_{\infty} \le \frac{\Vert U \Vert_{\infty}}{(N-\tfrac{1}{2})} \le \frac{C_1}{N}. \]
 
 Thus, for $\vert \alpha_1 \vert \le R$ we take $N$, large enough, such that $\Vert \alpha_1 B_N \Vert \le \frac{RC_1}{N}<1/2.$ Thus, we may pick $N = \lceil RC_1 /2 \rceil.$ Hence, we can write
$ 1-\alpha_1T_{\mathbf k} = (1-\alpha_1B_N)(1-(1-\alpha_1B_N)^{-1} \alpha_1A_N)$
 and the magic $\alpha_1$'s are the zeros of $f(z) = \det(1-(1-\alpha_1B_N)^{-1} \alpha_1A_N).$
Using the standard bound for Fredholm determinants, we have $\vert f(z) \vert \le e^{2 R \Vert A_N \Vert_1} \le e^{2R C_2 N } \le e^{R^2 C_1 C_2}.$
Hence, as $f(0)=1$, Jensen's formula implies that the number $n$ of zeros of $f$ for $\alpha \in B_0(R)$ is bounded by $n(R) \le \log(2)^{-1} \Big( 4R^2 C_1 C_2 \Big).$
\end{proof}

We now continue by showing that the anti-chiral Hamiltonian does not possess flat bands at energy zero.

\begin{proof}[Proof of Theo.\ref{theo:magic_angles}, \underline{Anti-Chiral Hamiltonian:}]
It is well-known that the singular continuous spectrum for periodic operators of this type is empty. 
Thus, it suffices to exclude embedded point spectrum.
The existence of an embedded eigenvalue implies by Bloch-Floquet theory that there is an open set $\Omega$ such that for all $\mathbf k \in \Omega$ we have $0 \in \Spec(\mathcal K_{\mathbf k}(\alpha_0)) $
where
\[ \mathcal K_{\mathbf k}(\alpha_0):= \begin{pmatrix} \lambda & Q_{\mathbf k}(\alpha_0) \\
Q_{\mathbf k}(\alpha_0)^* & \lambda \end{pmatrix}\text{
with }Q_{\mathbf k}(\alpha_0):=
\begin{pmatrix}
  \alpha_0  e^{-i  \theta/2} V(z)  & \mathscr D_{ z}(\overline{\mathbf{k}},A)    \\
\mathscr D_{\bar z}(\mathbf{ k},A) & \alpha_0 e^{-i  \theta/2}  \overline{V(z)}
\end{pmatrix}\]
and we introduced 
\begin{equation}
\begin{split}
\label{eq:Dequs}
\mathscr D_z(\mathbf{ \bar k},A)=2D_{z} +\mathbf{ \bar k}-(A_1(z)-i A_2(z))\text{ and }
\mathscr D_{\bar z}(\mathbf{ k},A)=2D_{\bar z}+\mathbf k-(A_1(z)+i A_2(z)).
\end{split}
\end{equation}
Since the operator $\mathcal K_{\mathbf k}$ depends analytically on the real and imaginary part of $\mathbf k$ individually, the existence of an embedded eigenvalue implies that the entire band would be flat. This can be easily seen by looking at all possible rays of $\mathbf k$ in $\Omega.$

We shall omit the $\alpha_0$ dependence and set $\theta=0$ to simplify notation.
The formal inverse of $\mathcal K_{\mathbf k}$ is given by $  \mathcal K_{\mathbf k}^{-1}= \begin{pmatrix} \lambda(\lambda^2 - Q_{\mathbf k}Q_{\mathbf k}^*)^{-1} & -Q_{\mathbf k}(\lambda^2-Q_{\mathbf k}^*Q_{\mathbf k})^{-1} \\ -(\lambda^2-Q_{\mathbf k}^*Q_{\mathbf k})^{-1}Q_{\mathbf k}^* & \lambda(\lambda^2-Q_{\mathbf k}^*Q_{\mathbf k})^{-1}  \end{pmatrix}.$
The operator $\RR \ni \mathbf k_1 \mapsto \mathcal K_{\mathbf k}$  for $\mathbf k_2$ fixed, and $\lambda \in \RR$ is a self-adjoint holomorphic family with compact resolvent on $L^2(\mathbb C/\Gamma)$. A flat band would imply that $\mathcal K_{\mathbf k}(\alpha_0)$ is not invertible for any $\mathbf k \in \CC.$
To simplify the analysis, we write
\begin{equation}
\begin{split}
Q_{\mathbf k} &= \underbrace{\sum_{j=1}^2 (D_j +\mathbf k_j-A_j(x))\sigma_j}_{=:{H}_{\text{D},\mathbf k}} + \mathcal V \text{ and }
Q_{\mathbf k}^* = \underbrace{\sum_{j=1}^2 (D_j +\mathbf k_j-A_j(x))\sigma_j}_{=:{H}_{\text{D},\mathbf k}} +\overline{\mathcal V},
\end{split}
\end{equation}
where $\mathcal V=\operatorname{diag}(V,\overline{V}).$ Recall also the Pauli operator ${H}_{\text{P},\mathbf k}$ given as
\begin{equation}
\begin{split}
{H}_{\text{P},\mathbf k}=({H}_{\text{D},\mathbf k})^2 = &\left((D_1+\mathbf k_1-A_1(x))^2+(D_2+\mathbf k_2-A_2(x))^2 \right)-  (\partial_1 A_2 - \partial_2 A_1)(x)\sigma_3.
\end{split}
\end{equation}
In this setting, we have that both $\mathbf k_1,\mathbf k_2$ are real.
Thus, we have for $S(\lambda)={H}_{\text{P},\mathbf k}-\lambda^2$
\begin{equation}
\begin{split}
Q_{\mathbf k}^*Q_{\mathbf k}-\lambda^2 &= (1+\underbrace{ ([Q_{\mathbf k}^*, \mathcal V]+\mathcal V Q_{\mathbf k}^* + \mathcal V^* Q_{\mathbf k}+ \mathcal V\mathcal V^*)S(\lambda)^{-1}}_{=:W_1(\lambda)})S(\lambda)\\
Q_{\mathbf k}Q_{\mathbf k}^*-\lambda^2 &= (1+\underbrace{ ([Q_{\mathbf k}, \mathcal V^*] + \mathcal V^*Q_{\mathbf k}+ \mathcal V Q_{\mathbf k}^*+ \mathcal V\mathcal V^*)S(\lambda)^{-1}}_{=:W_2(\lambda)})S(\lambda).
\end{split}
\end{equation}
We now complexify the real part of $\mathbf k$, which is $\mathbf k_1$, and choose $\mathbf k= \mathbf k_1+ i \mathbf k_2 $ with $\mathbf k_1:=(\mu+ i y)  $, where $\mu, y , \mathbf k_2 \in \RR.$ Since the principal symbol of $Q_{\mathbf k}$ is the Dirac operator and the Pauli operator its square, we find by self-adjointness that $\Vert  S(\lambda)^{-1} \Vert,  \Vert Q_{\mathbf k} S(\lambda)^{-1} \Vert, \Vert Q_{\mathbf k}^* S(\lambda)^{-1} \Vert = \mathcal O(\vert y\vert^{-1}).$

Assuming that there exists a flat band to $\mathcal K_{\mathbf k}$, it follows that
$-1 \in \Spec(W_1(\lambda))\cap \Spec(W_2(\lambda))$ in a complex neighbourhood of $\mathbf k_1 \in \RR$ by Rellich's theorem.
Then \cite[Thm $1.9$]{Ka} implies that for all $\mathbf k_1 \in \CC$ we have $-1 \in \Spec(W_1(\lambda)) \cap \Spec(W_2(\lambda)).$
 But this is impossible, by the estimates on $\Vert  S(\lambda)^{-1} \Vert$  for $\vert y \vert$ large enough.
\end{proof}


For general magnetic fields, the concept of bands does not apply. Instead, since a flat band for a Floquet operator, corresponds to an eigenvalue of infinite multiplicity of the original operator, one should study such eigenvalues of infinite multiplicity. Then, we have the following result that we split up into one statement on flat bands and one on eigenvalues of infinite multiplicity
\begin{figure}
\includegraphics[height=4cm,width=6cm]{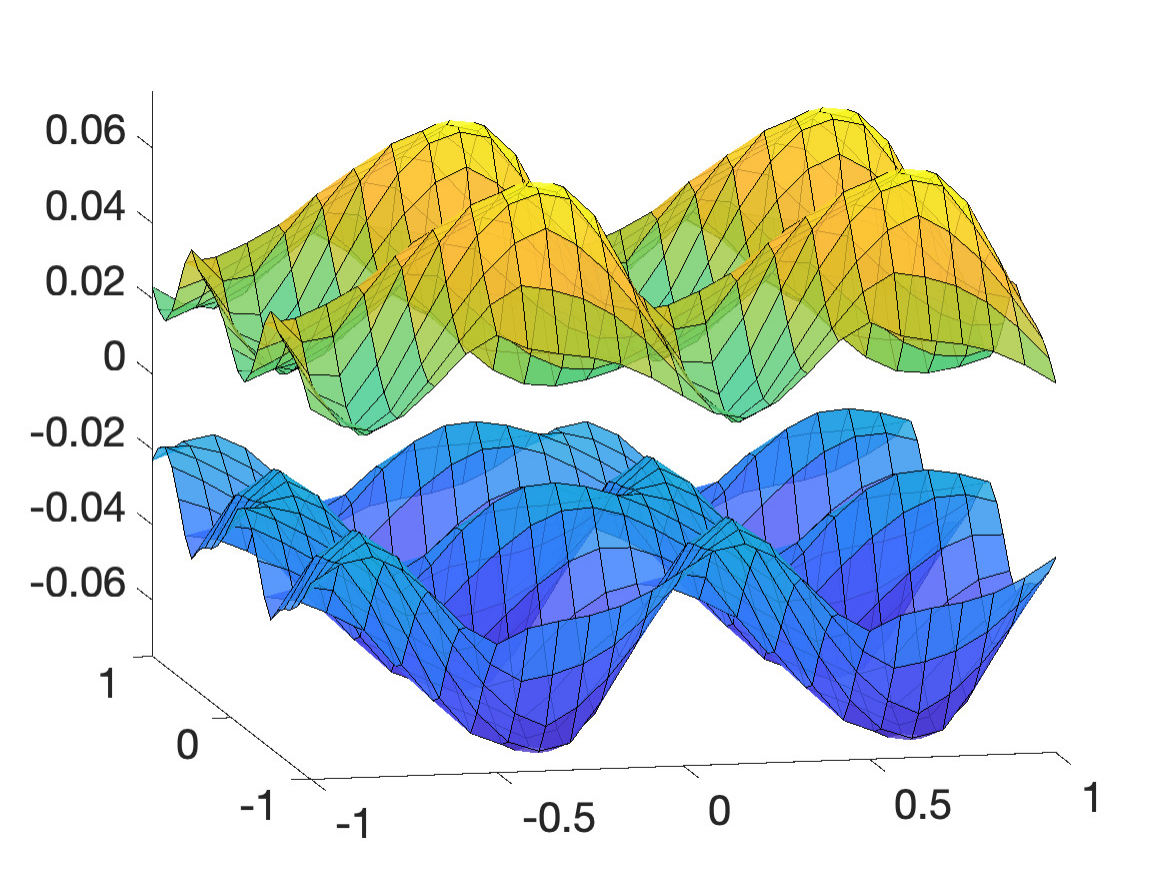}
\includegraphics[height=4cm,width=6cm]{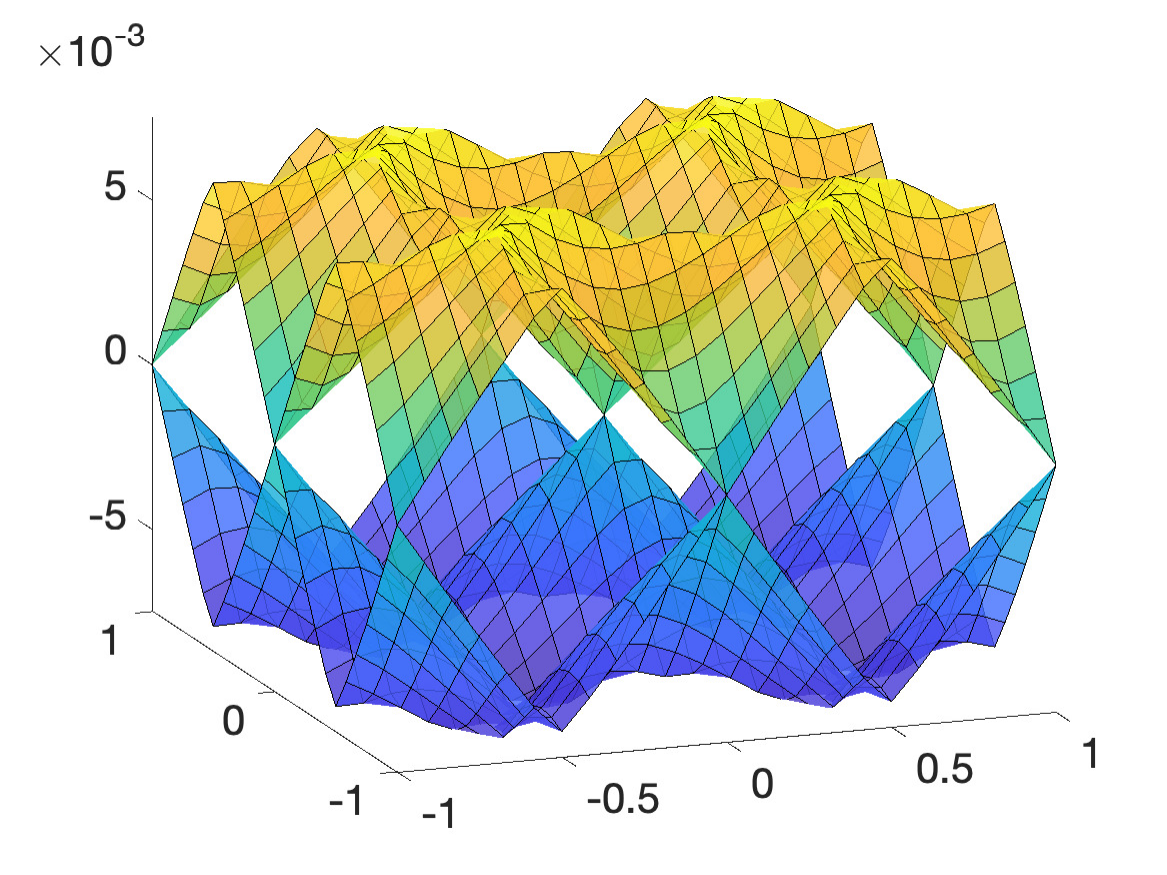}
\caption{Periodic magnetic field $A_1(z) = 2\sqrt{3} \cos(\Im(z))$: On the left, the lowest bands of the anti-chiral Hamiltonian, $\alpha_0=1$, where $0$ is not protected, under periodic magnetic perturbations, on the right the lowest bands of the chiral Hamiltonian, $\alpha_1=1$, where $0$  is protected in the spectrum.}
\end{figure}
\begin{theo}[Eigenvalues]
\label{theo: eigenvalues}
Let $\alpha_1\in \mathcal A$ be a magic angle. When adding to $H_c(B=0)$ any magnetic field $B \in L^{\infty}_{\operatorname{comp}}$ with flux $\lfloor\Phi/2\pi\rfloor \ge 1$\footnote{We let $\lfloor y\rfloor$ be the largest integer \emph{strictly} less than $y$.} or any periodic magnetic potential $A_{\operatorname{per}} \in C^{\infty}(E),$ the operator $H_c(B)$ has an eigenvalue of infinite multiplicity at $0$. If $\alpha_1$ is not magic, and $B \in L^{\infty}_{\operatorname{comp}}$ as above, then $H_c$ possesses an eigenvalue of multiplicity $\lfloor\Phi/2\pi\rfloor$ at zero. In particular, for non-zero constant magnetic fields the chiral Hamiltonian possesses an eigenvalue of infinite multiplicity at zero for any $\alpha_1 \in \mathbb R.$
\end{theo}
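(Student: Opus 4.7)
The plan is to deduce all assertions from a single gauge identity. Solve $2i\partial_{\bar z}\chi=A$ with $\chi\in C^\infty(\CC)$: for periodic $A_{\operatorname{per}}$ by the Fourier construction of Proposition \ref{prop:periodic magnetic field}, giving $\chi$ bounded on $\CC$; for compactly supported $B$ by the logarithmic potential $\Re\chi(z)=\frac{1}{2\pi}\int_{\CC}\log|z-w|\,B(w)\,dw$, which gives $\Re\chi(z)\sim\frac{\Phi}{2\pi}\log|z|$ at infinity. In both cases $e^{-\chi}$ is a bounded, nowhere-vanishing multiplier on $L^2(\CC)$, and $e^{\chi}\,a\,e^{-\chi}=2D_{\bar z}$ together with the fact that $U,U_-$ commute with $e^{\pm\chi}$ yields the algebraic identity
\begin{equation*}
D_c(\alpha_1,B)=e^{-\chi}D_c(\alpha_1,0)e^{\chi}.
\end{equation*}
Consequently, for every $\tilde\psi\in\ker_{L^2(\CC)}D_c(\alpha_1,0)$ the function $\psi:=e^{-\chi}\tilde\psi$ lies in $L^2(\CC)$ with $D_c(\alpha_1,B)\psi=e^{-\chi}D_c(\alpha_1,0)\tilde\psi=0$, and pointwise non-vanishing of $e^{-\chi}$ produces an embedding $\ker_{L^2}D_c(\alpha_1,0)\hookrightarrow\ker_{L^2}D_c(\alpha_1,B)$.

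At a magic angle, the Bloch--Floquet decomposition of Theorem \ref{theo: Floquet theory} with $B=0$ realises $H_c(\alpha_1,0)$ as the direct integral $\int^{\oplus}H_{c,\bfk}\,d\bfk$ over $\bfk\in\CC/\Gamma^{*}$, and by definition of a magic angle each fibre kernel is non-trivial and (by the $A$--sublattice polarisation of the flat band in Theorems \ref{theo:away_mag_angl} and \ref{theo:simple}) sits inside $\ker(D_c(\alpha_1,0)+\bfk)\times\{0_{\CC^{2}}\}$. Measurable sections of the flat--band fibres therefore span an infinite-dimensional subspace of $\ker_{L^2(\CC)}D_c(\alpha_1,0)$, and the embedding above transports this into an infinite-dimensional subspace of $\ker_{L^2(\CC)}H_c(B)$. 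This handles both the periodic case and the compactly supported case with $\lfloor\Phi/2\pi\rfloor\ge 1$.

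For a non-magic $\alpha_1$ with compactly supported $B$ the flat band is absent and the embedding is trivial, so I would instead produce zero modes from the Aharonov--Casher nullspace of the scalar factor: $\dim\ker_{L^2(\CC)}a=\lfloor\Phi/2\pi\rfloor$ via the ansatz $f=e^{-\chi}P(z)$ with $P$ entire of degree strictly less than $\Phi/(2\pi)-1$, whereas $\ker_{L^2(\CC)}a^{*}=\{0\}$ for the opposite orientation of the flux. Combined with the periodic zero mode $u_2(\alpha_1)\in\ker D_c(\alpha_1,0)$ from Theorem \ref{theo:away_mag_angl} and the Leibniz identity $D_c(\alpha_1,B)(u_2(\alpha_1)f)=u_2(\alpha_1)\otimes a f$, each $f\in\ker_{L^2}a$ gives an independent zero mode of $H_c(B)$. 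The complementary family based on $u_1(\alpha_1)$ would require a non-zero $L^2$ solution to $(a-\bfk_0)f=0$; after the gauge substitution $f=e^{i\bfk_0\bar z/2}h$ with $h\in\ker a$, the extra factor produces exponential growth in a half--plane and rules out any $L^2$ contribution. The final ``in particular'' statement for constant magnetic fields is then immediate from the Corollary to Theorem \ref{theo:away_mag_angl} and from Theorems \ref{theo:simple}--\ref{theo:double}: for any constant $B\neq 0$ the relevant fibre already contains the infinite-dimensional Bargmann nullspace of $a_{\bfk}$, and integration over the magnetic Brillouin zone preserves infinite multiplicity.

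The main obstacle is the magic--angle infinite multiplicity statement: although morally a flat band in a Bloch--Floquet decomposition is an embedded eigenvalue of infinite multiplicity, $D_c(\alpha_1,0)$ itself is not self-adjoint and some care is required to run the direct-integral argument. The cleanest route is to carry all of the spectral information through the self-adjoint companion $H_c=\begin{pmatrix}0 & D_c^{*}\\ D_c & 0\end{pmatrix}$, recognise the flat band as an eigenspace of $H_c(\alpha_1,0)$, restrict back to the $D_c$--part using the $A$-sublattice polarisation, and verify that the non-unitary conjugation by $e^{\pm\chi}$ is compatible with the relevant operator domains in the compactly supported regime.
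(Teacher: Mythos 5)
Most of your construction is the same as the paper's: zero modes of $H_c(B)$ are manufactured by multiplying zero modes of the non-magnetic chiral operator with zero modes of the scalar magnetic operator $a=2D_{\bar z}-A$ (equivalently, by the weight $e^{-\chi}$), the magic-angle infinite multiplicity comes from the infinitely many zero modes available at magic $\alpha_1$, and the periodic-potential case is handled exactly as in Proposition \ref{prop:periodic magnetic field}. The only packaging difference is that the paper multiplies \emph{bounded} Floquet-type zero modes of $D_c(\alpha_1,0)$ by one $L^2$ Aharonov--Casher mode, while you conjugate $L^2$ wave packets over the flat band by the bounded factor $e^{-\chi}$; both work. (A small caveat: the $A$-sublattice polarization you quote from Theorems \ref{theo:away_mag_angl} and \ref{theo:simple} is a $B>0$ statement; at $B=0$ the fibre kernels of $H_{c,\bfk}$ also contain a $B$-sublattice part, but all you need is $\ker(D_c(\alpha_1,0)+\bfk)\neq\{0\}$ for all $\bfk$, which is the definition of a magic angle.)

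The genuine gap is your exclusion of the $u_1$-family in the non-magic, compactly supported case. You argue that $\ker_{L^2(\CC)}(a-\bfk_0)=\{0\}$ because the intertwiner $e^{i\bfk_0\bar z/2}$ grows exponentially in a half-plane. That intertwiner is not the only one: a constant momentum shift is a \emph{pure gauge} on $L^2(\CC)$. With the unimodular weight $w(z)=e^{i\Re(\bfk_0\bar z)}$ one has $2D_{\bar z}w=\bfk_0 w$, hence $(a-\bfk_0)(wf)=w\,af$, so $\ker_{L^2(\CC)}(a-\bfk_0)=w\cdot\ker_{L^2(\CC)}(a)$ has dimension $\lfloor\Phi/2\pi\rfloor$, not $0$. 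Consequently $u_1(\alpha_1)\,wf$, $f\in\ker_{L^2(\CC)}(a)$, are further zero modes of $D_c(\alpha_1,B)$; at $\alpha_1=0$ this is transparent, since $D_c(0,B)=\diag(a,a)$ has kernel $\ker_{L^2}(a)\times\ker_{L^2}(a)$. So your argument delivers the lower bound $\lfloor\Phi/2\pi\rfloor$ — which is what the paper's own proof produces via the products $\varphi_j\psi$ — but the exactness step as you wrote it is false and cannot be repaired along those lines. Separately, for the closing ``in particular'' about constant fields you lean on the rational-flux Theorems \ref{theo:simple}--\ref{theo:double}, which do not cover a magic angle with irrational flux; your own multiplier argument does, since for constant $B>0$ the space $\ker_{L^2(\CC)}(a)$ is the infinite-dimensional Bargmann space and the products $u_2(\alpha_1)f$ already give infinite multiplicity for every $\alpha_1$, which is exactly how the paper concludes.
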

\begin{proof}
To see that $0$ is in the spectrum of the chiral magnetic Hamiltonian, we use that we can multiply any $\psi=(\psi_1,0) \in \CC^4$, with $D_c(B=0)\psi_1=0$ by a $\varphi_1$, satisfying $(2D_{\bar z}-A(z))\varphi_1=0,$ to define the new function $\chi=(\varphi_1 \psi_1,0)$ which then satisfies $H_c(B) \chi=0.$ While there also exists a solution $D_c(B=0)^*\psi_2=0$, there does not exist a zero mode to the operator $2D_z-\overline{A(z)}$ for $B>0.$

By the Aharonov-Casher effect \cite[Sec.$6.4$]{CFKS}, there are precisely $\lfloor \Phi/(2\pi) \rfloor$ linearly independent square-integrable zero modes $\varphi_1,...,\varphi_{\lfloor \Phi/(2\pi) \rfloor}$. Multiplying this with the Floquet-periodic zero modes of the chiral model $\psi$, which exist for all $\alpha_1 \in \mathbb C$, this gives the claim for the magnetic fields of compact support. When $\alpha_1$ is magic, the non-magnetic Hamiltonian exhibits an eigenvalue of infinite multiplicity at energy zero. Thus, there exists a countable family of zero modes $(\psi_n)_n$ to $D_c(B=0)$ at $\alpha_1$ magic. Thus, we can construct a countable family of zero modes $\chi_n:=\varphi_1 \psi_n$ to $D_c(B)$ at $\alpha_1$ magic.   

Turning to constant magnetic fields with flux $\Phi \in 2\pi \mathbb Q$, through a fundamental domain $E$ for some $n$, then by adding potentials $A_{\operatorname{per}} \in C^{\infty}(E)$, there is by Proposition \ref{prop:periodic magnetic field} a $\varphi_{\mathbf k}$ such that $(a_k+A_{\operatorname{per}})\varphi_{\mathbf k}=0$ and the existence of a flat band at zero follows. If the fields are not commensurable, the same argument shows the existence of an eigenvalue of infinite multiplicity.

The stability under perturbations by periodic magnetic potentials, follows directly from the existence of periodic $\psi_0\neq 0$ such that, $(2D_{\bar z}+A_{\operatorname{per}}) \psi_0=0.$
\end{proof}

\section{H\"ormander condition and exponential localization of bands}
\label{sec:loc_bands}
In this section, we study the exponential squeezing of bands for periodic magnetic fields and small angles. In particular, we shall see that in the chiral model, there will be at least $\sim 1/\theta$ many bands in an exponentially (in $\theta$) small neighbourhood around zero. We deduce this property by studying the existence of localized quasi-modes in phase space. Phrased differently, for small twisting angles any angle \emph{wants to be magic}. We shall prove this for the chiral model and then show that in the anti-chiral model such quasi-modes do not exist. In the case of the non-magnetic BM Hamiltonian, this has been established in \cite{BEWZ20a,BEWZ20b}.

\subsection{Exponential squeezing in chiral model}
\label{SS:Expsqueeze}

The chiral model possesses in general a lot quasi-modes located close to the zero energy level. Indeed, since $h=1/B$ is our semiclassical parameter, the principal symbol of $hD_c^{\theta}$, with $D_c^{\theta}$ as in \eqref{eq:DcB}, is just  $p(z,\zeta):=\sigma_0(hD_c^{\theta})(z,\zeta)=2\bar \zeta-A(z)$. The existence of localized modes will depend on the vanishing/non-vanishing of the bracket 
\begin{equation}
\label{eq:PB}
\{p,\bar p\}(z)=2(\partial_{ z}A(z)-\partial_{\bar z}\overline{A(z)})=2i B(z).
\end{equation}
We observe that with our quantization, the principal symbol and consequently the Poisson bracket are independent of the potentials. To see the effect of the potentials, one may look at the non-equivalent tight-binding limit
\begin{equation}
\label{eq:Dctheta}
 \mathcal D_{c,\text{TB}}^{\theta} = \begin{pmatrix} 2\theta D_{\bar z} - A(z)&  U(z) \\  U_-(z) & 2\theta D_{\bar z} - A(z)\end{pmatrix}.
 \end{equation}
The semiclassical principal symbol of $ \mathcal D_{c,\text{TB}}^{\theta}$ is given by 
\[ \sigma_0( \mathcal D_{c,\text{TB}}^{\theta})(z,\zeta) = \begin{pmatrix} 2  \bar \zeta -A(z) &  U(z) \\  U_-(z) & 2  \bar \zeta -A(z)\end{pmatrix}.\]
The determinant of the principal symbol of $ \mathcal D_{c,\text{TB}}^{\theta}$ and its conjugate symbol is given for $W(z):= U(z)U_-(z)$ by $q(z,\zeta):=\left(2\bar \zeta - A(z) \right)^2 - W(z).$

We then have the following existence of quasimodes result, which for the semiclassical scaling in $\mathcal D_{c}^{\theta}$ follows from the Poisson-bracket \eqref{eq:PB} along the lines as presented for $\mathcal D_{c,\text{TB}}^{\theta}$ below. 
 \begin{prop}
 \label{quasimode}
There exists an open set $\Omega\subset \mathbb C$ and a constant $c$ such that for any $\mathbf k\in \mathbb C$ and $z_0\in \Omega$, there exists a family $\theta\mapsto \mathbf u_{\theta}\in C^\infty(\mathbb C/\Gamma;\mathbb C^2)$ such that for $0<\theta<\theta_0$,
 \begin{equation}
 \label{eq:quasimode}|(\mathcal D_{c,\text{TB}}^{\theta}-\theta\mathbf k)\mathbf u_{\theta}(z)|\le e^{-c/\theta},\quad \Vert \mathbf u_\theta\Vert_{L^2}=1,\quad |\mathbf u_\theta(z)|\le e^{-c|z-z_0|^2/\theta}.
 \end{equation}
 \end{prop}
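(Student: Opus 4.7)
The plan is to construct $\mathbf u_\theta$ as a complex Gaussian beam concentrated at a point $(z_0,\zeta_0)$ of the characteristic variety of the determinant symbol
\[
q(z,\zeta)=(2\bar\zeta-A(z))^2-W(z),
\]
with semiclassical parameter $h=\theta$. This is the standard H\"ormander quasi-mode construction: whenever we can find a characteristic point $(z_0,\zeta_0)\in q^{-1}(0)$ at which $\tfrac{1}{2i}\{q,\bar q\}(z_0,\zeta_0)<0$, a complex-phase WKB ansatz produces exponentially accurate quasi-modes whose Gaussian concentration is determined by a stable manifold of the complex Hamiltonian flow.

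\emph{Characteristic point and eikonal.} For $z_0$ in a small open set $\Omega$ I would choose $\zeta_0\in\mathbb C$ solving $2\bar\zeta_0-A(z_0)=\pm\sqrt{W(z_0)}$ and then solve the eikonal equation
\[
(2\,\overline{\partial_z\phi(z)}-A(z))^2=W(z),\qquad \partial_z\phi(z_0)=\zeta_0
\]
to infinite order at $z_0$ for a smooth complex phase $\phi$. The quadratic part of $\phi$ is determined by an algebraic Riccati equation whose solution has positive-definite imaginary part precisely because of the sign condition on $\{q,\bar q\}/2i$; all higher Taylor coefficients are fixed inductively. A Borel summation yields a smooth $\phi$ on a neighbourhood $V$ of $z_0$ with $\operatorname{Im}\phi(z)\ge c|z-z_0|^2$.

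\emph{Amplitude, truncation, periodization.} I would then look for $\tilde{\mathbf u}_\theta(z)=\chi(z-z_0)e^{i\phi(z)/\theta}\mathbf a(z;\theta)$ with $\chi\in C_c^\infty(V)$ a cutoff and $\mathbf a\sim\sum_{j\ge 0}\theta^j\mathbf a_j$. The leading transport equation forces $\mathbf a_0(z_0)$ into the one-dimensional kernel of $\sigma_0(\mathcal D_{c,\text{TB}}^\theta-\theta\mathbf k)(z_0,\partial_z\phi(z_0))$, while the $\mathbf a_j$ for $j\ge 1$ are produced by the usual triangular system of complex transport ODEs along the bicharacteristic flow (the shift by $\theta\mathbf k$ is a zeroth-order perturbation and does not impede solvability). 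Borel summation together with the Gaussian weight $e^{-\operatorname{Im}\phi/\theta}$ turns the $\mathcal O(\theta^\infty)$ symbolic error into an $e^{-c/\theta}$ bound for $(\mathcal D_{c,\text{TB}}^\theta-\theta\mathbf k)\tilde{\mathbf u}_\theta$. Finally I periodize,
\[
\mathbf u_\theta(z)=Z_\theta^{-1}\sum_{\gamma\in\Gamma}\tilde{\mathbf u}_\theta(z-\gamma),
\]
with $Z_\theta$ the $L^2(\mathbb C/\Gamma)$ normalization; Gaussian decay makes the shifts $\gamma\ne 0$ contribute $\mathcal O(e^{-c'/\theta})$, so the pointwise bound $|\mathbf u_\theta(z)|\le e^{-c|z-z_0|^2/\theta}$ survives on the fundamental domain containing $z_0$, and the remainder estimate is preserved since $\mathcal D_{c,\text{TB}}^\theta$ is $\Gamma$-periodic and $Z_\theta$ is polynomial in $\theta$.

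The substantive step is verifying the sign $\tfrac{1}{2i}\{q,\bar q\}(z_0,\zeta_0)<0$ on a nonempty open set. Because the matrix symbol of $\mathcal D_{c,\text{TB}}^\theta$ depends on $\zeta$ only through $\bar\zeta$, the ``magnetic'' part of the bracket reduces cleanly via \eqref{eq:PB} to $\{p,\bar p\}=2iB(z)$, but passing from $p$ to $q=p^2-W$ introduces cross terms $\{p^2,\overline W\}$ and $\{W,\bar p^{\,2}\}$ involving derivatives of $W$ at $z_0$ that could in principle flip the sign. The hard point is to show that for the explicit $W=UU_-$ determined by \eqref{eq:symmU}, at least one of the two branches $\zeta_0^{\pm}=\tfrac12(\overline{A(z_0)}\pm\overline{\sqrt{W(z_0)}})$ yields a bracket of the correct sign on a nonempty open $\Omega$; once this local symbolic computation is in hand, the complex WKB machinery of the previous two paragraphs runs automatically.
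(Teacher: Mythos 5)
Your high-level route is the same as the paper's: pass to the determinant symbol $q(z,\zeta)=(2\bar\zeta-A(z))^2-W(z)$, $W=UU_-$, and invoke the H\"ormander bracket condition at a characteristic point to produce Gaussian-localized quasimodes. The only stylistic difference is that the paper simply quotes the real-analytic quasimode construction of \cite[Theorem 1.2]{DSZ04}, whereas you re-sketch the complex WKB/eikonal/transport machinery behind it; that sketch is standard and unobjectionable, and your periodization remark is harmless since the quasimode is compactly supported in a small neighbourhood of $z_0$.

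The genuine gap is the step you yourself flag and then defer: you never verify that $\{q,\bar q\}$ is nonvanishing (with the sign producing quasimodes for $\mathcal D_{c,\text{TB}}^{\theta}-\theta\mathbf k$ itself rather than for its adjoint) at a characteristic point lying over a nonempty open set $\Omega$. That verification is essentially the entire content of the paper's proof, and without it your argument only reduces the proposition to an unproved claim about $W=UU_-$. The paper settles it by expanding at a zero of $U$: since $U(0)=1+\omega+\omega^2=0$, one has $W(z)=-z^2(\partial_zU(0))^2(1+\mathcal O(|z|))$ and $\partial_zW(z)=-2z(\partial_zU(0))^2(1+\mathcal O(|z|))$; inserting this into the bracket evaluated on $\{q=\bar q=0\}$, namely $\{q,\bar q\}=8i|W(z)|B(z)-8i\Im\bigl(\partial_zW(z)\,\overline{W(z)}^{1/2}\bigr)$, gives $\{q,\bar q\}=8i|z|^2|\partial_zU(0)|^2\bigl(B(z)-2\Re(\partial_zU(0))\bigr)(1+\mathcal O(|z|))$ for one choice of the square-root branch (the other branch flips the sign of the $2\Re(\partial_zU(0))$ term). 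This is nonzero on a punctured neighbourhood of $0$ intersected with $\{B(z)\neq 2\Re(\partial_zU(0))\}$, which is the open set $\Omega$, and the branch/region can be chosen so that $\tfrac{1}{2i}\{q,\bar q\}$ has the sign needed for the operator rather than its adjoint. So the cross terms you worry about do not destroy the bracket, but this must be computed, not assumed; to complete your proposal you need to carry out exactly this expansion (or an equivalent explicit evaluation of $\{q,\bar q\}$ for the concrete $U$, $U_-$ of \eqref{eq:symmU}).
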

 \begin{proof} 
Since the Poisson-bracket in complex coordinates reads
 \begin{equation}
 \begin{split}
 \{ q_1, q_2 \} = \partial_\zeta q_1 \partial_z q_2 + 
 \partial_{\bar \zeta} q_1  \partial_{\bar z} q_2 - 
\partial_z q_1 \partial_\zeta q_2  -
\partial_{\bar z} q_1 \partial_{\bar \zeta}  q_2,
 \end{split}
 \end{equation}
 we find that under the constraint that $q=\bar q=0$ at some point $(z,\zeta)$ 
 \begin{equation}
 \begin{split}
 \{ q, \bar q \}(z,\zeta) &=  
(\partial_{\bar \zeta} q \partial_{\bar z} \bar q - 
\partial_z q \partial_\zeta \bar q )(z,\zeta)
=8 i \vert W(z) \vert B(z)- 8i \Im( \partial_z W(z) \overline{W(z)}^{1/2}).
 \end{split}
 \end{equation}
 We then have that using that $U(z)=0$
 \[ W(z) = -z^2  (\partial_z U(0))^2(1+ \mathcal O(\vert z \vert)) \text{ and } \partial_z W(z) =-2z  (\partial_z U(0))^2(1+ \mathcal O(\vert z \vert)),\]
 which shows that 
 \[ \overline{W(z)}^{1/2} \partial_z W(z) = 2i  \vert z \vert^2 \vert \partial_z U(0) \vert^2 \partial_z U(0) (1+\mathcal O(\vert z\vert)).\]
 This implies the following expansion of the Poisson bracket at zero
 \[ \{q,\bar q\} = 8i  \vert z\vert^2 \vert \partial_z U(0)\vert^2 (B(z)-2 \Re( \partial_zU(0)))(1+\mathcal O(\vert z \vert)).\]

The result then follows from a real-analytic version \cite[Theorem 1.2]{DSZ04} of H\"ormander's local solvability condition:
For a differential operator $ Q = \sum_{ |\alpha| \leq m } 
a_\alpha ( x, \theta ) ( \theta D)^\alpha $ with real-analytic maps
$ x \mapsto a_\alpha ( x, \theta ) $ near some $ x_0 $, we let
$ q ( x, \xi ) $ be the semiclassical principal symbol of $ Q $. If for phase space coordinates $(x_0,\xi_0)$ we have $ q ( x_0 , \xi_0 ) = 0 , \ \  \{ q , \bar q \} ( x_0 , \xi_0 ) \neq 0 , $
then there exists a family $ v_\theta \in C^\infty_{\rm{c}} ( \Omega ) $, $ \Omega $ is a
neighbourhood of $ x_0 $, such that for some $ c > 0$
\begin{equation}
\label{eq:quas}
| (\theta D)^\alpha_x Q  v_\theta (x ) | \leq C_\alpha e^{ - c / \theta } , \ \
\| v_\theta \|_{L^2}  = 1, \ \ | (\theta \partial_x)^\alpha v_\theta ( x ) | \leq C_\alpha e^{ - c | x- x_0|^2/ \theta }.
\end{equation}
\end{proof}

We then have the following result exhibiting the exponential squeezing of bands:
\begin{theo}[Exponential squeezing of bands]
\label{theo:exp_squeez_bands}
Consider the semiclassical scaling of the chiral Hamiltonian with magnetic potential $A \in C^{\infty}(E)$ inducing a non-zero magnetic field or consider the chiral Hamiltonian with tight-binding scaling \eqref{eq:Dctheta} and arbitrary magnetic potential $A \in C^{\infty}(E)$. For the Floquet-transformed operator, the spectrum is a union of bands
\[ \Spec_{L^2(E)} (H_{\mathbf k}^{\theta}) = \{ E_j(\mathbf k,\theta)\}_{j \in \ZZ}, E_j(\mathbf k,\theta ) \le E_{j+1}(\mathbf k,\theta), \mathbf k \in \CC,\]
where $E_0(\mathbf k,\theta) = \min_j \vert E_j(\mathbf k,\theta) \vert.$ 
Then there exist constants $c_0,c_1,c_2>0$ and $\theta_0>0$ such that for all $\mathbf k \in \CC$ and $\theta\in (0,\theta_0)$, 
\[ \vert E_j(\mathbf k,\theta) \vert \le c_0e^{-c_1/\theta}, \vert j \vert \le c_2 \theta^{-1}. \]
\end{theo}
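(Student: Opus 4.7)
The plan is to turn the localized quasimodes of Proposition \ref{quasimode} into an exponentially small cluster of $\sim 1/\theta$ eigenvalues of the Floquet fiber $H_{\mathbf k}^\theta$ by an almost-orthonormalization followed by a min-max argument.

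First, since $H_c$ has the off-diagonal block form \eqref{eq:DcB}, the non-zero spectrum of the fiber $H_{\mathbf k}^\theta$ is precisely the set $\{\pm s_j(\mathbf k,\theta)\}$ of signed singular values of the reduced operator $\mathcal D := \mathcal D_{c,\text{TB}}^{\theta}-\theta\mathbf k$ (or its semiclassical analogue) appearing in \eqref{eq:DcB} and \eqref{eq:Dctheta}. It therefore suffices to produce, for each $\mathbf k\in\CC$, an $N$-dimensional subspace $V\subset L^2(\CC/\Gamma;\CC^2)$ with $N\ge c_2/\theta$ on which
\begin{equation*}
\|\mathcal D v\|_{L^2}\le c_0 e^{-c_1/\theta}\|v\|_{L^2}.
\end{equation*}
The variational principle applied to $\mathcal D^*\mathcal D$ then yields $N$ singular values below $c_0 e^{-c_1/\theta}$, whence at least $N$ of the $E_j(\mathbf k,\theta)$ lie in $[-c_0 e^{-c_1/\theta},c_0 e^{-c_1/\theta}]$.

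To build $V$, I would choose a maximal $\rho$-separated family $\{z_0^{(1)},\dots,z_0^{(N)}\}\subset\Omega$ with $\Omega$ as in Proposition \ref{quasimode} and separation $\rho = C_0\sqrt{\theta\,|\log\theta|}$. A volume count inside the unit cell of $\Gamma$ gives $N\asymp 1/(\theta|\log\theta|)$, which is at least $c_2/\theta$ after a mild enlargement of $c_1$. Proposition \ref{quasimode} then supplies $L^2$-unit quasimodes $\mathbf u_\theta^{(i)}$ with the pointwise Gaussian localization $|\mathbf u_\theta^{(i)}(z)|\le e^{-c|z-z_0^{(i)}|^2/\theta}$ and residual $\|\mathcal D\,\mathbf u_\theta^{(i)}\|_{L^2}\le C_1 e^{-c/\theta}$. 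A direct Gaussian integration yields the overlap estimate
\begin{equation*}
|\langle \mathbf u_\theta^{(i)},\mathbf u_\theta^{(j)}\rangle|\le C_2\, e^{-c\rho^2/(2\theta)}=O(\theta^{cC_0^2/2}),
\end{equation*}
so the Gram matrix $G=(\langle\mathbf u_\theta^{(i)},\mathbf u_\theta^{(j)}\rangle)$ satisfies $\|G-I\|_{\operatorname{op}}<1/2$ provided $C_0$ is chosen sufficiently large. Setting $\mathbf v_j:=\sum_i(G^{-1/2})_{ji}\mathbf u_\theta^{(i)}$ produces an orthonormal family spanning $V$, and the target bound $\|\mathcal D\mathbf v_j\|_{L^2}\le c_0 e^{-c_1/\theta}$ follows by distributing $G^{-1/2}=I+O(\theta^\gamma)$ through the quasimode estimate.

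The main obstacle is balancing the separation $\rho$: small $\rho$ packs more centers into the unit cell, while large $\rho$ is needed to keep the pairwise overlaps summable to $<1$ for the Gram inversion. The logarithmic factor $\sqrt{|\log\theta|}$ is the standard compromise and produces a mild log loss in $N$, absorbed into the constants $c_1,c_2$ since only existence is claimed. A secondary point is that Proposition \ref{quasimode} delivers quasimodes in $C^\infty(\CC/\Gamma;\CC^2)$, which is exactly the Floquet fiber space; the localization coming from \cite[Theorem 1.2]{DSZ04} ensures that the effect of the periodic boundary conditions costs only an exponentially small error, so the construction is consistent with the fiber picture throughout.
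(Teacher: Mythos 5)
Your overall strategy -- localized quasimodes at many well-separated centers in $\Omega$, almost-orthonormalization via the Gram matrix, then min--max for $\mathcal D^*\mathcal D$ on the compact Floquet fiber to convert an $N$-dimensional subspace with exponentially small $\|\mathcal D v\|$ into $N$ exponentially small singular values, hence $N$ squeezed bands of the off-diagonal Hamiltonian -- is sound and is essentially a self-contained rendering of the argument the paper invokes by citation: the paper's proof of Theorem \ref{theo:exp_squeez_bands} consists of combining Proposition \ref{quasimode} with \cite[Prop.~4.2]{BEWZ20a} and the proof of \cite[Theo.~5]{BEWZ20a}, where the same mechanism (families of almost orthogonal localized quasimodes feeding a variational count) produces the $\sim\theta^{-1}$ bands. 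Your reduction to singular values of $\mathcal D_{c,\text{TB}}^{\theta}-\theta\mathbf k$ is correct, and the uniformity in $z_0\in\Omega$ and $\mathbf k$ that you implicitly need is indeed supplied by the statement of Proposition \ref{quasimode}.

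There is, however, one genuine quantitative gap: with separation $\rho=C_0\sqrt{\theta|\log\theta|}$ your packing yields only $N\asymp 1/(\theta|\log\theta|)$ centers, and since $1/(\theta|\log\theta|)=o(\theta^{-1})$ as $\theta\downarrow 0$, no choice of constants rescues the count -- for any fixed $c_2>0$ one has $c_2\theta^{-1}>1/(\theta|\log\theta|)$ for all small $\theta$. The assertion that the log loss is ``absorbed into the constants $c_1,c_2$'' is therefore false: $c_1$ governs the exponential rate, not the number of bands, and as written you prove only $|j|\lesssim 1/(\theta|\log\theta|)$, strictly weaker than the theorem. The fix is to drop the logarithm: take $\rho=C_0\sqrt{\theta}$ with $C_0$ a large fixed constant. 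Then the overlap of quasimodes whose centers are at distance $k\rho$ is $O(e^{-c C_0^2 k^2/2})$, and a Schur test over the separated family gives $\|G-I\|_{\operatorname{op}}\le C\sum_{k\ge 1} k\, e^{-cC_0^2k^2/2}<1/2$ once $C_0$ is large, so the Gram inversion still works while now $N\asymp |\Omega|/(C_0^2\theta)\ge c_2\theta^{-1}$, as required. Two minor points to tighten: (i) when you push the residual bound through $G^{-1/2}$ you should bound the coefficient vector in $\ell^2$, which costs at most a factor $\sqrt{2N}=O(\theta^{-1/2})$, harmless against $e^{-c/\theta}$; (ii) the pointwise bound in Proposition \ref{quasimode} is only consistent with $\|\mathbf u_\theta\|_{L^2}=1$ up to a normalizing prefactor of size $\theta^{-1/2}$, which again only modifies your overlap estimates by a polynomial factor and does not affect the conclusion.
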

\begin{proof}
By using the above proposition and \cite[Prop. $4.2$]{BEWZ20a}, we can use the proof of \cite[Theo. $5$]{BEWZ20a} to deduce the result.
\end{proof}

\subsection{Anti-chiral model}
To see that the conclusion of Theorem \ref{theo:exp_squeez_bands} does not hold for the anti-chiral Hamiltonian, we proceed as follows, and shall again restrict us to the slightly more technical tight-binding scaling.
Consider in \eqref{eq:achiral} for small $\theta>0$ the operator, with $b_{\mathbf k}:=2\theta D_{ z} - \overline{A(z)}-\theta\overline{\mathbf k}$,
\[ \mathcal D_{\ach,\text{TB}}^{\theta}= \begin{pmatrix}  V & b_{\mathbf k} \\ b_{\mathbf k}^*&  \overline{V} \end{pmatrix}. \]
Then, the existence of a zero mode is equivalent, for $z \in \Omega:=\{ w \in \CC; V(w,\bar w)\neq 0\} $, to a zero mode of the operator
\[P(\theta)v_{\theta}(z) = \left(V(z)b^*_{\mathbf k}(V(z)^{-1} b_{\mathbf k} - \vert V(z) \vert^2 \right)v_{\theta}(z).\]
We then find
\begin{prop} If $u(\theta)$ is smooth on a bounded domain with $\WF_h(u(\theta))=\{(z_0,\zeta_0)\} \in T^* \Omega,$ then it follows that $\Vert u(\theta) \Vert \le \frac{C}{\theta} \Vert P(\theta)u(\theta)\Vert, \theta \downarrow 0.$
Thus, there do not exist any quasi-modes $P(\theta) u(\theta) =\mathcal O(\theta^{\infty}).$
\end{prop}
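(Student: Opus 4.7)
The strategy is to recognise $P(\theta)$ as a real-principal-type second-order semiclassical operator on $T^*\Omega$ whose Hamilton vector field is nowhere vanishing on the characteristic set, and to invoke semiclassical propagation of singularities. The single-point hypothesis $\WF_h(u(\theta)) = \{(z_0,\zeta_0)\}$ is then incompatible with non-trivial bicharacteristic flow, which furnishes both the quantitative lower bound and the announced quasi-mode non-existence.

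\textbf{Symbol computation.} Using the commutator identity $[b_{\bfk}^{*}, V] = -2i\theta\,\partial_{\bar z}V$, rewrite
\[
P(\theta) = b_{\bfk}^{*} b_{\bfk} + 2i\theta\, V^{-1}(\partial_{\bar z}V)\, b_{\bfk} - |V|^{2} ,
\]
so the Weyl principal symbol is $p_0(z,\zeta) = |2\zeta - \overline{A(z)}|^{2} - |V(z)|^{2}$, which is real-valued. Since $d_\zeta p_0(z,\zeta) = 0$ forces $2\zeta = \overline{A(z)}$ and hence, on the characteristic set $\Sigma = \{p_0 = 0\}$, also $V(z) = 0$, one has $dp_0 \neq 0$ on $\Sigma \cap T^{*}\Omega$; the Hamilton vector field $H_{p_0}$ is therefore nowhere vanishing there and $P(\theta)$ is of real principal type on $T^{*}\Omega$. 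Away from $\Sigma$ the operator is elliptic, and a standard semiclassical parametrix construction yields the much stronger bound $\|u\| \le C\|P(\theta)u\| + \mathcal O(\theta^{\infty})\|u\|$, reducing the problem to $(z_0,\zeta_0) \in \Sigma \cap T^{*}\Omega$.

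\textbf{Propagation and quasi-modes.} Pick $\rho_0 \neq (z_0,\zeta_0)$ on the bicharacteristic of $p_0$ through $(z_0,\zeta_0)$, and microlocal cutoffs $A_1$ equal to $1$ near $(z_0,\zeta_0)$ and $A_0$ equal to $1$ near $\rho_0$. The standard semiclassical propagation-of-singularities estimate gives
\[
\|A_1^{w} u\| \;\le\; C\theta^{-1}\|P(\theta)u\| + C\|A_0^{w} u\| + \mathcal O(\theta^{\infty})\|u\| .
\]
Since $\rho_0 \notin \WF_h(u)$ we have $\|A_0^{w} u\| = \mathcal O(\theta^{\infty})\|u\|$, while $\|A_1^{w} u\| = \|u\| + \mathcal O(\theta^{\infty})\|u\|$; absorbing errors yields the announced $\|u\| \le C\theta^{-1}\|P(\theta)u\|$. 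The quasi-mode non-existence is immediate: if $\|P(\theta)u\| = \mathcal O(\theta^{N})\|u\|$ for every $N$, then $1 \le C\theta^{N-1}$ for every $N$, impossible as $\theta \downarrow 0$.

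\textbf{Main obstacle.} The technical subtlety is that $P(\theta)$ is not self-adjoint: the term $2i\theta V^{-1}(\partial_{\bar z}V)\,b_{\bfk}$ contributes a genuinely complex subprincipal symbol. The propagation-of-singularities estimate for non-self-adjoint operators with real principal symbol is classical (Duistermaat--H\"ormander, and in its semiclassical form Dencker--Sj\"ostrand--Zworski), but its $\theta^{-1}$ loss depends on a uniform bound for the imaginary subprincipal along the bicharacteristic. On the bounded domain where we work all symbols are smooth, so this bound is automatic; alternatively, one can apply the propagation estimate to $\Re P(\theta) = (P(\theta) + P(\theta)^{*})/2$, a self-adjoint operator with the same principal symbol, and absorb the $\mathcal O(\theta)$ operator $(P(\theta) - P(\theta)^{*})/2$ into the right-hand side error.
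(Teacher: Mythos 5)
Your argument is correct and follows essentially the same route as the paper: you verify that the principal symbol $|2\zeta-\overline{A(z)}|^{2}-|V(z)|^{2}$ is real-valued with $\partial_\zeta p_0\neq 0$ on the characteristic set inside $T^*\Omega$ (using $V\neq 0$ there), i.e.\ that $P(\theta)$ is of real principal type, and then apply the semiclassical real-principal-type/propagation estimate, which is exactly the content of the result the paper cites ([Zw12, Theo.~12.4]); your explicit microlocal-cutoff propagation argument and the remark that the complex $\mathcal O(\theta)$ subprincipal term is harmless simply unpack that citation. No gap.
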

\begin{proof}
Since $\sigma_0(p)$ is real-valued, the condition $\partial p \neq 0 \text{ on } \{p=0\}$ precisely means that $p$ is of real principal type which implies the result by \cite[Theo. $12.4$]{Zw12}.
\end{proof}
The principal symbol of $P(\theta)$ is given by $\sigma_0(P(\theta))(z,\bar z,\zeta,\bar\zeta)=|2\zeta-(A_1-iA_2)|^2-\alpha_0^2|V(z,\bar z)|^2.$
This is of real principal type since real valued and $\sigma_0(P(\theta))=0$ implies $\partial_\zeta\sigma_0(P(\theta))=4(2\bar\zeta-(A_1+iA_2))\neq 0$ assuming $V(z,\bar z)\neq0$ for all $z\in\mathbb C$. By the proposition above, $\Vert P(\theta)u(\theta)\Vert$ is bounded below by function of order $\theta$. In particular, \eqref{eq:quasimode} does not hold.

\subsection{Landau levels in the absence of a magnetic field}

In the preceding subsection, we showed that for small angles there is an abundance of almost flat bands at small twisting angles close to zero energy. In this subsection, we try to address the question: Do similar quasimodes exist at non-zero energy levels?  

\begin{figure}
    \centering
    \includegraphics[width=5cm]{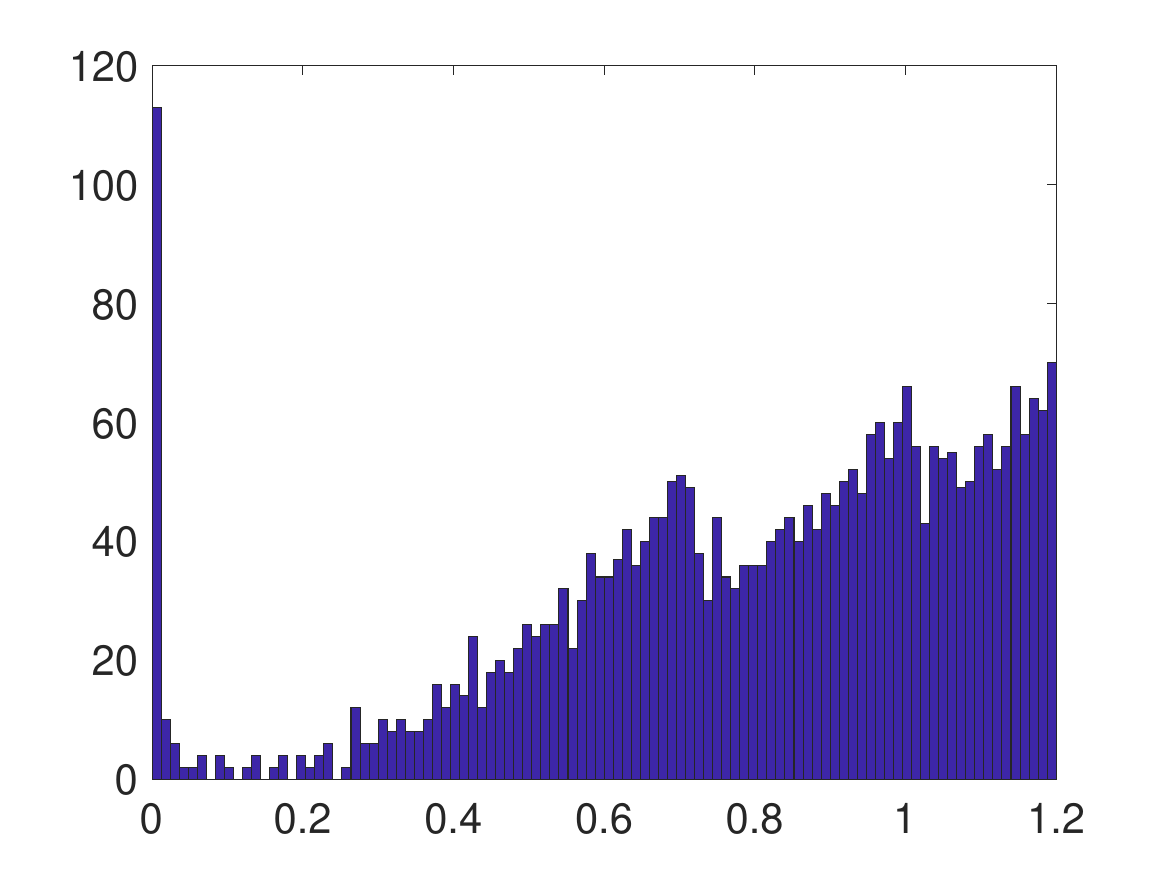}\includegraphics[width=5cm]{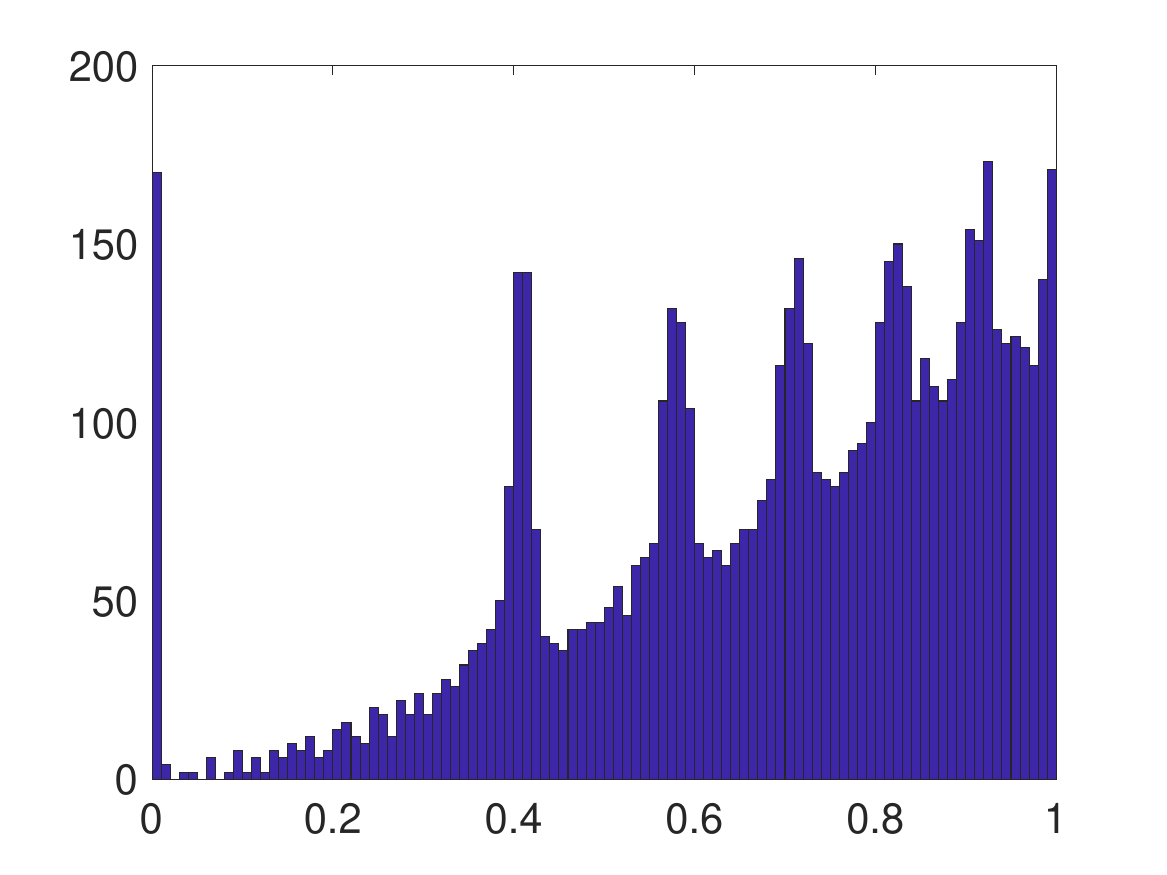}\includegraphics[width=5cm]{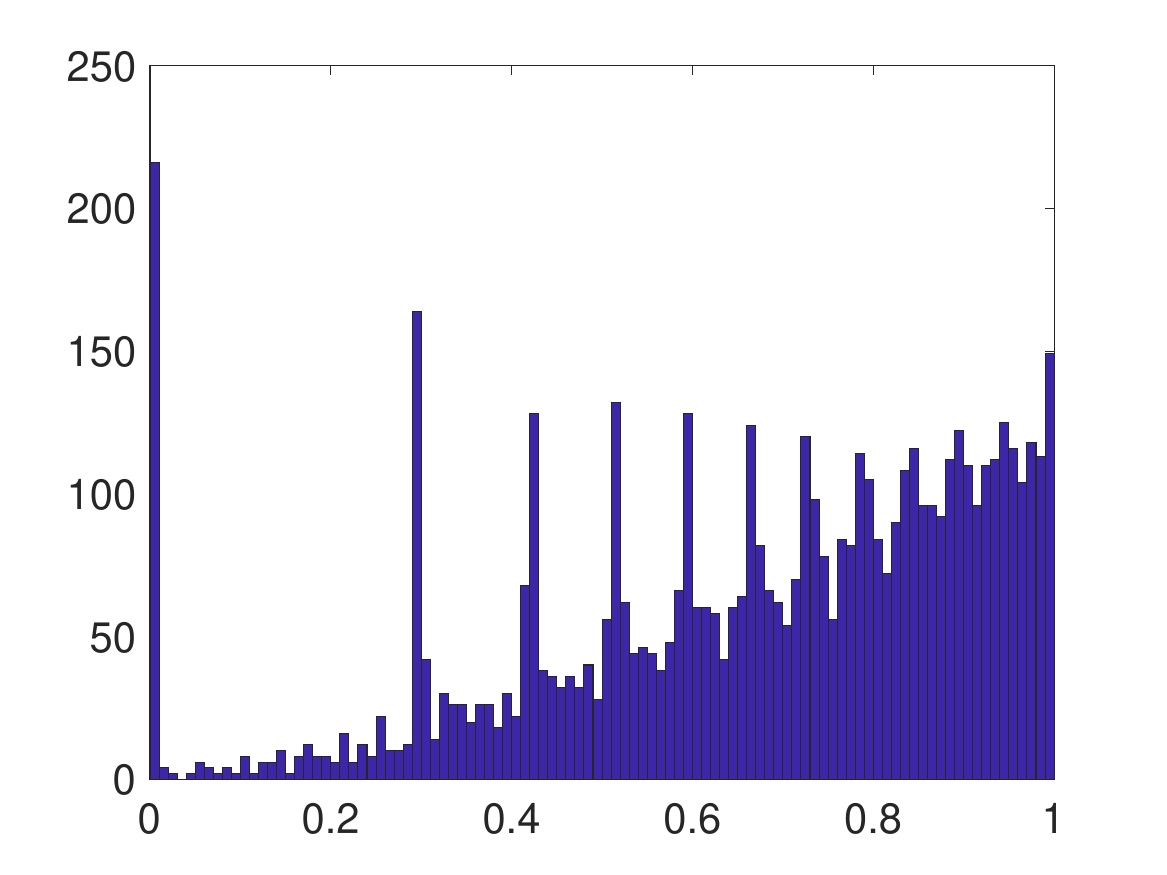}
    \caption{Histogram of Bloch eigenvalues (x-axis Energy/$\sqrt{3}$, y-axis Abundance) of chiral Hamiltonian for fixed $k=1/2$ and $h=0.03$, $h=0.01$, and $h=0.005$, respectively \emph{in the absence of magnetic fields}. The Landau levels are clearly visible with the one at zero energy most significantly pronounced.}
    \label{fig:my_label}
\end{figure}

 It has been proposed, in the absence of external magnetic or electric fields, that there is a close connection between electromagnetic Dirac operators and the chiral/anti-chiral models. Recall that in the chiral limit, we need to consider, in the semiclassical limit $h$, which is closely connected to the study of small twisting angles, the off-diagonal block operator 
 \[
 D = \begin{pmatrix} 2 hD_{\bar z}- A(z) & U(z) \\ U(-z) & 2hD_{\bar z}- A(z) \end{pmatrix}.
 \]
 By conjugating $D$ with $\mathscr U = \frac{1}{\sqrt{2}} \begin{pmatrix} i & 1 \\ -i & 1 \end{pmatrix}$,
 we yield a matrix, 
 $$\mathcal D(z,hD_{\bar z})= \mathscr U D\mathscr U^* = \begin{pmatrix}  D_h^A(z)+ \frac{i}{2}(U(z)-U(-z)) & \frac{i}{2}(U(z)+U(-z)) \\ -\frac{i}{2}(U(z)+U(-z)) & D_h^A(z) - \frac{i}{2}(U(z)-U(-z)) \end{pmatrix},$$
where $D_h^A(z) =2hD_{\bar z} - A(z)$. Here $U(z)-U(-z) =2i \sum\limits_{k=0}^2 \omega^k \sin(\Im(z\bar \omega^k)) = 3 z -\frac{3z \bar z^2}{8}+ \mathcal O(\vert z \vert^5) $ has a Taylor expansion containing only odd products of $z^{n_1}\bar z^{n_2}$, $n_1+n_2 \in 2\mathbb N_0+1$. Similarly, $U(z)+U(-z) =2 \sum_{k=0}^2 \omega^k \cos(\Im(z\bar \omega^k))= \frac{3}{4}\bar z^2 + \mathcal O(\vert z \vert^4) $ has a Taylor expansion containing only even products of $z^{n_1}\bar z^{n_2}$, $n_1+n_2 \in 2\mathbb N.$
By a simple change of variables in $\mathcal D$, $z \mapsto \sqrt{h}z$, we find in terms of $W_1(z)=\tfrac{i}{2}( U(\sqrt{h}z)-U(-\sqrt{h}z) )$ and $W_2(z)=\tfrac{i}{2}(U(\sqrt{h}z)+U(-\sqrt{h}z))$
\[ D(\sqrt{h}z,\sqrt{h}D_{\bar z}) = \begin{pmatrix} D_{\sqrt{h}}^A(\sqrt{h}z) + W_1(z) & W_2(z)\\  -W_2(z) & D_{\sqrt{h}}^A(\sqrt{h}z) - W_1(z)\end{pmatrix}.\]
\[ \begin{split}
    P &=\operatorname{diag}(D(\sqrt{h}z,\sqrt{h}D_{\bar z})^*D(\sqrt{h}z,\sqrt{h}D_{\bar z}),D(\sqrt{h}z,\sqrt{h}D_{\bar z})D(\sqrt{h}z,\sqrt{h}D_{\bar z})^*).\end{split}
\] This operator is a block-diagonal $4 \times 4$ matrix where the first $2\times 2$ block $P_2:=\begin{pmatrix} P_{2,11} & P_{2,12} \\ P_{2,12}^*&  P_{2,22} \end{pmatrix}$ is given by
\[ \begin{split}
   P_{2,11}&=(D_{\sqrt{h}}^A(\sqrt{h}z)+W_1)(D_{\sqrt{h}}^A(\sqrt{h}z)^*+\overline{W_1})+ \vert W_2 \vert^2, \\
   P_{2,12}&=-(D_{\sqrt{h}}^A(\sqrt{h}z)+W_1)\overline{W_2}+W_2(D_{\sqrt{h}}^A(\sqrt{h}z)^*-\overline{W_1}), \text{ and } \\
   P_{2,22}&=(D_{\sqrt{h}}^A(\sqrt{h}z)-W_1)(D_{\sqrt{h}}^A(\sqrt{h}z)^*-\overline{W_1})+ \vert W_2 \vert^2.
\end{split}\]

The leading order expansion in $h$ suggests that the low-lying spectrum of $P$ is determined by the leading-order expansion 
\[ P\varphi = \operatorname{diag}(P_1^{\text{lin}},P_2^{\text{lin}})\varphi + \mathcal O_{\varphi}(h^2),\]
valid for any fixed Schwartz-function $\varphi \in \mathscr S(\mathbb C)$, with
\[P_{1}^{\text{lin}} = h \operatorname{diag}(a_Ba_B^*, b_B b_B^*) \text{ and } P_{2}^{\text{lin}} 
= h\operatorname{diag}(a_B^*a_B,b_B^*b_B)\]
where
\[ a_B = 2D_z - \frac{(3-B_0)i}{2} \bar z \text{ and }b_B = 2D_z - \frac{(3+B_0)i}{2}.\]
In particular, the two components satisfy the commutation relations
\[ [a_B,a_B^*]=(6-2B_0)=2B_a \text{ and }[b_B^*,b_B] = (6+2B_0)=2B_b.\]
Here, we introduced $B_a:=(3-B_0)$ and $B_b:=(3+B_0).$ 
Analogously, for the original Hamiltonian, we already find the leading order in $\sqrt{h}$
\[ H^{\operatorname{lin}}(\sqrt{h}z,\sqrt{h}D_{\bar z}) =\sqrt{h}  \begin{pmatrix}0_{\mathbb C^{2 \times 2}} & \operatorname{diag}(a_B,b_B) \\\operatorname{diag}(a_B,b_B)^* & 0_{\mathbb C^{2 \times 2}} \end{pmatrix}.  \]

\begin{prop}
Let $B_a, B_b \neq 0$, then the spectrum of $H^{\operatorname{lin}}$ consists of infinitely degenerate eigenvalues
\[ \operatorname{Spec}(H^{\text{lin}}) = \{\sgn(n_1)\sqrt{2\vert n_1\vert B_a}, \sgn(n_2)\sqrt{2\vert n_2\vert B_b},\text{ with } n \in \mathbb Z^2 \}.\]
In particular, to each eigenvalue one can find a dense set of exponentially localized Schwartz functions.
\end{prop}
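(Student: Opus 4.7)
The plan is to decouple $H^{\operatorname{lin}}$ into two independent Dirac-type blocks and apply the textbook Landau-level analysis to each. Reordering the four spinor components yields the unitary equivalence
\[
H^{\operatorname{lin}}\;\cong\;\sqrt h\,T_{a_B}\,\oplus\,\sqrt h\,T_{b_B},\qquad T_X:=\begin{pmatrix}0 & X\\ X^{*} & 0\end{pmatrix},\quad X\in\{a_B,b_B\},
\]
since the $a_B$- and $b_B$-components of $H^{\operatorname{lin}}$ are already uncoupled. The spectrum of $H^{\operatorname{lin}}$ is then the union of the spectra of the two summands, so it suffices to treat $T_{a_B}$ in detail; the argument for $T_{b_B}$ is identical after replacing $B_a$ by $B_b$ and noting that $[b_B^{*},b_B]=+2B_b$ forces $b_B^{*}$, rather than $b_B$, to play the role of annihilation operator.

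Squaring the supersymmetric operator gives $T_{a_B}^{2}=\operatorname{diag}(a_B a_B^{*},\,a_B^{*}a_B)$, and the CCR $[a_B,a_B^{*}]=2B_a$ identifies (for $B_a>0$) $a_B$ as the annihilation operator of a standard harmonic-oscillator algebra on $L^{2}(\mathbb C)$. The ground-state equation $a_B u=0$ reduces to $\partial_z u=-\tfrac{B_a}{4}\bar z\,u$, whose $L^{2}$-solutions form the infinite-dimensional (conjugate) Bargmann--Fock space
\[
\ker_{L^{2}(\mathbb C)}a_B\;=\;\bigl\{\,e^{-B_a|z|^{2}/4}\,g(\bar z)\,:\,g\text{ entire}\,\bigr\}\,\cap\,L^{2}(\mathbb C).
\]
Iterating the raising operator $a_B^{*}/\sqrt{2B_a}$ on this kernel produces successive Landau levels, each of infinite multiplicity with energy $2nB_a$. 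Hence $\operatorname{Spec}(a_B^{*}a_B)=\{2nB_a:n\in\mathbb N_0\}$, and from $a_B a_B^{*}=a_B^{*}a_B+2B_a$ one obtains $\operatorname{Spec}(a_B a_B^{*})=\{2(n{+}1)B_a:n\in\mathbb N_0\}$. Taking square roots with both signs gives $\operatorname{Spec}(T_{a_B})=\{\sgn(n_1)\sqrt{2|n_1|B_a}:n_1\in\mathbb Z\}$, every eigenvalue of infinite multiplicity; for $B_a<0$ the roles of $a_B$ and $a_B^{*}$ reverse and the same formula holds with $|B_a|$.

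For the density of exponentially localized Schwartz eigenfunctions I would use coherent states. For each $z_0\in\mathbb C$, the ansatz
\[
\Phi_{z_0}(z)\;:=\;c_0\,\exp\!\Bigl(-\tfrac{B_a}{4}|z|^{2}+\tfrac{B_a}{2}z_0\,\bar z-\tfrac{B_a}{4}|z_0|^{2}\Bigr)
\]
lies in $\ker_{L^{2}}a_B$, is Schwartz-class, and decays like $e^{-B_a|z-z_0|^{2}/4}$, hence is exponentially localized at $z_0$. Overcompleteness of coherent states in the Bargmann--Fock space shows that $\{\Phi_{z_0}\}_{z_0\in\mathbb C}$ spans a dense subspace of $\ker a_B$; applying $(a_B^{*})^{n}/\sqrt{(2B_a)^{n}n!}$ transports this dense family into the $n$-th Landau level, and since $a_B^{*}$ is a first-order differential operator with coefficients linear in $z$, Gaussian decay and Schwartz regularity are preserved. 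Assembling both blocks yields the claim.

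The main technical care is the sign bookkeeping: $[a_B,a_B^{*}]=+2B_a$ in the $a$-block, but the opposite sign in $[b_B,b_B^{*}]=-2B_b$ swaps the roles of creation and annihilation in the $b$-block, so one needs the mirror ansatz there (holomorphic rather than anti-holomorphic entire factor). The gauge phase in the coherent-state formula must also be fixed so that $\Phi_{z_0}$ belongs simultaneously to $L^{2}(\mathbb C)$ and $\ker a_B$. Once these conventions are consistent, the remainder is the standard Landau-level / Bargmann--Fock calculus.
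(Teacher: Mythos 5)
Your argument is correct and rests on the same core computation as the paper's proof: the commutation relations $[a_B,a_B^*]=2B_a$, $[b_B^*,b_B]=2B_b$ and the resulting Landau-level ladder, with the creation/annihilation swap in the $b$-block handled exactly as needed. The packaging differs in two places. Where you permute the spinor components into the two supersymmetric blocks $T_{a_B}$, $T_{b_B}$ and square them, reading off $\pm\sqrt{2|n|B}$ from $T_X^2=\diag(XX^*,X^*X)$, the paper diagonalizes $H^{\operatorname{lin}}$ directly by the Foldy--Wouthuysen transform $U_{\operatorname{FW}}=\tfrac1{\sqrt2}\bigl(1+\sigma_3\sgn(P^{\operatorname{lin}})\bigr)$ into $\sqrt{h}\diag\bigl(\sqrt{a_Ba_B^*},\sqrt{b_Bb_B^*},-\sqrt{a_B^*a_B},-\sqrt{b_B^*b_B}\bigr)$; your route is a bit more elementary since it never invokes $\sgn(P^{\operatorname{lin}})$ (delicate on the infinite-dimensional kernel), while the paper's transform displays the $\pm\sqrt{\cdot}$ structure at the operator level. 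For the dense family of exponentially localized Schwartz eigenfunctions you use coherent states $\Phi_{z_0}$ pushed up by $(a_B^*)^n/\sqrt{(2B_a)^n n!}$, whereas the paper exhibits the angular-momentum ground states $e^{-\frac{B}{4}|z|^2}z^{m-1/2}$ and cites the explicit higher Landau levels; both yield Gaussian decay, and your version has the small advantage of localization around arbitrary $z_0$. Your remaining terseness matches the paper's: completeness of the Landau-level decomposition (equality, not just inclusion, of spectra) is asserted rather than proved, and the eigenvector pairs realizing both signs of the square root are left implicit, cf. \eqref{eq:indices}.
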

\begin{proof}
We start by studying the squared operator

Let $U_{\operatorname{FW}}:=\frac{1}{\sqrt 2}(1+\sigma_3 \sgn(P^{\operatorname{lin}}))$\footnote{The sign of a self-adjoint operator is uniquely defined by the polar decomposition.} be the Foldy-Wouthuysen transform then 
\[ U_{\operatorname{FW}} H^{\text{lin}} U_{\operatorname{FW}}^* = \sqrt{h} \operatorname{diag}(\sqrt{a_Ba_B^*}, \sqrt{b_B b_B^*} ,-\sqrt{a_B^*a_B},-\sqrt{b_B^*b_B}). \]

We assume in the sequel that $B_a>0$ and/or $B_b>0$ where at least one of the two conditions is satisfied for every $B_0$
\begin{equation}
    \label{eq:spectrum}
 \begin{split}
    \operatorname{Spec}(a_B^*a_B) \subset \{0,2B_a,4B_a,...\} \text{ and }\operatorname{Spec}(a_Ba_B^*) \subset \{2B_a,4B_a,...\},  \\
     \operatorname{Spec}(b_B^*b_B) \subset \{0,2B_b,4B_b,...\} \text{ and }\operatorname{Spec}(b_Bb_B^*) \subset \{2B_b,4B_b,...\}.
\end{split}
\end{equation}
The ground state solutions, in the case $B_i>0$, are infinitely degenerate and then given by 
\[ \psi_0^{m_j}(z) = e^{-\frac{B_i}{4} \vert z \vert^2} z^{m_j-1/2}, \text{ with } m_j \in \mathbb N_0 +1/2.\]
Here, $m_j$ is the eigenvalue of the associated angular momentum operator orthogonal to the plane.
The commutation relations can be used to generate higher Landau levels from the zero modes and show the converse inclusion in  \eqref{eq:spectrum}. The explicit form of the higher Landau levels can be found for exampled in \cite{RB17}.
\end{proof}
\begin{figure}
    \centering
    \includegraphics[width=5.2cm]{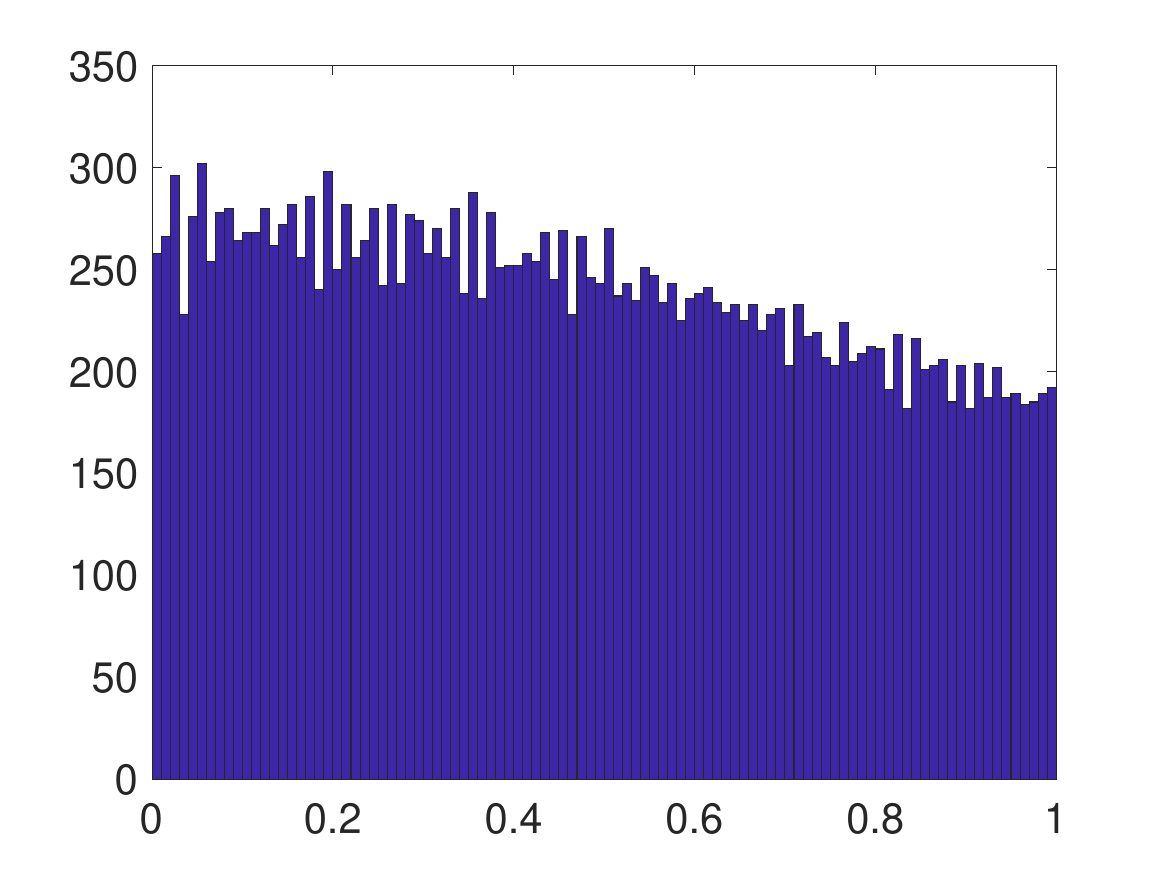}
    \includegraphics[width=5.2cm]{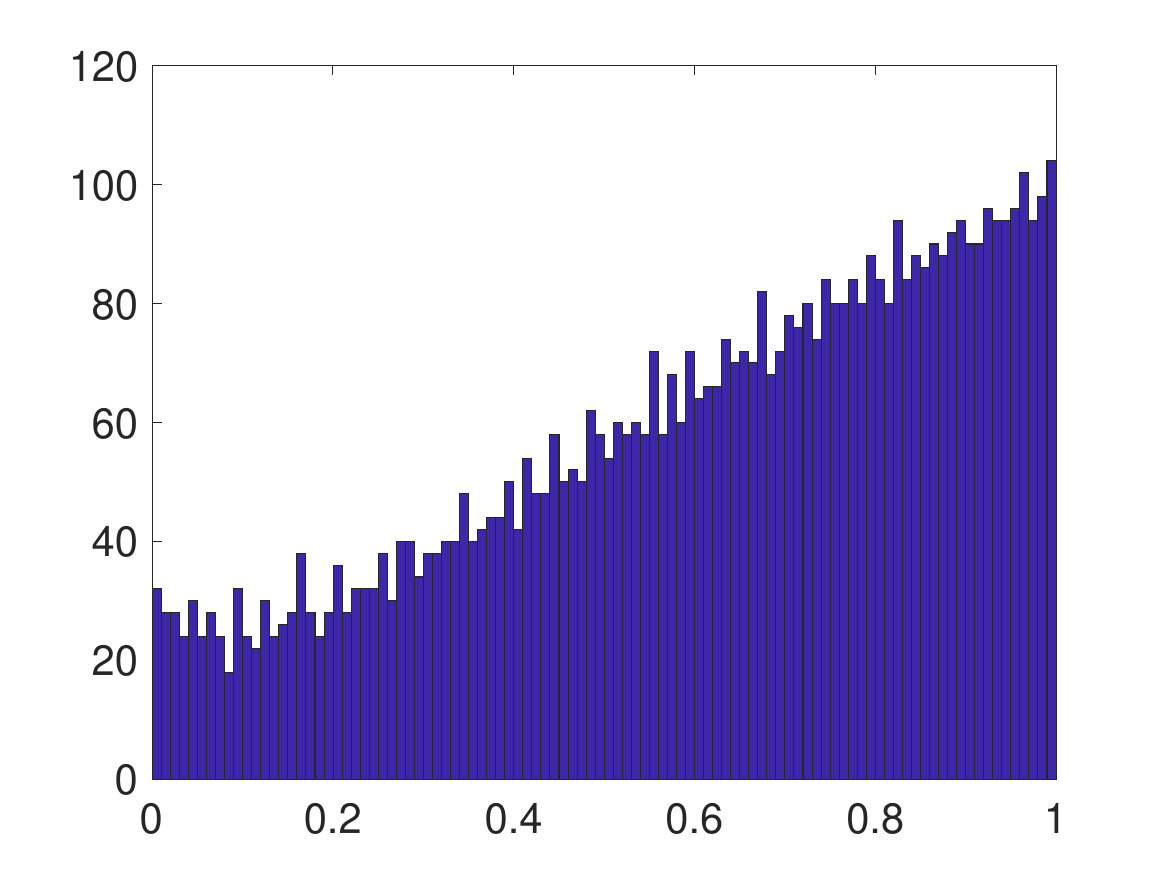}
        \includegraphics[width=5.2cm]{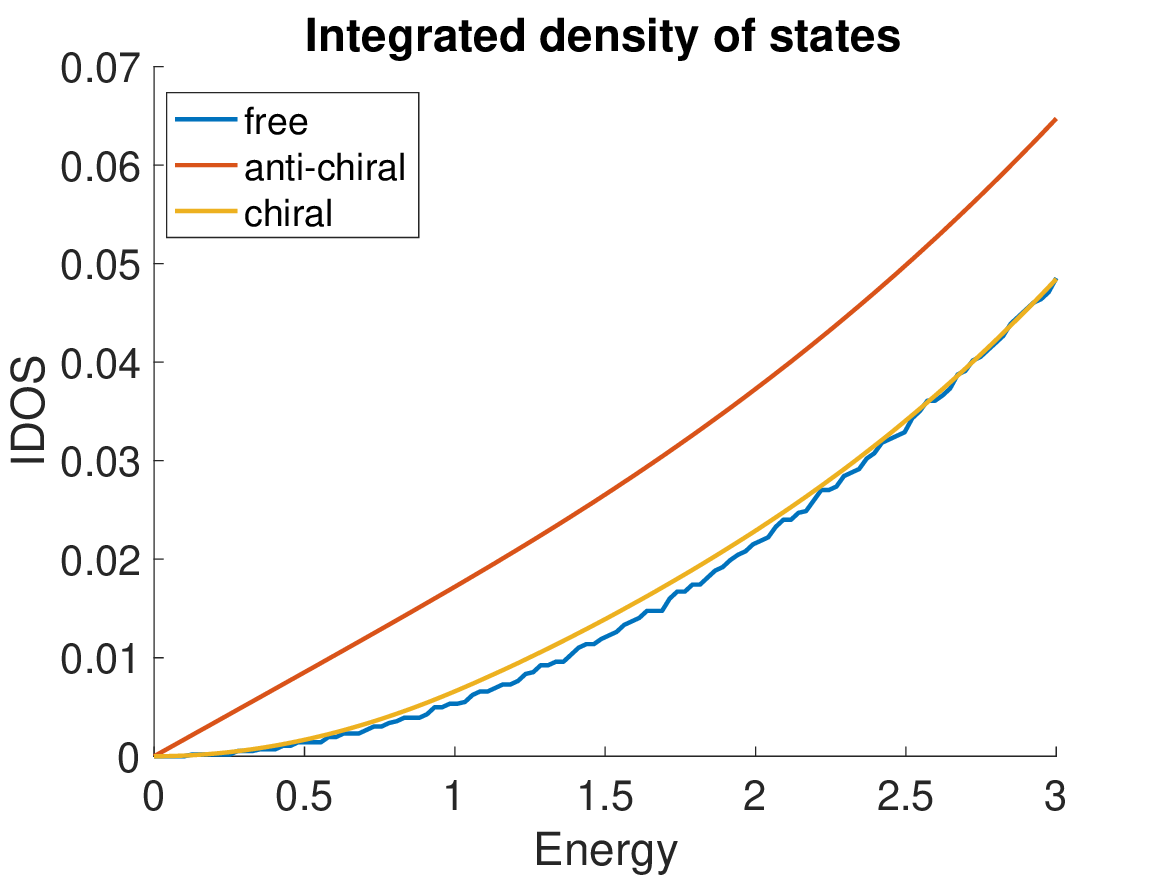}
    \caption{Histogram of anti-chiral model and full Hamiltonian with $h=0.01.$}
    \label{fig:histo}
\end{figure}

\begin{appendix}

\section{Special functions}
\label{sec:Haldane}
Let $\Lambda$ be a $2d$ lattice on $\CC$. The Weierstra{\ss} $\wp$ function is an elliptic function defined by 
\[\wp(z,\Lambda) := \frac{1}{z^2} + \sum_{\gamma\in\Lambda\setminus\{0\}}\left(\frac 1 {(z-\gamma)^2} - \frac 1 {\gamma^2}\right).\]
It is the negative derivative of the Weierstra{\ss} $\zeta$-function
\[\zeta(z,\Lambda) = \frac{1}{z} + \sum_{\gamma \in \Lambda \setminus \{0\}} \Big(\frac{1}{z-\gamma} + \frac{1}{\gamma}+\frac{z}{\gamma^2}\Big), \quad \text{i.e.}\quad \wp(z,\Lambda) = \zeta(z, \Lambda)'.\]
The Weierstra{\ss} $\zeta$-function is further the logarithm derivative of Weierstra{\ss} sigma function
\[
\tilde{\sigma}(z, \Lambda) = z \prod_{\gamma \in \Lambda\setminus\{0\}} \Big(1-\frac{z}{\gamma} \Big)e^{\frac{z}{\gamma}+\frac{z^2}{2\gamma^2}}, \quad \text{i.e.}\quad \zeta(z, \Lambda) = \frac{d}{dz}\log(\tilde{\sigma}(z,\Lambda)) = \frac{\tilde{\sigma}'(z,\Lambda)}{\tilde{\sigma}(z,\Lambda)}.
\]
We shall now specialize to $\Gamma_\lambda$ and discuss when does $\tilde\sigma$ coincide with $\sigma$. If $\lambda_1 = \lambda_2$, the symmetries of $\Gamma$ imply that 
\[ \tilde{\sigma}(\omega z) = \omega \tilde{\sigma}(z) \text{ and } \overline{\tilde{\sigma}(z)} = \tilde{\sigma}(\bar z).\]
Hence, using the previous relation, we find that
\[\zeta(\omega z) = \bar \omega \zeta(z) \text{ and }\overline{\zeta(z)} = \zeta(\bar z).\]
Defining then 
\[
 \eta_1 := \frac{\zeta(z+v_1)-\zeta(z)}{2} \text{ and }\eta_2 := \frac{\zeta(z+v_2)-\zeta(z)}{2}.
 \]
 We find by setting $z = -v_1/2$ and $z=-v_2/2$, respectively, since $\zeta$ is an odd function,
\[ \eta_1 =  \zeta(v_1/2) \text{ and } \eta_2 = \zeta(v_2/2) = \bar \omega \zeta(v_1/2) .\]
Then by Legendre's relation \cite[(14)]{H18}
\[ \eta_1 v_2-v_1 \eta_2 =\zeta(v_1/2)(v_2-\bar \omega v_1)= \pi i, \]
we find that 
$$\eta_1 =-\frac{3}{8\lambda_1}+i \frac{\sqrt{3}}{8\lambda_1} = \frac{\pi \bar{v_1}}{2A} \text{ and }\eta_2 =\frac{3}{8\lambda_1}+i \frac{\sqrt{3}}{8\lambda_1}=\frac{\pi \bar{v_2}}{2A}.$$
Here $A = \frac{\lambda_1^2 8 \pi^2}{3\sqrt{3}}$ is the area of the fundamental domain. 
Comparing with \cite[(17)]{H18}, we see that the modular-invariant and regular Weiersta{\ss} sigma function coincide. 

For lattice vectors $v_1,v_2$ the Weiersta{\ss} sigma function can then be efficiently implemented numerically using 
\begin{equation}\label{eq: sigma_z_another_form}
    \sigma(z) = v_1 e^{\eta_1 z^2/v_1} \frac{\theta_1(\pi z/v_1\vert \omega)}{\pi \theta_1'(0\vert \omega)} =v_1 e^{\eta_1 z^2/v_1} \frac{\theta_1(\pi z/v_1\vert \omega)}{\pi \theta_1'(0\vert \omega)},
\end{equation}
where 
\[\theta_1(z\vert \tau) = \sum_{n\in \ZZ} e^{\pi i n^2 \tau + 2\pi i n z} \text{ for }\Im(\tau)>0.\]

For the honeycomb lattice, this is
\[\sigma(z) \propto e^{\frac{3\sqrt{3}}{16\pi\lambda_1^2}z^2}\theta_1(z/(4i\omega\lambda_1) \vert \omega).\]

\end{appendix}



\begin{thebibliography}{0}

\bibitem[AvSi]{AvSi} J.~Avron, B.~Simon, \emph{Stability of gaps for periodic potentials under variation of a magnetic field}, J. Phys. A: Math. Gen. 18, (1985) 2199-2205.
\bibitem[BEWZ20a]{BEWZ20a}S.~Becker, M.~Embree, J.~Wittsten and M.~Zworski, \emph{Mathematics of magic angles in a model of twisted bilayer graphene.} \arXiv{2008.08489}. (2020).
\bibitem[BEWZ20b]{BEWZ20b} S.~Becker, M.~Embree, J.~Wittsten and M.~Zworski,
{\em Spectral characterization of magic angles in a model of graphene,} Physics Review B. (2020).
\bibitem[BHZ22]{BHZ2} S.~Becker, T.~Humbert and M.~Zworski,
{\em Fine structure of flat bands in a chiral model of magic angles}, \arXiv{2208.01628}, (2022).

\bibitem[BHZ22b]{BHZ} S.~Becker, T.~Humbert and M.~Zworski,
{\em Integrability in the chiral model of magic angles}, \arXiv{2208.01620}, (2022).
\bibitem[BHZ23]{BHZ3} S.~Becker, T.~Humbert, and M.~Zworski,
{\em Degenerate flat bands in twisted bilayer graphene,} \arXiv{2306.02909}, (2023).
\bibitem[BKZ22]{BKZ22} S.~Becker, J.~Kim, and X.~Zhu, \emph{Magnetic response properties of twisted bilayer graphene} (2022).
\bibitem[B87]{B87} J. Bellissard, in Proceedings of the Bad Schandau conference on localization, edited by Ziesche and Weller (Teubner- Verlag, Leipzig, 1987).
\bibitem[BES94]{BES94}J. Bellissard, A. van Elst, and H. Schulz‐ Baldes. \emph{The noncommutative geometry of the quantum Hall effect,}
J. Math. Phys. 35, 5373, (1994).
\bibitem[BS99]{BS99} M. Birman and T. Suslina, T. \emph{Absolute continuity of the two-dimensional periodic magnetic Hamiltonian with discontinuous vector-valued potential.} Algebra i Analiz, no. 4, 1-36, English transl., St.Petersburg Math. J. 10 (1999), no. 4.
\bibitem[BM11]{BM11}R. Bistritzer and A. MacDonald, \emph{Moir\'e bands in twisted double-layer graphene.} PNAS July 26, 2011 108 (30) 12233-12237, (2011).



\bibitem[CM23]{CM23} E. Canc\'es \& L. Meng, \emph{Semiclassical analysis of two-scale electronic Hamiltonians for twisted bilayer graphene}, \arXiv{2311.14011}, (2023).
\bibitem[C18]{C18} Y. Cao, et al. \emph{Unconventional superconductivity in magic-angle graphene superlattices}. Nature 556.7699 (2018): 43-50.
\bibitem[CGG22]{CGG} E.~Cancès, L.~Garrigue, D.~Gontier, \emph{A simple derivation of moiré-scale continuous models for twisted bilayer graphene.} \arXiv{2206.05685}. 

\bibitem[CFKS]{CFKS} H.L. Cycon, R.G. Froese, W. Kirsch, and B. Simon, \emph{Schr\"odinger Operators-With Application to Quantum Mechanics and Global Geometry}. Springer,  (1987).



\bibitem[DSZ04]{DSZ04}
N. Dencker, J. Sj\"ostrand and M. Zworski, {\em Pseudospectra of semiclassical differential operators}, Comm. Pure Appl. Math. {\bf 57}(2004), 384–-415.












\bibitem[GVR16]{GVR16} M. Greiter, V. Schnells, and R. Thomale,  \emph{Laughlin states and their quasiparticle excitations on the torus.} Physical Review B 93.24. 245156, (2016).






\bibitem[H18]{H18}F. D. M. Haldane, \emph{A modular-invariant modified Weiersta{\ss} sigma-function as a building block for lowest-Landau-level wavefunctions on the torus},
J. Math. Phys. 59, 071901, (2018).









\bibitem[Ka]{Ka} T. Kato, \emph{Perturbation Theory for Linear Operators,\/} Springer, (1995).









\bibitem[LPN07]{LPN07} J. Lopes dos Santos, N. Peres, and  A. H. Castro Neto, \emph{Graphene Bilayer with a Twist: Electronic Structure.} PRL 99, 256802, (2007).
\bibitem[LW21]{LW21}  A. B. Watson and  M. Luskin. \emph{Existence of the first magic angle for the chiral model of bilayer graphene.} J. Math. Phys. 62, 091502, (2021).

\bibitem[LVP23]{LVP23} P. Ledwith, A. Vishwanath, and D. Parker
Phys. Rev. B 108, 205144, 2023.



\bibitem[MK12]{MK12} P. Moon and M. Koshino, \emph{Energy spectrum and Quantum Hall Effect in Twisted Bilayer Graphene}, \emph{Reduction of the twisted bilayer graphene chiral Hamiltonian into a matrix operator and physical origin of flat bands at magic angles}, Phys. Rev. B 85, 195458 (2012).

\bibitem[N21]{N21} G. G. Naumis, L. A. Navarro-Labastida, E. Aguilar-M\'endez, and A. Espinosa-Champo, Phys.Rev. B 103, 245418 (2021).

\bibitem[NL22]{NL22}L. A. Navarro-Labastida, A. Espinosa-Champo, E. Aguilar-Mendez, G. G. Naumis\emph{Why the first magic-angle is different from others in twisted graphene bilayers: interlayer currents, kinetic and confinement energy and wavefunction localization},Phys. Rev. B 105, 115434, (2022).

\bibitem[N86]{N86} G. Nenciu, \emph{Stability of energy gaps under variations of the magnetic field.} Lett Math Phys 11, 127–132 (1986). 









\bibitem[RB17]{RB17}
A. Rajabi and J. Berakdar, \emph{Relativistic electron vortex beams in a constant magnetic field},Phys. Rev. A 95, 063812, (2017).
\bibitem[RK93]{RK93} Z. Y. Rong and P. Kuiper. "Electronic effects in scanning tunneling microscopy: Moiré pattern on a graphite surface." Physical Review B 48.23,  (1993).





\bibitem[SS21]{SS21} Y. Sheffer and A. Stern \emph{Chiral Magic-Angle Twisted Bilayer Graphene in a Magnetic Field: Landau Level Correspondence, Exact Wavefunctions and Fractional Chern Insulators}, Phys. Rev. B 104, L121405, (2021).



\bibitem[Se20]{Se20}M. Serlin, \emph{Intrinsic quantized anomalous Hall effect in a moiré heterostructure}, Science, vol. 367, no. 6480, pp. 900–903, (2020).
\bibitem[Sj89]{Sj89}
J.~Sj\"ostrand, {\em Microlocal analysis for periodic magnetic Schr\"odinger equation and related questions,} 






      




\bibitem[T92]{T92} B.  Thaller,\emph{The Dirac equation}. Springer, (1992).

\bibitem[TaZw23]{notes} Z.~Tao and M.~Zworski, 
{\em PDE methods in condensed matter physics,} Lecture Notes, (2023), \\ \url{https://math.berkeley.edu/~zworski/Notes_279.pdf}.


\bibitem[TKV19]{TKV19}  G. Tarnopolsky,  A.J. Kruchkov,  and A. Vishwanath, \emph{Origin of Magic Angles in Twisted Bilayer Graphene}, Phys. Rev. Lett. 122, 106405, (2019).



\bibitem[W47]{W47} P. R. Wallace, \emph{The Band Theory of Graphite}, Phys. Rev. 71, 622, (1947).

\bibitem[Wa*22]{Wa22}A. B. Watson, T. Kong, A. H. MacDonald, and M. Luskin, \emph{Bistritzer-MacDonald dynamics in twisted bilayer graphene}, \arXiv{2207.13767}.


\bibitem[X21]{X21} Xie, Y., Pierce, A.T., Park, J.M. et al., \emph{Fractional Chern insulators in magic-angle twisted bilayer graphene}, Nature 600, 439-443, (2021). 

\bibitem[Zw12]{Zw12} M.~Zworski, \emph{Semiclassical analysis,\/} Graduate Studies in Mathematics \textbf{138} AMS, (2012).

\end{thebibliography}
\end{document}